\documentclass[12pt]{article}
\usepackage{thm-restate}
\usepackage[margin=1.1in]{geometry}
\usepackage{graphicx}
\usepackage{array}
\usepackage{url}
\usepackage{times}
\usepackage{amssymb}
\usepackage{amsmath}
\usepackage{latexsym}
\usepackage{graphics}
\usepackage{url}
\usepackage{verbatim}
\usepackage{color}
\usepackage{lineno}
\usepackage{multicol}
\usepackage{caption}
\usepackage{subcaption}
\usepackage{lineno}
\usepackage{soul}
\usepackage{comment}
\usepackage{datenumber}
\usepackage{amsthm}
\usepackage{paralist}
\usepackage{enumerate}
\usepackage{enumitem}
\usepackage{thm-restate}
\usepackage[noend]{algpseudocode}
\usepackage{algorithm}

\usepackage{listings}
\usepackage{xcolor}

\definecolor{codegreen}{rgb}{0,0.6,0}
\definecolor{codegray}{rgb}{0.5,0.5,0.5}
\definecolor{codepurple}{rgb}{0.58,0,0.82}
\definecolor{backcolour}{rgb}{0.95,0.95,0.92}

\lstdefinestyle{mystyle}{
  backgroundcolor=\color{backcolour},   commentstyle=\color{codegreen},
  keywordstyle=\color{magenta},
  numberstyle=\tiny\color{codegray},
  stringstyle=\color{codepurple},
  basicstyle=\ttfamily\tiny,
  breakatwhitespace=false,         
  breaklines=true,                 
  captionpos=b,                    
  keepspaces=true,                 
  numbers=left,                    
  numbersep=5pt,                  
  showspaces=false,                
  showstringspaces=false,
  showtabs=false,                  
  tabsize=2
}

\usepackage{fancybox}
\usepackage{tikz}
\usepackage[T1]{fontenc}
\usepackage{fourier}

\usepackage[english]{babel}															
\usepackage[protrusion=true,expansion=true]{microtype}	
\usepackage{amsmath,amsfonts,amsthm} 

\usepackage{tikz}
\usetikzlibrary{tikzmark,decorations.pathreplacing,calc}
\usepackage{mathtools}

\DeclarePairedDelimiter\floor{\lfloor}{\rfloor}

\usepackage{fancyhdr}
\numberwithin{equation}{section}		
\numberwithin{figure}{section}			
\numberwithin{table}{section}				

\newtheorem{theorem}{Theorem}[section]          
\newtheorem{lemma}[theorem]{Lemma}             
\newtheorem{corollary}[theorem]{Corollary}         

\title{
\Large
Generating pivot Gray codes for spanning arborescences of complete digraphs and split digraphs in constant amortized time \\
}
\author{
Bowie~Liu  
\ \ \ \ 
Dennis~Wong 
}
\date{}

\begin{document}
\maketitle

\begin{abstract}
We present the first known efficient algorithm to generate pivot Gray codes for spanning arborescences of complete digraphs, listing all spanning arborescences such that consecutive arborescences differ by reassigning the outgoing edge of exactly one vertex.
Our recursive algorithm generates each spanning arborescence in constant amortized time using $O(n^2)$ space. 
We then extend our algorithm to generate the first known pivot Gray codes for spanning arborescences of split digraphs, achieving constant amortized time per arborescence using $  O(n^2)  $ space. 
Finally, we further generalize our approach to proper interval digraphs. 
When the arborescences are rooted at the leftmost interval, our algorithm generates the pivot Gray code in constant amortized time per arborescence using $  {O}(n^2)  $ space.
\end{abstract}

\section{Introduction}\label{sec:intro}
A \emph{complete digraph} (or \emph{bidirected complete graph}) on $  n  $ vertices, denoted by $  \overrightarrow{K}_n  $, is a directed graph (or \emph{digraph}) in which every pair of distinct vertices is connected by exactly two edges: one in each direction. 
In other words, for every pair of distinct vertices $  u  $ and $  v  $, both edges $u \rightarrow v  $ and $  v\rightarrow u  $ are present.
The \emph{underlying undirected graph} of a directed graph is the undirected graph on the same vertex set that contains an edge between two vertices if and only if there is at least one directed edge between them in either direction.
As such, the underlying undirected graph of $  \overrightarrow{K}_n  $ is the undirected \emph{complete graph} $  K_n  $ (also referred to as a \emph{clique}), in which every pair of distinct vertices is connected by exactly one undirected edge.
For example, the complete digraphs on one to five vertices, that is $  \overrightarrow{K}_1  $ to $  \overrightarrow{K}_5  $ are as follows:
\begin{figure}[H]
    \centering
    \includegraphics[width=0.8\linewidth]{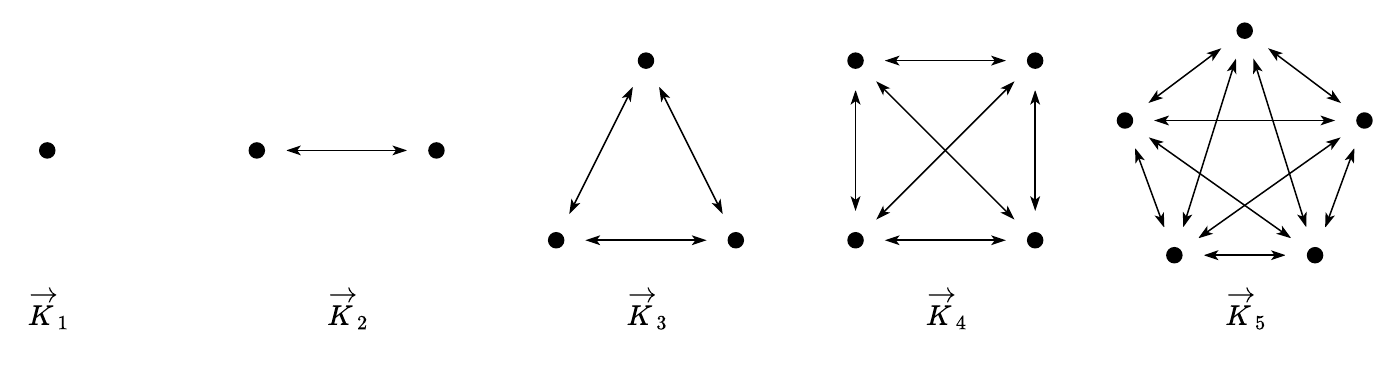}
        \caption{Complete digraphs from $\overrightarrow{K}_1$ to $\overrightarrow{K}_5$.}
    \label{fig:k1k5}
\end{figure}

A \emph{spanning arborescence} of a directed graph $ \overrightarrow{G}  $ is a subgraph that includes all vertices of $  \overrightarrow{G}  $ and forms a directed tree rooted at some vertex $  r  $, with all edges directed towards the root.
In other words, it is a subgraph of $  \overrightarrow{G}  $ in which there is a unique directed path from every vertex to the root, and it contains no directed cycles.
A \emph{spanning tree} of an undirected graph $G$ is a subgraph that includes all vertices of $G$ and forms a tree. 
Throughout this paper, all directed graphs considered in our framework are \emph{symmetric digraphs}. 
A symmetric digraph $  \overrightarrow{G}  $ is obtained from an undirected graph $  G  $ by replacing each undirected edge with a pair of oppositely directed arcs. 
In such symmetric digraphs, every undirected spanning tree admits exactly one valid orientation as an in-arborescence rooted at a fixed vertex $  r  $. 
As such, for the symmetric digraphs studied in this paper, a spanning arborescence of $  \overrightarrow{G}  $ corresponds exactly to a spanning tree of the underlying undirected graph $  G  $.
Figure~\ref{fig:span16} illustrates the 16 spanning arborescences of $\overrightarrow{K}_4$ that rooted at vertex 2. 
\begin{figure}[t]
    \centering
    \includegraphics[width=1\linewidth]{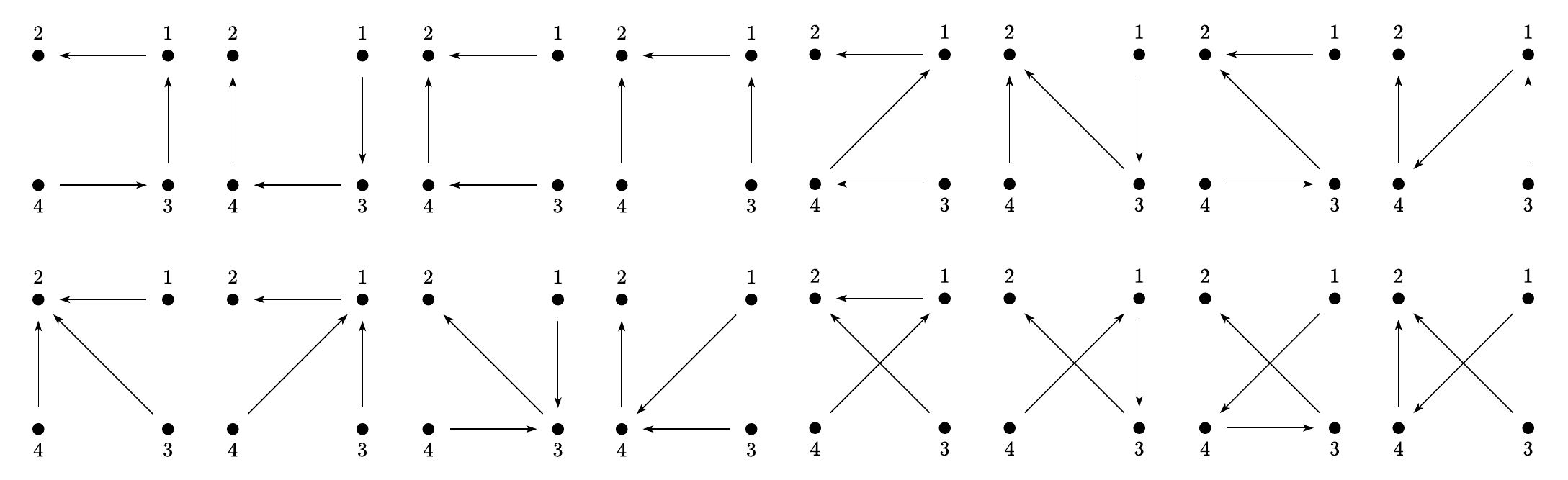}
    \caption{The 16 spanning arborescences of $\overrightarrow{K}_4$ rooted at vertex 2.}
    \label{fig:span16}
\end{figure}

The number of spanning arborescences rooted at a fixed vertex in a complete digraph $\overrightarrow{K}_n$ with $ n $ vertices is given by Cayley’s formula $n^{n-2}$~\cite{Cayley1889}. 
This is the same as the number of spanning trees in the undirected complete graph $  K_n  $, which we can consider the spanning trees of $  K_n  $ as arborescences rooted at a fixed vertex (say, vertex 1).
This enumeration sequence corresponds to sequence A000272 in the Online Encyclopedia of Integer Sequences~\cite{OEIS2025A000272}. 
For example, the number of spanning arborescences for complete digraphs $\overrightarrow{K}_1$ to $\overrightarrow{K}_8$ rooted at a fixed vertex is 1, 1, 3, 16, 125, 1296, 16807, and 262144, respectively.
Cayley provided an informal proof of the formula. 
Later, Pr\"{u}fer~\cite{Prufer1918} and Kirchhoff~\cite{Kirchhoff1847} offered simpler proofs by establishing a one-to-one correspondence between spanning trees (spanning arborescences rooted at a fixed vertex) and Pr\"{u}fer sequences, and by applying the matrix-tree theorem, respectively.
More recently, Liu et al.~\cite{Liu2026} provided a simple one-page proof of Cayley’s formula based on their recursive algorithm for generating spanning trees. 
For more clever proofs of Cayley’s formula, see~\cite{Aigner2004}.

The study of spanning arborescences in directed graphs and their relatives has a long history and has traditionally attracted considerable interest from mathematicians and graph theorists~\cite{Aigner2004, Berger1967, Cayley1889, Chakraborty2019, Gabow1978, Kirchhoff1847,  knuth2011, Mayeda1965, Minty1965, mutze2023combinatorial, Naddef1981, Prufer1918, Rus96, Sav97, Shank1968, Shioura1997, Winter1985}.  
The task of efficiently enumerating all spanning arborescences of digraphs is fundamental in computer science and finds applications across diverse scientific fields including epidemiology, networking, chemistry, biology, astronomy, archaeology, image processing, and social media. 
Notably, in stochastic network analysis, the importance of arborescences is highlighted by the Markov Chain tree theorem originally introduced by Hill in 1966 to model thermodynamic states in biophysics~\cite{hill1966}. 
This theorem expresses the exact stationary distribution of a finite-state directed Markov chain in terms of its spanning arborescences.
For more applications on spanning arborescences of digraphs, see~\cite{bang2008digraphs}.  

One of the most important aspects of combinatorial generation is to list the instances of a combinatorial object so that consecutive instances differ by a specified \emph{closeness condition} involving a constant amount of change. 
Lists of this type are called \emph{Gray codes}. 
This terminology is due to the eponymous \emph{binary reflected Gray code} (BRGC) by Frank Gray~\cite{Gra53}, which orders the $2^n$ binary strings of length $n$ so that consecutive strings differ by one bit. 
For example, when $n = 4$ the order is
$$0000, 1000, 1100, 0100, 0110, 1110, 1010, 0010, 0011, 1011, 1111, 0111, 0101, 1101, 1001, 0001.$$ 
Variations that reverse the entire order or the individual strings are also commonly used
in practice and in the literature. 
For more information about binary reflected Gray code and its variations, see~\cite{10.1007/978-3-030-85088-3_15,flipswap,Vajnovszki2007GrayCO}. 
The BRGC listing is a \emph{1-Gray code} in which consecutive strings differ by one symbol change. 
In this paper, we utilize a generalized version of the binary reflected Gray code to construct Gray codes for spanning arborescences of complete digraphs, split digraphs, and proper interval digraphs.


An interesting problem related to graphs is thus to discover a Gray code for spanning arborescences of digraphs.
To address this, we first recall the corresponding notion in the undirected setting.
A \emph{pivot Gray code} for the spanning trees of an undirected graph is a listing of all spanning trees such that consecutive trees differ by a single pivot operation: the removal of one edge and the addition of another edge that shares a common endpoint.
For complete graphs, the problem of finding a Gray code for spanning trees of complete graphs was posed as an open problem by Knuth in \emph{The Art of Computer Programming, Volume 4} \cite{knuth2011} (Ex. 101 in Sec. 7.2.1.6), with a difficulty rating of 46 out of 50:
\begin{itemize}
\item Is there a simple revolving-door way to list all $n^{n-2}$ spanning trees of the complete graph $K_n$? (The order produced by Algorithm S is quite complicated.) 
\end{itemize}
The \emph{revolving-door} property, as described by Knuth, refers to a Gray code in which consecutive spanning trees differ by the substitution of one edge for another \cite{knuth2011}. 
Smith~\cite{smith1997}, Merino, M\"{u}tze, and Williams~\cite{Merino2022}, and Merino and M\"{u}tze~\cite{doi:10.1137/23M1612019} provided an edge-exchange Gray code for spanning trees of complete graphs. 
Notably, Smith’s algorithm corresponds to the Algorithm S referenced in Knuth’s open problem.
Recently, Liu et al.~\cite{Liu2026} resolved a stronger version of this long-standing open problem of Knuth by constructing the first pivot Gray code for the spanning trees of the complete graph $  K_n  $, which can be generated in constant amortized time using $  O(n^2)  $ space.
For example, Liu et al.'s algorithm produces the pivot Gray code for spanning trees of $K_5$ as shown in Figure~\ref{fig:smith}. 
\begin{figure}[t]
    \centering
    \includegraphics[width=0.9\linewidth, trim={0 0 0 0}, clip]{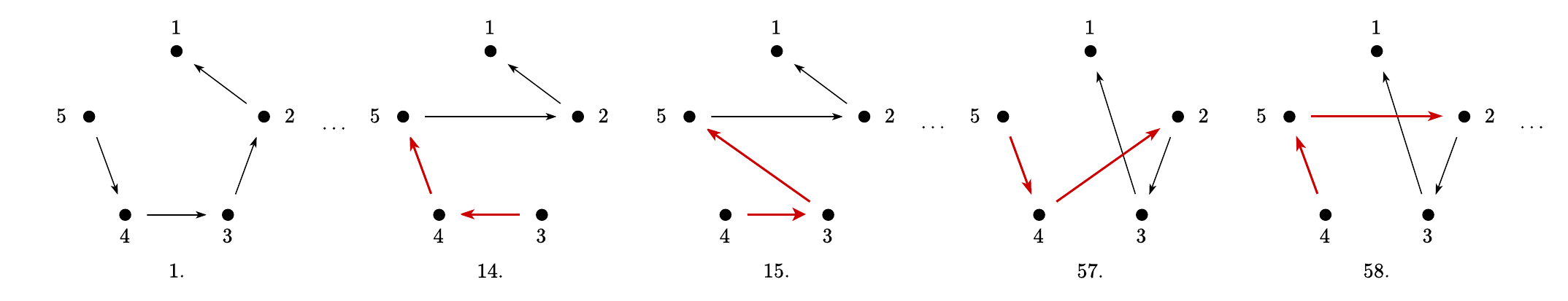}
    \caption{Pivot Gray code for spanning trees of $K_5$ by Liu et al.'s algorithm. Only part of the listing is shown. }
    \label{fig:smith}
\end{figure}

It is important to note, however, that a pivot Gray code in the undirected setting does not directly translate to the directed setting. 
While a single undirected pivot replaces one edge with another incident to the same vertex, the corresponding operation in a digraph may require reversing the orientation of many edges to maintain a valid arborescence rooted at a fixed vertex.
For example, the pivot Gray code generated by Liu et al.’s algorithm requires multiple directed edge changes when applied to digraphs, as the transitions from tree 14 to tree 15, and the transitions from tree 57 to tree 58 in Figure~\ref{fig:smith} for $\overrightarrow{K}_5$ require reversing the orientation of many edges when the arborescences rooted at vertex 1.

We thus require a stricter definition of the pivot operation for spanning arborescences in digraphs. 
Let $  T  $ be a spanning arborescence of a digraph $  \overrightarrow{G}  $ rooted at $  r  $. 
A \emph{directed pivot operation} transforms $  T  $ into a new valid arborescence $  T'  $ by reassigning the unique outgoing edge of exactly one vertex. 
Formally, $  T' = (T \setminus \{u \to p\}) \cup \{u \to q\}  $, where $  u \neq r  $, $  q \neq u  $, $  u \to q \in E(\overrightarrow{G})  $, and $  T'  $ remains acyclic. 
In other words, both the removed and added edges originate from the same vertex $  u  $, which corresponds to updating the parent pointer of a single non-root vertex.
This stricter definition of a pivot change for arborescences in digraphs leads to another open problem regarding spanning arborescence of digraph by Knuth in \emph{The Art of Computer Programming, Volume 4} \cite{knuth2011} (Ex. 102 in Sec. 7.2.1.6), with a difficulty rating of 46 out of 50:
\begin{itemize}
    \item Does every directed graph $  G  $ admit a pivot Gray code on its spanning arborescences rooted at a given vertex?
\end{itemize}
In fact, generating pivot Gray codes for spanning arborescences is a strictly harder problem. 
As established earlier, since we focus on symmetric digraphs, any pivot Gray code for spanning arborescences immediately yields a valid pivot Gray code for the underlying undirected spanning trees (by simply ignoring the edge directions).


While pivot Gray codes for undirected spanning trees are known for several important graph classes (e.g., fan graphs, wheel graphs, outerplanar graphs, and recently complete graphs), the situation for directed arborescences is far less developed.
In a very recent development, Bonamy et al.~\cite{journals/corr/BonamyHLR26} proved the existence of a pivot Gray code for spanning arborescences of tournaments (and hence also of complete digraphs). 
Behroozia and M\"{u}tze~\cite{Behrooznia2025} also proved the existence of a pivot Gray code for spanning arborescences of outerplanar graphs. 
However, their proofs are purely existential and do not yield an efficient generation algorithm.
It is indeed unknown if such a pivot Gray code for spanning arborescences even exists for many other graph classes. 
In this paper, we address this open problem by providing the first known efficient algorithm that generates a pivot Gray code for spanning arborescences of the complete digraph $  \overrightarrow{K}_n  $ in constant amortized time using $  O(n^2)  $ space. 
We then extend our approach to obtain the first known efficient algorithms that generate pivot Gray codes for spanning arborescences of split digraphs and proper interval digraphs rooted at the leftmost interval in constant amortized time per arborescence using $  O(n^2)  $ space. 

Throughout this paper, we are concerned with the generation of spanning arborescences. Since arborescences are defined on directed graphs, all graphs considered in our computational framework are digraphs.
The remainder of the paper is organized as follows. In Section~\ref{sec:pivot}, we present a recursive algorithm for generating a pivot Gray code for the spanning arborescences of complete digraphs. In Section~\ref{sec:split}, we introduce a host–skeleton–routing framework that enables the generation of pivot Gray codes for spanning arborescences of split digraphs. Finally, in Section~\ref{sec:properinterval}, we further extend this framework to generate pivot Gray codes for spanning arborescences of proper interval digraphs rooted at the leftmost interval.

\section{Spanning arborescences of complete digraphs $\overrightarrow{K}_n$}
\label{sec:pivot}

In this section, we present the first known efficient algorithm for generating a pivot Gray code of all spanning arborescences of the complete digraph $\overrightarrow{K}_n$. 
Our algorithm builds upon the level-by-level generation framework originally developed for spanning trees of undirected complete graphs in~\cite{Liu2026}. 
We first review the key ideas of that algorithm. 
We then describe the necessary modifications required to allow us to generate spanning arborescences of $\overrightarrow{K}_n$. 

\subsection{Pivot Gray code construction}\label{sec2:pivotGenS}
\begin{figure}[t]
    \centering
    \includegraphics[width=0.95\linewidth, trim={0 0 0 0}, clip]{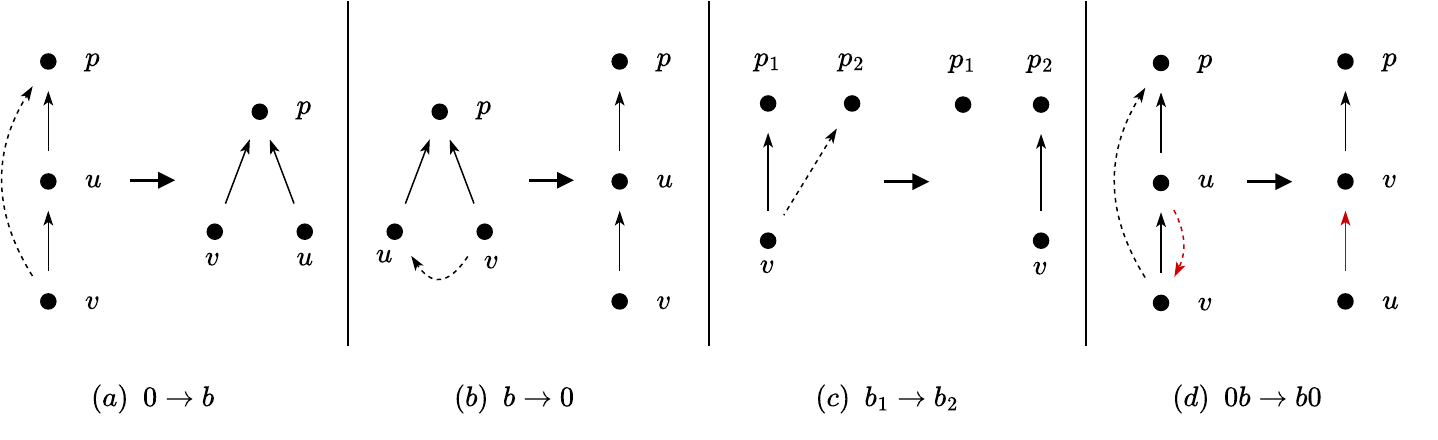}
    \caption{The four cases for a pivot change
on an edge involving a vertex at level $\ell - 1$, while preserving the subtree $S_{\ell-1}$ spanning levels 0 to $\ell - 1$.}
    \label{fig:kn_Gray}
\end{figure}

In the undirected framework in~\cite{Liu2026}, spanning trees are constructed level-by-level and represented by an $h$-tuple $T = (t_1, t_2, \ldots, t_h)$, where $h$ is the height of the tree rooted at vertex 1. 
For each level $  \ell  $, the string $  t_\ell = a_1 a_2 \cdots a_{n-q_{\ell-1}}  $ is a $  k  $-ary string (with $  k = p  $, the number of vertices at level $  \ell-1  $) that encodes the parent connections of the $  n-q_{\ell-1}  $ remaining vertices. 
Here, $  a_i > 0  $ indicates a connection to the $  a_i  $-th vertex at level $  \ell-1  $, while $  a_i = 0  $ indicates that the vertex is not yet connected at this level. 
To ensure connectivity, $  t_\ell  \neq 0^{n-q_{\ell-1}}$.

A pivot Gray code is obtained by recursively traversing all valid tuples $T$ such that consecutive trees differ by a single pivot change. 
This is achieved by iterating the strings $t_\ell$ according to a constrained $k$-ary Gray code sequence for each level $\ell$. 
The algorithm maintains two sets: $  P = \{ v_i \in V \mid a_i > 0 \}  $, consisting of the vertices at level $  \ell-1  $, and $  C = \{ v_i \in V \mid a_i = 0 \}  $, consisting of the vertices not yet placed in levels $0$ through $  \ell-1  $. 
The parent connections between the candidate vertices in $  C  $ and the active parents in $  P  $ are encoded by a $  k  $-ary string $  t_\ell = b_1 b_2 \cdots b_{|C|}  $, where the base $  k = |P| + 1  $ accounts for the $  |P|  $ possible parents in the current level plus the option for a vertex to remain unassigned. 
If $  |C| = 0  $, then all $  n  $ vertices have been placed, so the algorithm outputs the completed spanning tree and returns. 
Otherwise, it initializes $  t_\ell  $ according to the current parent assignments between $  C  $ and $  P  $, and recursively invokes the generation procedure for level $  \ell+1  $. The algorithm then iteratively updates $  t_\ell  $ using a constrained   $k  $-ary strings Gray code generator (excluding the all-zero string to maintain connectivity). 
After each update, the algorithm recurses to level $  \ell+1  $ to exhaustively generate all valid extensions for the deeper levels.
In the undirected setting, the transitions between consecutive $k$-ary strings correspond to four types of pivot changes (cases (a)--(d)) (see Figure~\ref{fig:kn_Gray}). 
The first three cases (cases (a)--(c)) correspond to reassigning a single edge. 
The fourth case (case (d)) occurs when the $k$-ary Gray code generator trying to avoid the all-zero string. 
This case involves a two-symbol swap (a zero and a non-zero symbol), in which a vertex drops a child and connects to one of its descendants.

A natural approach to generalize this framework to complete digraphs is thus to replace the undirected root (vertex 1) with the designated root $r$ of the spanning arborescence, and apply the same level-by-level $  k  $-ary encoding to encode an arborescence. 
For example, Figure~\ref{fig:k5_spanning} shows the spanning arborescence obtained by rooting the spanning tree $  T = (00100, 1100, 22)  $ at vertex $4$, which can be represented by the tuple $T = (11100, 33)$ with the root $  r=4  $. 
While the tuple representation generalizes directly, case (d) transitions might become problematic in the directed setting.
When a vertex $u$ disconnects from $v$ and connects to a descendant $w$ of $v$, the ancestor--descendant relationship between $v$ and $w$ is reversed. 
Maintaining a valid arborescence then requires reversing the directions of all directed edges along the path from $  v  $ to $  w  $, resulting in multiple pivot changes.
For instance, applying case (d) to the arborescences in Figure~\ref{fig:k5_spanning} might attempt to detach vertex 3 from vertex 4 and attach vertex 5 to vertex 4. 
In the directed setting, this would require reversing the edge $5 \to 3$ to preserve the arborescence property, leading to multiple pivot changes, even though the corresponding move is a valid single pivot change in the undirected setting.
Our goal is therefore to eliminate case (d) transitions entirely.

\begin{figure}[t]
\centering
\includegraphics[width=0.9\linewidth]{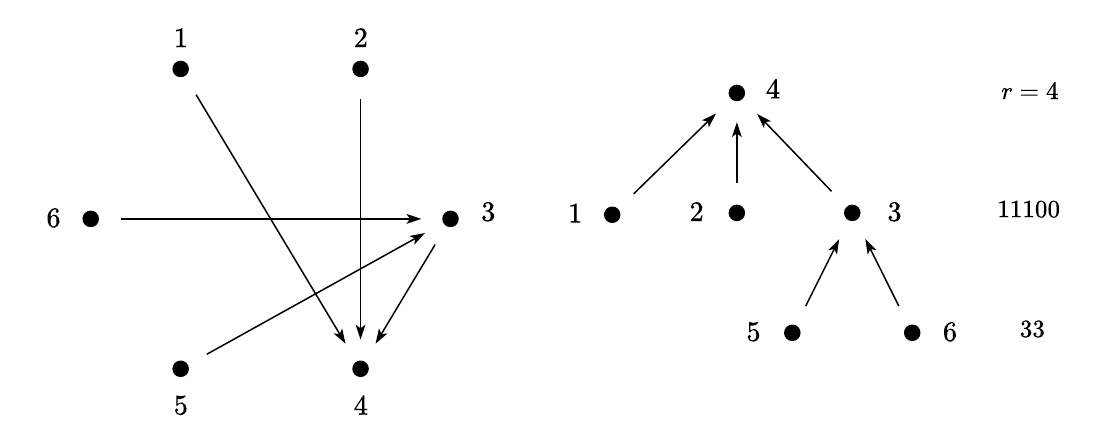}
\caption{The spanning arborescence and its rooted tree of $K_6$ that corresponds to the tuple $T = (11100, 33)$ with the root $r = 4$. }
\label{fig:k5_spanning}
\end{figure}

To avoid case (d), we remove all multi-symbol swaps from the underlying $  k  $-ary string Gray code generator. 
This is equivalent to constructing a strict 1-Gray code over the set of $  k  $-ary strings of length $  m  $ that excludes the all-zero string $0^m$. 
However, such a 1-Gray code does not always exist from every starting string. 
For instance, when $  k=1  $ (binary) and $  m=2  $, the valid strings are $  \{01, 10, 11\}  $. 
Starting at $11$ makes it impossible to visit all strings using only single-symbol changes, since moving between $01$ and $10$ requires a two-bit change. 
In contrast, starting from a string with exactly one non-zero entry (e.g., $01$ or $10$) readily admits a valid sequence such as $01 \to 11 \to 10$.

\begin{algorithm}[t]
    \caption{Generating pivot Gray codes for spanning arborescences of complete digraphs}
    \label{alg:complete_arborescence}
    \begin{algorithmic}[1]
        \Procedure{\sc GenArb}{$V, t_{\ell-1} = a_1 a_2 \cdots a_{|V|}$}
        \State{$P \gets v_i \in V \mathrm{~where~} a_i > 0$}
        \State{$C \gets v_i \in V \mathrm{~where~} a_i = 0$}        \If{$\vert C \vert = 0$}
            \State{Print spanning tree}
            \State{\textbf{return}}
        \EndIf
        
        \State{$t_{\ell} = b_1 b_2 \cdots b_{|C|} \gets 0^{\vert C\vert}$}
        \For{$c_i \in C$}
            \If{$\pi(c_i) \in P$}{$~b_i \gets j $~where~$\pi(c_i) = p_j \in P$}
            \EndIf
        \EndFor
        
        \State{{\sc GenArb}($C, t_{\ell}$)}
        
        \While{{\sc GenLinearK}$(t_{\ell}, \vert P\vert + 1) \ne \emptyset$}
            \State{$t_{\ell}, \delta \gets ${\sc GenLinearK}$(t_{\ell}, \vert P \vert+1)$}
            
            \If{$b_{\delta} > 0$}
                \State{$\pi(c_{\delta}) \gets p_{b_{\delta}} \in P$ {\color{blue}$\quad \quad  \quad  \quad  \quad \quad  \quad  \hspace{0.25em}\triangleright \  $case (a) and (c)}}
            \Else
                \State{$w \gets \text{\sc FindLeaf}(C_{\text{last}})$} {\color{blue}$\quad  \quad \quad  \quad  \hspace{0.5em}\triangleright \  $case (b)}
                \State{$\pi(c_{\delta}) \gets w$}
            \EndIf
            
            \State{{\sc GenArb}($C, t_{\ell}$)}
        \EndWhile
        \EndProcedure        
    \end{algorithmic}
\end{algorithm}



\begin{restatable}{lemma}{constrainedkary}\label{lem:constrained_k_ary}
There exists a 1-Gray code for the $  k  $-ary strings of length $  m  $ excluding the all-zero string $0^m$, starting from any string in which all non-zero digits are distinct, and ending at a string of the form $0^i y 0^{m-i-1}$ where $  i \in \{0, 1, \dots, m-1\}  $ and for some $  y > 0  $.
\end{restatable}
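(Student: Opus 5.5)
The plan is to first use the symmetries of the problem to shrink the set of possible starting strings, and then build the Gray code by induction on $m$. Let $H$ be the graph whose vertices are the $k$-ary strings of length $m$ other than $0^m$, with edges joining strings at Hamming distance one. A 1-Gray code as in the lemma is a Hamiltonian path in $H$.

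\textbf{Step 1 (symmetries).} Any map that applies, separately in each coordinate, a permutation of $\{0,\dots,k-1\}$ fixing $0$ is an automorphism of $H$. It preserves Hamming distance, and it sends $0^m$ to itself, hence the complement of $0^m$ to itself. It also maps strings of the form $0^i y 0^{m-i-1}$ to strings of the same form. Permuting coordinates has the same two properties. Consequently, if the lemma holds for one starting string, it holds for every string in its orbit. Every non-zero string, and in particular every string whose non-zero digits are distinct, can be mapped by such an automorphism to $1^j0^{m-j}$ for some $1\le j\le m$. It therefore suffices to start from $1^j0^{m-j}$. The same argument shows more: the induction hypothesis may be applied from \emph{any} non-zero starting string. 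I will use this freely below.

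\textbf{Step 2 (induction on $m$).} The case $k=1$ is vacuous, so assume $k\ge 2$. The base case $m=1$ is trivial: list $1,2,\dots,k-1$ in any order.

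For $m\ge 2$, split the strings by their last digit. The \emph{zero block} consists of the strings $p0$ with $p\ne 0^{m-1}$; it is a copy of the length-$(m-1)$ problem. The \emph{non-zero region} consists of the strings $pc$ with $c\ne 0$ and $p$ arbitrary. I traverse the non-zero region by a snake. I choose a 1-Gray code $p^{(1)},\dots,p^{(k^{m-1})}$ for all $k$-ary strings of length $m-1$, which can be taken to be the reflected $k$-ary Gray code after a coordinatewise symbol permutation. For each prefix $p^{(t)}$ in turn, I run the last digit through $1,\dots,k-1$, alternating between increasing and decreasing order. Consecutive strings then differ in one position, and the snake ends at $p^{(k^{m-1})}c$ for some $c\ne0$.

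\textbf{Case $j<m$.} Here the start $1^j0^{m-j}$ lies in the zero block. First traverse the zero block by induction; it ends at $0^iy0^{m-2-i}\,0$. Change the last digit to $1$, which enters the non-zero region at prefix $e=0^iy0^{m-2-i}$. For the snake I need a prefix Gray code from $e$ to $0^{m-1}$. Take the reflected code with leading coordinate $i$. It starts at $0^{m-1}$ and ends at $(k-1)$ in coordinate $i$ and $0$ elsewhere. Now apply, in coordinate $i$ only, a permutation sending $0\mapsto y$ and $k-1\mapsto 0$; this permutation need not fix $0$, which is allowed here because the full prefix set is symmetric. The resulting code starts at $e$ and ends at $0^{m-1}$. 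The snake therefore terminates at $0^{m-1}c$, which has the required form.

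\textbf{Case $j=m$.} Reverse the order of the two parts. Start the snake at $1^{m-1}1$, using a prefix Gray code from $1^{m-1}$ that ends at some non-zero $p$. Such a code is the image of the reflected code under the map $0\leftrightarrow1$ in every coordinate; it ends at $p\ne0^{m-1}$ as soon as $m-1\ge2$ or $k\ge3$. The remaining case $k=2$, $m=2$ is checked by hand, for example $11,10,01$ after reordering. The snake ends at $pc$. Changing $c$ to $0$ gives $p0$, a non-zero start for the zero block. By the remark in Step 1, the induction hypothesis covers the zero block from $p0$, and it ends in the required form.

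\textbf{Main obstacle.} The delicate point is matching the endpoints of the pieces. The prefix Gray code must start and end exactly where needed, and the end of the whole listing must have a single non-zero digit. The per-coordinate symbol permutations handle the first issue. Ordering the blocks according to whether $s_m=0$ handles the second. The small cases with $k=2$, where the hypercube is bipartite and the endpoints are constrained, need to be checked explicitly.
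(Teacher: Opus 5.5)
Your route differs from the paper's. The paper uses the Li--Sawada reflectable framework to build a 1-Gray code of \emph{all} $k$-ary strings that starts at the given string and ends at $0^m$, then deletes $0^m$. Your zero-block/snake induction is a reasonable alternative, but two steps fail as written.

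\textbf{The strengthening in Step~1 is false for the binary alphabet.} The claim that the induction hypothesis may be applied from any non-zero starting string fails for your $k=2$ (the paper's $k=1$). The set $\{0,1\}^m\setminus\{0^m\}$ is a bipartite subgraph of the hypercube with $2^{m-1}$ odd-weight and $2^{m-1}-1$ even-weight vertices. So every Hamiltonian path must start and end at odd weight. There is no 1-Gray code from $11$, and your hand check $11,10,01$ changes two digits at $10\to 01$. Hence Case $j=m$ for $k=2$, and any appeal to the hypothesis from an even-weight start, is unsupported.

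The strengthening is also unnecessary. By your own symmetry argument, the admissible starts are, up to automorphism, exactly the strings of weight between $1$ and $k-1$, and your recursion stays in this class. In Case $j<m$ the zero block starts at weight $j$. In Case $j=m$ one has $m\le k-1$, so the prefix $p$ ending the snake has weight at most $m-1\le k-2$. For $k=2$ only $j=1$ occurs, so Case $j=m$ never arises in the inductive step. Run the induction with this weight hypothesis instead.

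\textbf{The reflected code does not give the prefix code you need when $k$ is odd.} The reflected $k$-ary Gray code on $n\ge 2$ coordinates ends at $k-1$ in the leading coordinate and $0$ elsewhere only when $k$ is even. For odd $k$ the last block is traversed forwards, so the code ends at $(k-1)^n$. For $k=3$, $n=2$ it is $00,01,02,12,11,10,20,21,22$. Its endpoints differ in every coordinate, so no coordinatewise relabelling can make them differ in coordinate $i$ alone. For instance, with $k=3$ and $e=02$ your relabelled code runs from $02$ to $20$. The snake then ends at $20c$, which has two non-zero digits. So for odd $k$ and $m\ge 3$ the final string need not have the required form.

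A repair within your scheme is to take the prefix code from the induction hypothesis itself. Apply the lemma at length $m-1$ from $e$, which has weight one. It ends at a weight-one string, which is adjacent to $0^{m-1}$. Appending $0^{m-1}$ gives a 1-Gray code of all $k^{m-1}$ prefixes from $e$ to $0^{m-1}$. This is the paper's ``end at $0^m$, then delete it'' device used in reverse. With both corrections, your induction becomes a valid and fully explicit alternative to the paper's reflectable construction.
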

\begin{proof}
When $  k=1  $ (binary case), the statement follows directly from BRGC. 
The BRGC generates a 1-Gray code for all binary strings of length $  m  $. 
It begins with $0^m$ and ends with $0^{m-1}1$. 
Removing the initial all-zero string yields the desired sequence.
For $  k > 1  $, then the proof is based on the reflectable language framework of Li and Sawada~\cite{li2009gray}, with the modification that we dynamically update the last block of the recursion to force it ends with the string $0^i y 0^{m-i-1}$. 
A detailed proof can be found in Appendix A. 
A complete Python implementation of our reflectable Gray code generator is provided in Appendix D.
\end{proof}

Let $  \overrightarrow{K}_n = (V, E)  $ be the complete digraph on vertex set $  V  $. 
For any nonempty subset $R \subseteq V$, an \emph{$R$-forest} on $V$ is a directed acyclic subgraph of $\overrightarrow{K}_n$ with vertex set $V$ such that every vertex $r \in R$ has out-degree zero, and every vertex $v \in V \setminus R$ has out-degree exactly one.
We say that a $  R  $-forest $  F  $ is \emph{linear} if every non-leaf vertex has out-degree exactly one and in-degree at most one. 
Equivalently, $  F  $ consists of a collection of vertex-disjoint directed paths, each rooted at a distinct vertex in $  R  $.
We denote by $  T_{\ge \ell}  $ the forest induced by considering only the vertices and arcs from level $  \ell  $ to the maximum depth $  h  $.
For example, the forest $  T_{\ge 1}  $ corresponding to the spanning arborescence in Figure~\ref{fig:k5_spanning} is given by the tuple $  T_{\ge 1} = (33)  $, which consists of roots $  R = \{1,2,3\}  $ and the subtree rooted at vertex 3 containing vertices 5 and 6.


\begin{lemma}\label{lem:linear_invariant}
If every string $  t_j  $ ($  j \ge \ell  $) in the tuple $  T  $ encoding a spanning arborescence has strictly distinct non-zero digits, then the forest induced by the edges in $  T_{\ge \ell}  $ is a linear $  R  $-forest.
\end{lemma}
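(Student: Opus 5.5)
We argue directly from the level-by-level encoding. Let $R$ be the set of vertices at level $\ell-1$, the roots of $T_{\ge \ell}$. The edges of $T_{\ge \ell}$ are exactly the parent pointers encoded by the strings $t_j$ for $j \ge \ell$. Here a non-zero digit $b_i = d$ in $t_j$ means that the $i$-th candidate vertex $c_i$ at level $j$ points to the $d$-th vertex $p_d$ at level $j-1$. Every vertex at level $j \ge \ell$ receives its unique outgoing edge from exactly one string, namely $t_j$. Thus every vertex of $V\setminus R$ in the forest has out-degree one, and the vertices of $R$ have out-degree zero.

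Acyclicity is immediate, since every arc goes from level $j$ to level $j-1$. Hence $T_{\ge \ell}$ is an $R$-forest, and it remains to check the linearity condition.

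For in-degrees, fix a vertex $p$ at level $j-1$, with $j-1 \ge \ell-1$, and say $p = p_d$ in the indexing of level $j-1$. Its children are exactly the positions $i$ of $t_j$ with $b_i = d$. By hypothesis, the non-zero digits of $t_j$ are pairwise distinct, so at most one position carries the value $d$. Therefore every vertex has in-degree at most one. This holds in particular for non-leaf vertices, and also for the roots in $R$, since the case $j=\ell$ is covered by the hypothesis on $t_\ell$.

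Combined with out-degree one for non-root vertices, each component is a directed path. Such a path starts at a leaf and ends at a root in $R$, and distinct components end at distinct roots because every vertex lies in exactly one component. This gives a collection of vertex-disjoint directed paths, each rooted at a distinct vertex of $R$, so $T_{\ge \ell}$ is linear.

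The only delicate point is bookkeeping: the digit value $d$ in $t_j$ must refer to a unique vertex of level $j-1$. This follows from the definition of $P$ as the ordered set of level-$(j-1)$ vertices, which makes the map from digit values to parents injective. With that in place, distinct non-zero digits translate directly into in-degree at most one.
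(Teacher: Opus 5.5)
Your proof is correct and follows the same route as the paper: distinct non-zero digits in each $t_j$ give in-degree at most one, every non-root vertex has out-degree exactly one, and an acyclic digraph with these degree bounds is a union of vertex-disjoint directed paths. You spell out more than the paper does, namely acyclicity from the level structure and one root per component. Your choice of $R$ as the level-$(\ell-1)$ vertices differs from the paper's example of $T_{\ge 1}$, where $R$ is the level-$\ell$ vertices, but this only strengthens the conclusion: deleting the root layer of your linear forest leaves a linear forest rooted at level $\ell$.
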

\begin{proof}
If all non-zero digits in $  t_j  $ are distinct, then no two vertices at level $  j  $ share the same parent at level $  j-1  $. 
As such, every vertex at level $  j-1  $ has in-degree at most one from vertices at level $  j  $. Since edges only connect adjacent levels, every vertex in the forest induced by the edges in $  T_{\ge \ell}  $ has in-degree at most one. 
Moreover, because $  T_{\ge \ell}  $ is part of a spanning arborescence, every non-root vertex has out-degree exactly one. 
A directed acyclic graph in which every vertex has in-degree at most one and out-degree at most one consists of a collection of vertex-disjoint directed paths. 
Therefore, $  T_{\ge \ell}  $ is a linear $  R  $-forest.
\end{proof}

The idea of linear $R$-forests was first established by M{\"u}tze~\cite{TorstenNotes}, who proved that there exists a pivot Gray code between any two linear $R$-forests on complete graphs, provided that their first-level connection strings are distinct (Theorem 14 in~\cite{TorstenNotes}). 
While M{\"u}tze's result establishes the structural existence of these Gray codes via the Hamilton-connectedness of flip graphs, our primary objective is to design an explicit, constructive algorithmic framework that physically generates each arborescence in strictly constant amortized time.

To this end, our goal is thus to ensure that every recursive call at level $  \ell  $ begins with a linear $  R  $-forest. 
Then by recursively applying Lemma~\ref{lem:constrained_k_ary}, we guarantee that the generation at each level terminates with a string of the form $0^i y 0^{m-i-1}$, which resolves the $  R  $-forest at that level into a single directed path.
To maintain the linear $  R  $-forest invariant across all levels, we refine case (b) from the original undirected framework. 
In the undirected version, case (b) allows a vertex $  v  $ to disconnect from its parent and reconnect to an arbitrary sibling at the same level. 
However, if that sibling already has a child, the reconnection creates an in-degree of two, violating linearity.
In our directed setting, we require that $  v  $ attaches exclusively to a leaf (a vertex with in-degree zero) of a descendant of a sibling. 
Since the recursive generation ensures that each subtree collapses into a directed path by the time the 1-Gray code $k$-ary strings are exhausted, attaching to a leaf simply extends an existing path without introducing branches. 
Applying this leaf-attachment rule recursively guarantees that every intermediate forest remains linear throughout the generation process.
Pseudocode for this spanning arborescences generator is given in Algorithm~\ref{alg:complete_arborescence}.
For the sake of clarity, the algorithm is presented in its conceptual base form. 
It can be optimized to run in strictly constant amortized time per generated arborescence by applying the optimizations described in~\cite{Liu2026}.
To run the algorithm, we make the initial call {\sc GenArb}$(\{1, 2, 3, \ldots, n\}, (10^{n-1}))$ which sets $V = \{1, 2, 3, \ldots, n\}$, and $t_0 = 10^{n-1}$. 
The complete details of the constant amortized time optimization, along with a full Python implementation, are provided in Appendix C.
The spanning arborescences of $\overrightarrow{K}_5$ generated by {\sc GenArb} are listed in Appendix B.

\begin{theorem} \label{thm:genarb_complexity}
The algorithm {\sc GenArb} generates a pivot Gray code for spanning arborescences of complete digraphs $\overrightarrow{K}_n$. 
\end{theorem}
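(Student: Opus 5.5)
We argue by induction on the recursion, proving three things about each call {\sc GenArb}$(V,t_{\ell-1})$ made on a valid input: (i) it outputs every completion of the fixed levels $0,\dots,\ell-1$ to a spanning arborescence rooted at $r$, each exactly once; (ii) consecutive outputs differ by one directed pivot; (iii) the first output has the linear $R$-forest shape fixed on entry, and the last output has $T_{\ge \ell}$ a single directed path hanging from one vertex of $P$. The statement is then (i) and (ii) for the top call {\sc GenArb}$(\{1,\dots,n\},(10^{n-1}))$.

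\textbf{Exhaustiveness.} Arborescences rooted at $r$ correspond bijectively to level tuples $T=(t_1,\dots,t_h)$ with each $t_\ell$ nonzero. This is the same encoding as in~\cite{Liu2026}, with vertex 1 replaced by $r$. By Lemma~\ref{lem:constrained_k_ary}, the while loop runs {\sc GenLinearK} over all $(|P|+1)$-ary strings of length $|C|$ except $0^{|C|}$, each exactly once. Induction on $|C|$ then gives exhaustiveness and no repetition. One point needs care: after a case (b) step, $\pi(c_\delta)$ is set to a leaf $w$ and not to a vertex of $P$. We must check that the child call reads this assignment as the current state of level $\ell+1$, so that it is again covered by the inductive hypothesis.

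\textbf{Gray property.} Transitions between outputs come in two kinds.
\begin{itemize}
\item Transitions inside a recursive child are pivots by induction.
\item Transitions between the last output of one child and the first output of the next correspond to a single-symbol change $b_\delta$ of $t_\ell$ in the 1-Gray code.
\end{itemize}
If the new $b_\delta>0$ (cases (a),(c)), the only change is $\pi(c_\delta)\gets p_{b_\delta}$. This reassigns the outgoing edge of $c_\delta$ alone. It stays acyclic because $p_{b_\delta}$ lies at level $\ell-1$, strictly above $c_\delta$ and its whole subtree.

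If $b_\delta$ becomes $0$ (case (b)), $c_\delta$ must move below level $\ell$. Here we use (iii) for the child that just finished. Its last output has $T_{\ge \ell}$ collapsed into a single directed path, and by Lemma~\ref{lem:linear_invariant} every intermediate forest is linear. So the set $C_{\text{last}}$ contains a leaf $w$ that is not $c_\delta$ and not a descendant of $c_\delta$; we choose $w$ on another path of the forest. Setting $\pi(c_\delta)\gets w$ is then one edge change, creates no cycle, and keeps the forest linear. Because the Gray code never produces $0^{|C|}$, case (d) never occurs, and every transition is a single parent-pointer reassignment.

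\textbf{Maintaining the invariant.} Each child call must start from a string whose nonzero digits are distinct, so that Lemma~\ref{lem:constrained_k_ary} applies, and must end in the path form $0^iy0^{m-i-1}$ required by (iii). This is the main obstacle. The digits change along the sequence, so the children do not all start from the same state, and we must show that the level-$(\ell+1)$ string derived from the current linear forest always has distinct nonzero digits. Our approach is a strengthened hypothesis: at every output the whole forest $T_{\ge \ell+1}$ is linear. From that, distinctness at the next level is immediate, since each vertex has at most one child. We then show that cases (a),(b),(c) preserve linearity at the levels below, which holds because a vertex moves only onto a leaf or onto a top-level parent.

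Finally, when the level-$\ell$ Gray code ends at $0^iy0^{m-i-1}$, exactly one vertex hangs from $P$. The child's own terminal state is a path, so $T_{\ge \ell}$ is a single path, which closes the induction for (iii).
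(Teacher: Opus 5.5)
Your proposal is correct and follows essentially the same route as the paper: an induction over the recursion in which cases (a)/(c) (reassignment to a level-$(\ell-1)$ vertex) and case (b) (attachment to the leaf of a different path) preserve the linear forest below the current level, with Lemma~\ref{lem:constrained_k_ary} supplying both exhaustiveness at each level and the terminal form $0^iy0^{m-i-1}$ that collapses everything into a path. One correction: the strengthened invariant cannot hold ``at every output'' (for $n=5$ the listing must contain $2\to1,3\to2,4\to3,5\to3$, in which a level-$2$ vertex has two children), so state it only at child-call boundaries, namely at the entry state of each child and, via your (iii), at each child's last output, which is all that your distinctness and case (b) arguments actually use.
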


\begin{proof}
The algorithm begins with the linear chain $1 \leftarrow 2 \leftarrow \cdots \leftarrow n$, which is a linear forest. 
At each level $  \ell  $, after updating $  t_\ell  $, the algorithm recursively exhausts all valid extensions for levels $  \ell+1  $ to $  h  $. 
By Lemma~\ref{lem:constrained_k_ary}, these deeper levels always terminate in a linear configuration.
Cases (a) and (c) only reassign a parent pointer at level $  \ell-1  $ and do not affect the structure of the forest $  T_{\ge \ell}  $, so the $R$-forest remains linear. 
For case (b), a vertex $  v  $ at level $  \ell  $ is detached and reattached at a deeper level. 
Since $  T_{\ge \ell}  $ is linear prior to this move, the component containing $  v  $ is a single directed path. 
The algorithm attaches $  v  $ to a leaf of another path, which certainly would not create any vertex of in-degree two. 
Thus, the forest $  T_{\ge \ell}  $ remains linear after the transition. 
By induction, the $R$-forest induced from each level to level $h$ remains linear.

We then prove by contradiction that our algorithm exhaustively generates every spanning arborescence. 
Suppose there exists a valid arborescence encoded by a tuple $  T = (t_1, t_2, \dots, t_h)  $ with a specific root $r$ that is missing from the listing. 
By Lemma~\ref{lem:constrained_k_ary} and {\sc GenLinearK}, the absence of $  T  $ implies that the prefix $  (t_1, t_2, \dots, t_{h-1})  $ never appears in the generation at level $  h-1  $. 
Recursively applying the same argument implies that no spanning arborescence at level zero matching $t_1$. 
However, at level zero, {\sc GenArb} generates a 1-Gray code (via {\sc GenLinearK}) that includes all binary strings except $0^{n-1}$, covering all possible $t_1$ for spanning arborescences of $\overrightarrow{K}_n$, a contradiction.
\end{proof}

\begin{theorem}\label{thm:k_cat}
Spanning arborescences of complete digraphs $\overrightarrow{K}_n$ with n vertices can be generated in pivot Gray code order in constant amortized time per tree, using $O(n^2)$ space.
\end{theorem}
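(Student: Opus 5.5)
Correctness of the ordering is already given by Theorem~\ref{thm:genarb_complexity}. What remains is to bound the running time and the space. My plan is to follow the amortization scheme of Liu et al.~\cite{Liu2026} for undirected spanning trees and to charge all work to the leaves of the recursion tree of {\sc GenArb}.

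First, the recursion tree. Every call with $|C|>0$ makes at least one recursive call, and the calls at level $\ell$ iterate through all strings produced by {\sc GenLinearK} for the current $t_\ell$. Every internal node therefore has at least one child, and every leaf outputs a distinct arborescence. A naive count fails, however, because chains of calls with only one child can be long. This happens when $|P|=1$ and $|C|=1$, where the only nonzero string is $1$. The standard fix from~\cite{Liu2026} is to show that, except along such unary chains, each internal node has at least two children. Concretely, whenever $|C|\ge 1$ and $|P|\ge 1$, the number of strings is $(|P|+1)^{|C|}-1$. This is at least $2$ unless $|P|=|C|=1$. I would handle the unary case by collapsing it: when $|C|=1$ the final level is forced, so the algorithm prints directly without recursing. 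After this collapse every internal node has at least two children, so the number of internal nodes is at most the number of leaves.

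Second, the cost per node. Each {\sc GenArb} call does $O(|C|)$ work to initialize $t_\ell$. Each subsequent transition does constant work in {\sc GenLinearK}, since the reflectable-language generator of Lemma~\ref{lem:constrained_k_ary} runs in constant amortized time per string, as in Li and Sawada~\cite{li2009gray}. A transition may also call {\sc FindLeaf} in case (b). The $O(|C|)$ initialization can be charged to the $|C|$ children when $|P|\ge 1$: there are at least $(|P|+1)^{|C|}-1\ge |C|$ children, so the charge is $O(1)$ per child. Alternatively, I would maintain the sets $P$ and $C$ incrementally, as in~\cite{Liu2026}, and avoid the rescan altogether.

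The step I expect to be the main obstacle is {\sc FindLeaf}. Finding the leaf at the end of a directed path could take linear time. To avoid this, I would use Lemma~\ref{lem:linear_invariant}: the forest $T_{\ge\ell}$ is always a union of directed paths. The plan is to store, for each path head at level $\ell$, a pointer to its tail leaf, and to update these pointers in $O(1)$ per pivot. A pivot only concatenates a path onto a tail or splits a path at a vertex, and in both cases only the endpoint pointers change. With these pointers, {\sc FindLeaf} runs in $O(1)$.

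Space: the recursion depth is at most $n$, and each frame stores a string of length at most $n$ together with the parent array and the tail pointers. This gives $O(n^2)$ in total. Combining the bounds, the total time is $O(\#\text{leaves})$, which is constant amortized time per arborescence.
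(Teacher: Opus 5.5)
The gap is in the step you flag as the main obstacle, {\sc FindLeaf}. (The rest is essentially fine. A unary node occurs only when $|P|=|C|=1$, and its unique child is then a leaf, so there are no long unary chains to collapse. You should also pass down the number of zeros of $t_{\ell-1}$, so that a leaf is recognized in $O(1)$ time rather than after an $O(|V|)$ scan.) Your $O(1)$ bound rests on the claim that $T_{\ge\ell}$ is \emph{always} a union of directed paths. Lemma~\ref{lem:linear_invariant} applies only when every $t_j$ with $j\ge\ell$ has distinct non-zero digits. The generator at each level runs through all $(|P|+1)^{|C|}-1$ non-zero strings, and most of them repeat digits; the paper's own example $T_{\ge 1}=(33)$ already has a vertex of in-degree two. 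Linearity holds only at the moments when a case (b) move is made at level $\ell$, because the deeper recursion has just finished on strings of the form $0^i y 0^{m-i-1}$.

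Even at those moments, ``path head'' depends on the level. Each vertex on the path below $C_{\text{last}}$ is the head of a path for some deeper active call. So a pivot made deep in the recursion (a case (c) split, or a case (b) concatenation at some level $\ell'>\ell$) changes the tail below every ancestor head, one pointer per level. While the intermediate levels are branching, these tails are not even defined. Hence ``only the endpoint pointers change, $O(1)$ per pivot'' is not justified.

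The paper sidesteps this. It lets {\sc FindLeaf} walk the path in $O(j)$ time, where $j$ is the length of the path being extended, and amortizes that cost against the recursion triggered immediately afterwards at the deeper levels. Concretely, the fresh call at level $\ell+1$ made right after the move has at least $j+1$ unplaced vertices: the $j$ path vertices and $c_\delta$. It therefore has at least $2^{j+1}-1\ge j+1$ children and descends through at least $j$ new levels. The walk thus costs $O(1)$ per child of a recursion node created by that very transition. Each node is charged this way only once, which gives constant amortized time.

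Alternatively, you can get $O(1)$ lookups by updating leaf pointers lazily, as the Appendix C code does. Reset a vertex's leaf pointer when it loses its last child. When the level-$(\ell+1)$ call returns, set $\mathrm{leaf}(p) \gets \mathrm{leaf}(u)$ for the unique remaining vertex $u$ at level $\ell+1$; it is unique because that call ends on a string with one non-zero digit. This costs $O(1)$ per call. With either repair your argument goes through, but as written the {\sc FindLeaf} step is not established.
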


\begin{proof} 
The 1-Gray code for $k$-ary strings can easily be generated in constant amortized time using linear space via the reflectable framework in~\cite{li2009gray}. 
Most of the other operations in {\sc GenArb} can be completed in constant time using linear space, except for case (b), where finding the leaf of a path takes $  O(j)  $ time, where $  j  $ is the length of the path being extended (see Appendix C for implementation details using child pointers). 
However, this cost is amortized: extending the path by $  j  $ levels immediately triggers recursive calls that generate at least one new arborescence at each of the deeper $j$ levels. 
As such, the algorithm runs in constant amortized time per arborescence, using linear space.
\end{proof}

Finally we provide a useful corollary which will become handy when we discuss split digraphs and proper interval digraphs in the subsequent sections. 

\begin{corollary}\label{cor:genarb_linear}
The algorithm {\sc GenArb} begins and ends with a linear spanning arborescence, that is, a single directed path in which every vertex except the root has exactly one parent.
\end{corollary}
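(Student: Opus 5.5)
We handle the two endpoints separately, beginning with the start. The initial call is {\sc GenArb}$(\{1,\dots,n\},(10^{n-1}))$. At each level the algorithm first initializes $t_\ell$ from the current parent pointers and recurses before any call to {\sc GenLinearK}. So the first arborescence printed is whatever structure $\pi$ encodes at the start, namely the chain $1 \leftarrow 2 \leftarrow \cdots \leftarrow n$ used in the proof of Theorem~\ref{thm:genarb_complexity}. We check that this chain is consistent with the encoding. Level $\ell$ then contains exactly one vertex, so each initial $t_\ell$ has a single non-zero digit. Its non-zero digits are therefore trivially distinct, and Lemma~\ref{lem:constrained_k_ary} applies at every level. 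The first output is a single directed path.

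For the end, we argue by induction on the number of remaining levels, i.e.\ on $|C|$. The claim is that when a call {\sc GenArb}$(V,t_{\ell-1})$ returns, the forest $T_{\ge \ell}$ is a single directed path hanging from one vertex of level $\ell-1$. The base case $|C|=0$ is trivial. For the inductive step, the while loop at level $\ell$ terminates once {\sc GenLinearK} is exhausted. By Lemma~\ref{lem:constrained_k_ary}, the final string has the form $t_\ell=0^i y 0^{|C|-i-1}$, so exactly one vertex $c$ of $C$ is attached to level $\ell-1$. The last action at this level is the recursive call {\sc GenArb}$(C,t_\ell)$. By the induction hypothesis, when that call returns, everything below level $\ell$ is a single path hanging from one vertex of level $\ell$.

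This is the step we expect to be the main obstacle. We must show that the vertex from which that path hangs is $c$, and that no vertex of $C\setminus\{c\}$ sits off the path. Every vertex of $C$ other than $c$ has $b_i=0$, so it lies at depth at least $\ell+1$. By Lemma~\ref{lem:linear_invariant} and the invariant in the proof of Theorem~\ref{thm:genarb_complexity}, $T_{\ge\ell}$ is a linear $R$-forest. Its roots are the level-$\ell$ vertices, and $c$ is the only one of them. A linear forest with a single root is a single path. Combining this with the single edge from $c$ to level $\ell-1$, the part of the arborescence at levels $\ell-1$ and below is one path.

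Applying the claim at level $1$ under the root $r$ finishes the argument. The final $t_1$ has one non-zero digit, and the rest is a single path below it, so the last arborescence is a directed Hamiltonian path ending at $r$. In it every non-root vertex has exactly one parent and, apart from the leaf, exactly one child. This proves the corollary.
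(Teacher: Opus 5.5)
Your proposal is correct and follows the paper's proof. The first output is the initial chain $1 \leftarrow 2 \leftarrow \cdots \leftarrow n$, and the last output is a single path because Lemma~\ref{lem:constrained_k_ary} forces the final string at every level to have exactly one non-zero digit; you simply make explicit the induction over levels that the paper leaves implicit, and the step you flag as the main obstacle is immediate: once the final $t_\ell$ makes $c$ the only level-$\ell$ vertex, the induction hypothesis for the inner call already says all of $C\setminus\{c\}$ lies on one path hanging from $c$, so the detour through Lemma~\ref{lem:linear_invariant} is harmless but unnecessary.
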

\begin{proof}
The algorithm initializes with the linear chain $1 \leftarrow 2 \leftarrow \cdots \leftarrow n$. By Lemma~\ref{lem:constrained_k_ary}, every recursive call terminates with a string of the form $0^i y 0^{m-i-1}$ for some $  y > 0  $, which encodes a single directed path. Therefore, the final arborescence generated by the algorithm is linear.
\end{proof}

\section{Spanning arborescences of split digraphs}\label{sec:split}

This section presents a pivot Gray code for spanning arborescences of split digraphs.
A graph is a \emph{split graph} if its vertex set can be partitioned into a clique $  K  $ and an independent set $  I  $. 
The symmetric \emph{split digraph} is obtained by replacing each undirected edge with a pair of symmetric directed edges. 
For example, Figure~\ref{fig:split_example} shows a split digraph consisting of a clique $  \overrightarrow{K}_5  $ on five vertices and an independent set $I$ with vertices $  \{6,7,8\}  $.
Without loss of generality, we assume that the clique is a maximal clique, that is, no vertex in the independent set is adjacent to every vertex in the clique.
We also assume that the underlying split graph is connected; equivalently, every independent-set vertex has at least one neighbor in the clique.

Split graphs form an important subclass of \emph{chordal graphs}. 
In fact, as the number of vertices $  n\to\infty  $, the proportion of chordal graphs that are split graphs approaches one~\cite{Bender1985}. 
As such, an efficient algorithm for generating spanning arborescences of split digraphs essentially solves the problem for almost all chordal digraphs in the asymptotic sense.
Split graphs have long attracted significant attention due to their rich structural and algorithmic properties~\cite{Golumbic1980, hammer1981splittance, merris2003split, rose1970triangulated, wood2004characterisations}. 
For instance, while fundamental optimization problems like maximum independent set, minimum clique cover, and graph coloring are NP-hard on general graphs, they admit elegant linear-time algorithms on split graphs (and more generally, on chordal graphs) via simplicial elimination orderings~\cite{Gavril1972}. 
Furthermore, split graphs serve as a vital tractability boundary: the cluster deletion problem, which remains NP-complete even on restricted subclasses of chordal graphs, becomes solvable in polynomial time when restricted to split graphs~\cite{Bonomo2015}. 
These tractability results stem from the simple clique--independent set partition, which also yields a clean forbidden--subgraph characterization: split graphs are precisely the graphs containing no induced $  C_4  $, $  C_5  $, or $2K_2$~\cite{Foldes1977}.
Despite these nice properties of split graphs, it was not previously known whether split graphs admit a pivot Gray code for their spanning trees (or spanning arborescences for their digraphs counterpart). 
We resolve this question affirmatively by constructing the first pivot Gray code for spanning arborescences of split digraphs. 
Our construction builds directly upon the framework developed for complete digraphs in the previous section.

\begin{figure}[t]
\centering
\includegraphics[width=0.55\linewidth, trim={0 0 0 0}, clip]{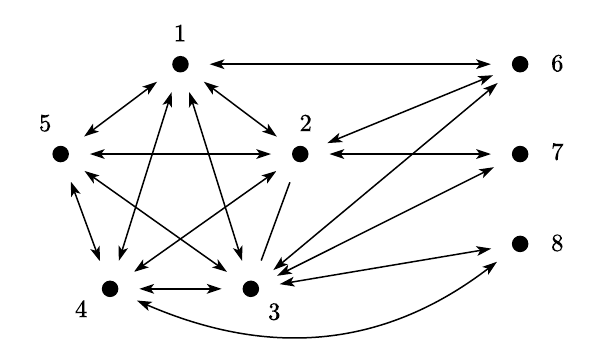}
    \caption{A split digraph with a clique $\overrightarrow{K}_5$ and an independent set containing the vertices $\{6, 7, 8\}$.}
\label{fig:split_example}
\end{figure}

The main idea of our algorithm is to decompose a spanning arborescence of a split digraph $  \overrightarrow{G}=(\overrightarrow{K} \cup I, E)  $ into three combinatorial components: the \emph{host}, the \emph{skeleton}, and the \emph{routing}. 
Let the independent set be ordered as $  I = \{i_1, i_2, \dots, i_m\}  $. 
For each independent vertex $  x \in I  $, we define its set of \emph{out-neighbors} in the clique (i.e., the vertices in $  \overrightarrow{K}  $ to which $  x  $ has a directed edge) as $N(x) = \{ v \in \overrightarrow{K} \mid (x \to v) \in E \}$, where $  E  $ denotes the set of directed edges of the digraph.
We first assume the root $  r  $ lies in the clique $  \overrightarrow{K}  $. 
The case where the root lies in the independent set $  I  $ will be discussed later in this section.
We then demonstrate that every spanning arborescence of a split digraph $  \overrightarrow{G}  $ rooted at a clique vertex $  r\in \overrightarrow{K}  $ can be uniquely represented by a tuple $  S = (r, \rho, T_K, \alpha)  $, where $r$ is the root in the clique $\overrightarrow{K}$, with $\rho$, $T_k$, and $\alpha$ defined as follows.

\paragraph{1. Host ($  \rho  $):}
The {host} mapping determines the outgoing edges from the independent set $  I  $. 
In any spanning arborescence rooted at $  r\in \overrightarrow{K}  $, every vertex $  x\in I  $ has exactly one outgoing edge directed to a vertex in $  \overrightarrow{K}  $. 
The {host} configuration is represented by a parent mapping $  \rho: I \to \overrightarrow{K}  $, which determines the outgoing edges from the independent set $  I  $. 
In any spanning arborescence rooted at $  r \in \overrightarrow{K}  $, every vertex $  x \in I  $ has exactly one outgoing edge directed to a vertex in $  \overrightarrow{K} $, denoted $  \rho_x \in \overrightarrow{K}  $.
We encode the host mapping $  \rho  $ as a mixed-radix string $  h = h_1 h_2 \cdots h_m  $, where the digit $  h_y \in \{0, 1, \dots, |N(i_y)|-1\}  $ is an index indicating that $  \rho(i_y)  $ is the $  (h_y + 1)  $-th smallest vertex in the neighborhood $  N(i_y) \subseteq \overrightarrow{K}  $.

\paragraph{2. Skeleton arborescence ($  T_K  $):}
Since the original structure is a spanning arborescence, $  T_K  $ is itself a spanning arborescence of the complete digraph $  \overrightarrow{K}  $ rooted at $  r  $.
Every vertex $  u\in \overrightarrow{K}\setminus\{r\}  $ has exactly one outgoing edge in the arborescence. 
This edge either points directly to another clique vertex $  v\in \overrightarrow{K}  $ or to an independent vertex $  x\in I  $. In the latter case, the path continues from $  x  $ to $  \rho_{x} = v\in \overrightarrow{K}  $. We define the \emph{skeleton} $  T_K  $ as the projection of these connections onto the clique $  \overrightarrow{K}  $. 
Formally, an edge $  (u \to  v) \in T_K  $ exists if the arborescence contains either the \emph{direct path} $ u \to v$,  or a path $  u\to x\to v  $ for some $  x\in I  $ which we denote as an \emph{alias path}. 

\paragraph{3. Routing ($  \alpha  $):}
While the skeleton $  T_K  $ determines the connectivity within the clique, the \emph{routing} $  \alpha  $ specifies the actual edges used in $T$ to connect the clique vertices. 
For each edge $  u \to v  $ in the skeleton $  T_K  $, vertex $  u  $ may connect to $  v  $ either directly or via an independent vertex acting as an ``alias'' (similar to how a computer network works). For a fixed host $  \rho  $, the set of valid aliases for $  (u \to v) \in T_K  $ is
$$A_\rho(u,v) = \{ x \in I \mid (u \to x)\in E \text{ and } \rho_x = v \}.$$
The routing choices for all $  u\in \overrightarrow{K}\setminus\{r\}  $ are encoded by a mixed-radix string $  \alpha  $, where for each $  u  $ the symbol $  \alpha_u \in \{0,1,\dots,|A_\rho(u,v)|\}  $. If $  \alpha_u=0  $, then $  u  $ connects directly to $  v  $ via the clique edge $  u \to v  $. If $  \alpha_u>0  $, then $  u  $ routes through the $  \alpha_u  $-th smallest vertex in $  A_\rho(u,v)  $.


\begin{lemma}\label{lem:split_bijection}
Let $\overrightarrow{G} = (\overrightarrow{K} \cup I, E)$ be a split digraph with root $r \in \overrightarrow{K}$. There is a bijection between the spanning arborescences of $\overrightarrow{G}$ rooted at $r$ and the valid tuples $S = (r, \rho, T_K, \alpha)$.
\end{lemma}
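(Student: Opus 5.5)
We construct two maps: a decoding map $\Phi$ from valid tuples to arborescences, and an encoding map $\Psi$ from arborescences to tuples. We then check that both are well defined and mutually inverse. Throughout, a tuple $S=(r,\rho,T_K,\alpha)$ is \emph{valid} when three conditions hold: $\rho_x\in N(x)$ for every $x\in I$; $T_K$ is a spanning arborescence of $\overrightarrow{K}$ rooted at $r$; and $\alpha_u\in\{0,1,\dots,|A_\rho(u,v)|\}$ for each skeleton edge $u\to v$.

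\textbf{Encoding.} Let $T$ be a spanning arborescence of $\overrightarrow{G}$ rooted at $r\in\overrightarrow{K}$.
\begin{itemize}
\item Since $I$ is independent and $r\notin I$, the out-edge of each $x\in I$ goes to a clique vertex in $N(x)$. Setting $\rho_x$ to that vertex defines $\rho$.
\item For $u\in\overrightarrow{K}\setminus\{r\}$, the out-edge of $u$ goes either to some $v\in\overrightarrow{K}$, or to some $x\in I$ whose own out-edge goes to $\rho_x\in\overrightarrow{K}$. In both cases we obtain a unique clique vertex $v$, the first clique vertex strictly after $u$ on its root path. We declare $u\to v\in T_K$.
\item We set $\alpha_u=0$ in the first case. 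In the second case, $x\in A_\rho(u,v)$, and we set $\alpha_u$ to be the rank of $x$ in $A_\rho(u,v)$.
\end{itemize}
We must show that $T_K$ is a spanning arborescence of $\overrightarrow{K}$. Every non-root clique vertex has out-degree one in $T_K$, and $r$ has out-degree zero. Following $T_K$-edges from $u$ visits exactly the clique vertices of the $T$-path from $u$ to $r$, in the same order. Hence the $T_K$-walk from $u$ reaches $r$ and cannot cycle.

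\textbf{Decoding.} Given a valid tuple, let $\Phi(S)$ contain the edges $x\to\rho_x$ for $x\in I$. For each skeleton edge $u\to v$, it also contains $u\to v$ if $\alpha_u=0$, and otherwise $u\to x$, where $x$ is the $\alpha_u$-th smallest element of $A_\rho(u,v)$. All these edges lie in $E$:
\begin{itemize}
\item $x\to\rho_x\in E$ because $\rho_x\in N(x)$;
\item $u\to v\in E$ because $\overrightarrow{K}$ is complete;
\item $u\to x\in E$ by the definition of $A_\rho(u,v)$.
\end{itemize}
Every non-root vertex has out-degree one and $r$ has out-degree zero.

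\textbf{Main obstacle: acyclicity and reachability of $\Phi(S)$.} This is the step I expect to be the main obstacle. The issue is that an independent vertex $x$ may serve as an alias for several $u$, and some $x$ may be used by no $u$ at all. I will argue through the clique projection. From any clique vertex $u$, the $\Phi(S)$-path reaches $v$ in one or two steps, where $u\to v\in T_K$ and the intermediate vertex (if any) lies in $I$. Iterating this, the path follows the $T_K$-path and ends at $r$. From $x\in I$, the path reaches the clique vertex $\rho_x$ in one step and then continues as above. So every vertex reaches $r$. A directed cycle would have to contain a clique vertex, because $I$-vertices point into $\overrightarrow{K}$. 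That cycle would then project to a cycle in $T_K$, which is impossible. A functional digraph in which every vertex reaches the unique sink $r$ is a spanning arborescence rooted at $r$.

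\textbf{Inverses.} It remains to check that $\Psi\circ\Phi$ and $\Phi\circ\Psi$ are identities, and both checks are immediate from the constructions.
\begin{itemize}
\item For $\Psi(\Phi(S))$: the out-edges of the $I$-vertices recover $\rho$. The first clique vertex after each $u$ is $v$ by construction, which recovers $T_K$. The rank of the chosen alias recovers $\alpha$.
\item For $\Phi(\Psi(T))$: the out-edge of every vertex is reproduced exactly, since $\rho$ determines the edges out of $I$, and $\alpha$ together with the skeleton edge determines the edge out of each clique vertex.
\end{itemize}
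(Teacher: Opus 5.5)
Your proposal is correct and follows essentially the same route as the paper. Both read off the host from the out-edges of $I$, the skeleton edge $u \to v$ as the next clique vertex after $u$, and the routing symbol from the alias used. Both then rebuild the arborescence from a tuple, checking reachability and acyclicity by projecting onto $T_K$. Where the paper only asserts that a cycle in $T_K$ yields a cycle in $T$ and that different tuples give different edge sets, you say more explicitly that the $T_K$-walk follows the clique vertices of the $T$-path, and you check that both compositions are identities.
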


\begin{proof}
We show that the mapping is both injective and surjective.

\textbf{(Arborescence $\to$ Tuple):} 
Let $T$ be any spanning arborescence of $\overrightarrow{G}$ rooted at $r$. Since $I$ is an independent set, every vertex $x \in I$ has out-degree exactly one, and its unique out-neighbor lies in $\overrightarrow{K}$. This uniquely determines the host string $\rho$. 

For each $u \in \overrightarrow{K} \setminus \{r\}$, the unique outgoing edge from $u$ in $T$ either goes directly to some $v \in \overrightarrow{K}$ or to some $x \in I$ such that $\rho_x = v$. 
In both cases, it induces a unique logical edge $u \to v$ in the skeleton $T_K$. 
Moreover, $  T_K  $ must be acyclic, because any directed cycle in $  T_K  $ would induce a directed cycle in $  T  $, a contradiction.
Thus, $T_K$ is a valid spanning arborescence on $\overrightarrow{K}$.
The specific edge chosen (direct or via alias) uniquely determines the routing symbol $\alpha_u$. 
Thus, each arborescence maps to exactly one valid tuple.

\textbf{(Tuple $\to$ Arborescence):} 
Conversely, consider any valid tuple $(r, \rho, T_K, \alpha)$. The host string $\rho$ assigns exactly one outgoing edge from each vertex in $I$ to a vertex in $\overrightarrow{K}$. The skeleton $T_K$ together with the routing string $\alpha$ assigns exactly one outgoing edge to each vertex in $\overrightarrow{K} \setminus \{r\}$. This construction yields exactly $|\overrightarrow{K}|-1 + |I| = |V|-1$ edges.
Since $T_K$ is a spanning arborescence on $\overrightarrow{K}$ rooted at $r$, every vertex in $\overrightarrow{K}$ has a directed path to $r$. 
Every vertex in $I$ has a path to $r$ because it points to some vertex in $\overrightarrow{K}$ (via $\rho$) and all vertices in $\overrightarrow{K}$ reach $r$. 
Furthermore, every non-root vertex has out-degree exactly one, with all paths directed toward the root. 
The resulting digraph is thus acyclic and is a spanning arborescence of $\overrightarrow{G}$ rooted at $r$. Finally, different tuples produce different sets of edges, so the mapping is injective.
\end{proof}

\begin{figure}[t]
\centering
\includegraphics[width=0.7\linewidth, trim={0 0 0 0}, clip]{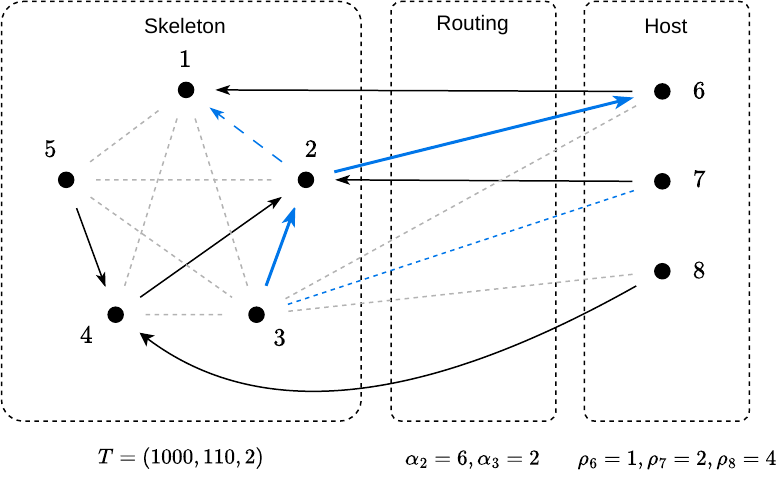}
    \caption{A spanning arborescence of the split digraph in Figure~\ref{fig:split_example} represented by the tuple $  S = (1, 001, (1000,110,2), 1000)  $.}
\label{fig:kn_GfrayR}
\end{figure}

\paragraph{Example of the tuple $S$ encoding (Figure \ref{fig:kn_GfrayR}):}

Consider a split graph with clique $  \overrightarrow{K}=\{1, 2, \dots,5\}  $ and independent set $  I=\{6,7,8\}  $. 
The spanning arborescence shown in Figure~\ref{fig:split_example} can be represented by the tuple $  S = (1, 001, (1000,110,2), 1000)  $ and is interpreted as follows:
\begin{itemize}
\item \textbf{Root:} $  r = 1  $.
\item \textbf{Host ($  \rho = 001  $):} Vertex 6 connects to the first available neighbor in $  \overrightarrow{K}  $ (vertex 1), vertex 7 to its first (vertex 2), and vertex 8 to its second (vertex 4).
\item \textbf{Skeleton ($  T_K = (1000,110,2)  $):} This tuple defines a spanning arborescence on $  \overrightarrow{K}  $ (generated via the complete digraph framework in Section~\ref{sec:pivot}).
\item \textbf{Routing ($  \alpha = 1000  $):} Among the four non-root clique vertices, only vertex 2 uses an alias ($  \alpha_{2}=6  $), routing through its available independent vertex (vertex 6). The remaining vertices use direct clique edges.
\end{itemize}


\subsection{Spanning arborescences rooted at a vertex in the clique}

Since the tuple $ S = (r, \rho, T_K, \alpha) $ uniquely encodes every spanning arborescence rooted at $ r \in \overrightarrow{K} $, generating a pivot Gray code reduces to the problem of traversing the space of valid tuples such that each consecutive tuple differ by a pivot change.
We achieve this with a three-stage nested generator that iterates over the host $  \rho  $, skeleton $  T_k  $, and routing components $  \alpha  $.
\begin{enumerate}
\item \textbf{Host generator ($ \rho $):}
The outermost loop enumerates all valid parent assignments from the independent set $ I $ to the clique $ \overrightarrow{K} $.
Since $ \rho $ is a mixed-radix string, we generate it using a reflectable 1-Gray code described in~\cite{li2009gray}.
Each transition changes the parent of exactly one vertex in $ I $, that is one pivot change.
\item \textbf{Skeleton generator ($ T_K $):}
For a fixed host configuration $ \rho $, the middle layer generates spanning arborescences on the clique $ \overrightarrow{K} $ using the spanning arborescence pivot Gray code generator for complete digraphs (see Section~\ref{sec:pivot} and Algorithm~\ref{alg:complete_arborescence}). 
The fixed alias paths induced by $ \rho $ are treated as valid routing options inside the skeleton generation.
\item \textbf{Routing generator ($ \alpha $):}
For each fixed host $ \rho $ and skeleton $ T_K $, the innermost layer generates the actual routing choices using another reflectable mixed-radix 1-Gray code.
Each step changes the routing decision (direct edge vs. alias) for exactly one clique vertex, that is one pivot change.
\end{enumerate}
Pseudocode of this three-stage nested generator is given in Algorithm~\ref{alg:split_arb}.  
The algorithm {\sc GenSplitArb} begins by initializing the baseline routing options for each clique vertex based on the edges between the clique $  \overrightarrow{K}  $ and the independent set $  I  $. 
The outermost generator, {\sc GenHost}, iterates over all host configurations $  \rho  $, dynamically updating the available routing options after each host change. 
Then for each fixed host configuration, it invokes the middle-layer skeleton generator {\sc GenArb} to enumerate spanning arborescences on the clique $  \overrightarrow{K}  $. 
For each generated skeleton $  T_K  $, the function {\sc GetRoutingMixedRadix} computes the valid routing aliases. 
Finally, the innermost generator {\sc GenRouting} cycles through these routing options. 
At the deepest level of recursion, the tuple $  (r, \rho, T_K, \alpha)  $ is decoded into a spanning arborescence of the split digraph.


Note that, however, na\"{i}ve nesting of this algorithm might create cycles when we finish generating the routing choices $  \alpha  $.
When the host generator is about to change the parent of an independent vertex $ x $, some clique vertices (for example, $u$) may currently use $ x $ as an alias.
If the new parent of $  x $ is one of those clique vertices, the update would introduce a directed cycle (e.g., $  u \to x \to u  $), thereby violating the arborescence property.
To prevent this, we must ensure that immediately before updating $  x  $, no clique vertex $  u \in \overrightarrow{K}  $ is using $  x  $ as a routing alias. 
We achieve this through a \emph{look ahead} mechanism that identifies the upcoming host independent set vertex that is going to be updated, and applies \emph{endpoint control} in the inner generators. 
This is achieved by modifying the {\sc GenHost} function to explicitly identify the independent vertex $  x  $ scheduled for the next pivot and passes it as a parameter to the skeleton generator {\sc GenArb}, which enables the routing generator to safely orient its terminal states, as formalized in the following lemma.


\begin{algorithm}[t]
    \caption{Generating pivot Gray codes for spanning arborescences of split digraphs}
        \label{alg:split_arb}
    \begin{algorithmic}[1]
        \Procedure{GenSplitArb}{$\overrightarrow{K}, I$}
            \State{$R \gets \text{Initialization}(\overrightarrow{K}, I)$}
            \For{$\rho, x \in \text{\sc GenHost}()$} 
                \State{Update $R$}
                \For{$T_k \in \text{\sc GenArb}(\overrightarrow{K}, t_0, x, \text{True})$}
                    \State{$A \gets \text{GetRoutingMixedRadix}(\rho, T_k)$}
                    \For{$\alpha \in \text{\sc GenRouting}(A)$}
                        \State \textbf{yield} decode arborescence $S = (r, \rho, T_K, \alpha)$
                    \EndFor
                \EndFor
            \EndFor
        \EndProcedure
    \end{algorithmic}
\end{algorithm}

\begin{lemma}\label{thm:split_kary}
Let $  x \in I  $ be the independent vertex scheduled to pivot next, and let $  T_K^{\mathrm{end}}  $ be the final skeleton arborescence generated in the current host block. 
If $  T_K^{\mathrm{end}}  $ is a single directed path rooted at $  r  $, then 
regardless of the current routing configuration, the nested generation of the skeleton $  T_K  $ and routing aliases $  \alpha  $ can be dynamically controlled to terminate in a configuration where no vertex in $  \overrightarrow{K}  $ utilizes $  x  $ as an alias (i.e., $  \alpha_u \neq x  $ for all $  u \in \overrightarrow{K}  $).
\end{lemma}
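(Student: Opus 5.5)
The plan is to exploit the fact that, in a single directed path $T_K^{\mathrm{end}}$, the set of clique vertices that could possibly route through $x$ is tiny. Then I would show that the innermost routing generator can be steered so that its last state avoids $x$ entirely.

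\textbf{Step 1: locate the clique vertices that can use $x$.} Write $v=\rho_x$. By definition of $A_\rho(u,v)$, a clique vertex $u$ can use $x$ as an alias only if its skeleton edge is $u\to v$. In a single directed path rooted at $r$, every vertex has in-degree at most one. So at most one clique vertex $u^\ast$ has $u^\ast \to v$ in $T_K^{\mathrm{end}}$: the predecessor of $v$ on the path, if $v$ is not the leaf. If no such $u^\ast$ exists, or if $x\notin A_\rho(u^\ast,v)$, then no vertex can use $x$ in the last skeleton block, and the claim holds trivially. Thus only one routing coordinate, $\alpha_{u^\ast}$, ever matters, and the hypothesis that $T_K^{\mathrm{end}}$ is linear (supplied by Corollary~\ref{cor:genarb_linear}) is exactly what makes this reduction work.

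\textbf{Step 2: endpoint control of the routing Gray code.} For the final skeleton $T_K^{\mathrm{end}}$, {\sc GenRouting} enumerates a mixed-radix product $\prod_u \{0,\dots,|A_\rho(u,\pi(u))|\}$ by a reflectable 1-Gray code. I would order the coordinates so that $\alpha_{u^\ast}$ is the outermost, slowest-changing digit. I would also permute the symbol order of that digit, which the reflectable framework of~\cite{li2009gray} permits, so that its last visited value is $0$ (the direct edge) or any alias other than $x$. A reflected mixed-radix code enumerates each outer digit value as a contiguous block, in any prescribed order of the outer digit's values. So the final string has $\alpha_{u^\ast}\neq x$, and all other coordinates are irrelevant by Step 1.

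The remaining point is that this reordering must not break the Gray property at the junction between the previous skeleton and $T_K^{\mathrm{end}}$. The routing string on entry to the last block is inherited from whatever the previous transition left, which is what "regardless of the current routing configuration" refers to. I would handle this by letting the first value of the outer digit be the current value of $\alpha_{u^\ast}$ and choosing any other value as the last. This is possible whenever the digit has at least two values, which holds since $0$ and $x$ are both options. A reflectable code can start at an arbitrary string and end with a prescribed distinct value in one chosen digit, since the outer digit is just a path over its values with fixed endpoints.

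\textbf{Main obstacle.} The delicate step is the joint control across the skeleton/routing interface. The routing alphabet $A_\rho(u,\pi(u))$ changes as $T_K$ changes, so a value of $\alpha_u$ referring to $x$ could in principle persist into the last skeleton through the previous skeleton pivot. I would argue that a skeleton pivot changing $\pi(u)$ forces $u$ to pick a fresh routing value from the new alphabet, and that this value can be chosen as $0$. Hence entering $T_K^{\mathrm{end}}$ with a known routing state is a matter of convention. The only genuine constraint is then the endpoint of the outer digit handled in Step 2, which closes the argument.
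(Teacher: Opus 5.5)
Your Step~1 matches the paper's reduction: since $T_K^{\mathrm{end}}$ is a path, only the unique child $u^\ast$ of $p=\rho_x$ can route through $x$. The gap is in Step~2 and in your ``main obstacle'' paragraph.

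\textbf{Why the within-block control fails.} With $\alpha_{u^\ast}$ as the outermost digit, the last routing block traverses the values of $\alpha_{u^\ast}$ along a path that starts at the inherited value. Its final value must therefore differ both from the inherited value and from $x$. In the typical case $A_\rho(u^\ast,p)=\{x\}$, the digit has only two values, direct and $x$. Entering the block on the direct edge then forces it to end on $x$. When $u^\ast$ is the only vertex with a routing choice in $T_K^{\mathrm{end}}$ (as at trees 12--13 of the paper's example), no reordering of coordinates or symbols inside that block can avoid this.

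\textbf{Why your remedy does not repair it.} First, the pivot into $T_K^{\mathrm{end}}$ usually does not change $\pi(u^\ast)$. Take the terminal segment $0^{c-2}10,\,0^{c-2}11,\,0^{c-1}1$ of Lemma~\ref{lem:endpoint_cat}. The edge $u^\ast\to p$ is created at the switch to $0^{c-2}11$, and further skeletons follow, each with a complete sweep of $\alpha_{u^\ast}$. So the entry state into the last block is dictated by those sweeps, not by convention. Second, consider the case where the final pivot does create $u^\ast\to p$. Choosing the fresh value to be the direct edge is exactly the wrong choice, since the single remaining sweep then ends on $x$. The paper initialises it to the alias in that case (trees 16--17).

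\textbf{The missing idea.} Apply the control earlier, at the last moment before $T_K^{\mathrm{end}}$ when the skeleton edge $u^\ast\to p$ is established. This moment exists for two reasons: $|\overrightarrow{K}|\ge 3$, since for $|\overrightarrow{K}|=2$ maximality of the clique rules out $x$ as an alias; and some skeleton of the block omits $u^\ast\to p$.

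At that pivot the outgoing edge of $u^\ast$ is being rewritten anyway, so $\alpha_{u^\ast}$ can be set freely. Place $\alpha_{u^\ast}$ at the top level of the reflectable routing recursion. Then it sweeps from one extreme to the other in the routing block of every skeleton from that point on. Its terminal value is therefore fixed by its initial value and the parity of the number of remaining skeletons. Choosing the initial value (direct, or the other extreme) according to that parity makes it come to rest on the direct edge at $T_K^{\mathrm{end}}$.

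In the paper's example, $\alpha_2$ is set to direct at tree~10 precisely so that it enters the last skeleton (tree~12) on the alias and leaves it on the direct edge. As written, your within-block argument covers only the special cases $|A_\rho(u^\ast,p)|\ge 2$ or an inherited value of $x$.
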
 

\begin{proof}
Let $  p \in \overrightarrow{K}  $ be the current host of $  x  $. 
Since $  T_K^{\mathrm{end}}  $ is a single path, every vertex has at most one child. 
In particular, $  p  $ has at most one child in $  T_K^{\mathrm{end}}  $, which we denote as $  u  $. Since a clique vertex can only route through $  x  $ if its skeleton parent is $  p  $, $  u  $ is the only vertex in the entire clique $  \overrightarrow{K}  $ capable of using $  x  $ as an alias at the end of the generation of the skeleton arborescence.
If no such $  u  $ exists (either $  p  $ is a leaf or the edge $  u \to x  $ does not exist), then $  x  $ is unavailable as an alias for all $  u \in \overrightarrow{K}  $, and the claim holds immediately.

Now assume $  u  $ exists. 
Its available routing options contain both the direct choice $  p  $ and the alias $  x  $. Our goal is to force the terminal routing value of $  u  $ to be the direct choice $  p  $, ensuring that the subsequent update of the host vertex $  x  $ does not disrupt the skeleton arborescence.
We prove that the nested generation of $  T_K  $ always offers an opportunity to reset the parity of the reflectable routing generator, safely setting $  \alpha_u  $ to the direct path $  p  $ at the end of the generation of $  T_K  $.

If $\overrightarrow{K}$ has only two vertices $(u, p)$ and an $x$ exists in the independent set allowing $u \to x \to p$, $\{u, x, p\}$ can form a larger clique, violating the assumption that $\overrightarrow{K}$ is maximal. 
Now consider $\overrightarrow{K}$ has at least three vertices. Such a reset opportunity must exist as the last arborescence includes $u \to p$ and there exists an arborescence without $u \to p$.
One of the reset opportunities arises when the skeleton edge $  u \to p  $ is going to be established for the final time prior to $  T_K^{\mathrm{end}}  $ (meaning the path $  u \to p  $ will remain connected until the skeleton block concludes).
Such an opportunity always exists since the edge $u \to p$ is present in the terminal arborescence $T_K^{\mathrm{end}}$. 
At this point, $  u  $ has at least two routing choices: the direct path $  u \to p  $ or the alias path $  u \to x \to p  $.
The routing generator $  \alpha  $ operates as a reflectable mixed-radix Gray code. 
Thus, for every subsequent skeleton arborescence generated, the routing cycle for $  u  $ will be reflected: $  \alpha_u = p  $ becomes $  \alpha_u = x  $, or vice versa.
The final state of $  \alpha_u  $ is therefore completely determined by its initial state and the parity of the number of remaining skeleton configurations, once we place $  \alpha_u  $ at the first level of the reflectable recursion tree of the routing generator.
Therefore, if we know how many arborescences we still need to generate (this value can be computed by exhaustive search, look ahead, or precomputation), then we can set $  \alpha_u  $ to the direct path $  p  $ if the parity of the remaining sequence is even, and to the alias $  x  $ (or another extreme) if the parity is odd. 
As such, as the algorithm exhausts the remaining arborescences in the skeleton block, the reflectable routing generator completes its sequence and $\alpha_u$ comes to rest on the direct path $  p  $.
This way, when we reach $  T_K^{\mathrm{end}}  $, $  \alpha_u  $ will always use the direct path $  p  $, and we can safely pivot the host vertex $  x  $.
\end{proof}

The next natural question is how to determine when the skeleton edge $  u \to p  $ is established for the final time prior to reaching $  T_K^{\mathrm{end}}  $, and how to efficiently compute the parity of the remaining arborescences. 
It turns out that, by exploiting the linear property of $  T_K^{\mathrm{end}}  $, both values can be computed efficiently.

\begin{lemma}\label{lem:endpoint_cat}
If $  T_K^{\mathrm{end}}  $ is a single directed path rooted at $r$, then the exact step when the skeleton edge $u \to p$ is established for the final time prior to reaching $  T_K^{\mathrm{end}}  $ and the parity of the remaining skeleton arborescences can be identified and computed in constant time.
\end{lemma}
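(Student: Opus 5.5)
We use the recursive structure of {\sc GenArb} and the endpoint guarantee of Lemma~\ref{lem:constrained_k_ary}. By Corollary~\ref{cor:genarb_linear}, $T_K^{\mathrm{end}}$ is a single directed path $r \leftarrow v_1 \leftarrow v_2 \leftarrow \cdots \leftarrow v_{k-1}$. In the tuple encoding, each level string $t_\ell$ of $T_K^{\mathrm{end}}$ therefore has the form $0^i y 0^{m-i-1}$: vertex $v_\ell$ sits at level $\ell$, and its parent is the unique nonzero entry. The skeleton edge $u \to p$ is one of these path edges, so $p = v_{\ell-1}$ and $u = v_\ell$ for some level $\ell$.

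\textbf{Step 1 (locating the final establishment).} Once level $\ell$ reaches its last string under {\sc GenLinearK}, the edge $u \to p$ is never changed again. Deeper levels only alter outgoing edges of vertices at levels $\ge \ell+1$. Shallower levels are already in their terminal states, because the recursion finishes level $\ell$ last within the enclosing calls. So the final establishment of $u \to p$ is the last transition performed by the level-$\ell$ generator in the final call at that level. This happens at the moment the level-$(\ell-1)$ generator has itself made its last move.

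To detect this moment, I would attach to each recursive call a flag saying whether it lies on the ``last branch''. Each call knows in $O(1)$ whether its {\sc GenLinearK} has one step left, since the reflectable generator of~\cite{li2009gray} exposes whether the current string is the terminal one. A call is on the last branch exactly when its parent is on the last branch and its parent has just made its final move. This flag is the parameter passed in Algorithm~\ref{alg:split_arb}. When the flag is true at level $\ell$ and the next {\sc GenLinearK} step is the final one, that step is the reset opportunity. Each of these checks costs constant time.

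\textbf{Step 2 (parity of the remaining count).} After that step, the remaining skeletons are exactly those produced by the recursive call {\sc GenArb}$(C,t_\ell)$ at level $\ell+1$. This call enumerates all arborescences with the prefix (levels $0..\ell$) fixed. Because the prefix is a path with a single vertex at level $\ell$, the deeper structure is a spanning arborescence of the complete digraph on the remaining $m = |C|$ vertices together with $u$, rooted at $u$. Its count is therefore $(m+1)^{m-1}$ by Cayley's formula.

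Since the routing generator is reflectable, the quantity that matters is the parity of the total number of routing-block traversals. If the other routing radices are fixed, this reduces to the parity of that count, possibly weighted by the product of the other routing radices. I would handle this by placing $\alpha_u$ at the first level of the routing recursion, as Lemma~\ref{thm:split_kary} prescribes. Then only the number of skeletons matters: $(m+1)^{m-1}$ is odd iff $m$ is even (with $m \ge 1$; $m = 0$ gives $1$). This is an $O(1)$ computation, given $m = |C|$, which is known at the call.

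\textbf{Main obstacle.} The delicate step is justifying that the remaining enumeration really is a full, unconstrained {\sc GenArb} run on $\overrightarrow{K}_{m+1}$. In particular, the case (b) leaf reattachments inside the subcall must never touch the level-$\ell$ edge, and the starting string is linear, so Lemma~\ref{lem:constrained_k_ary} applies and the run covers everything exactly once. I would prove this using the linear-forest invariant (Lemma~\ref{lem:linear_invariant}) and the fact that case (b) only moves vertices at level $\ge \ell+1$. A secondary point is that, if routing radices of deeper vertices change between skeletons, the reflection count for $\alpha_u$ still equals the number of skeletons. This holds because $\alpha_u$ is the outermost routing digit and reflects once per skeleton change.
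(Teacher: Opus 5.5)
Your Step~1 identifies the wrong step. On the last branch every shallower level is at its final single-nonzero string. So level $\ell-1$ is just $\{p\}$, $|P|=1$, and the level-$\ell$ generator is a binary 1-Gray code on $c=|C|$ positions with $0^c$ excluded. Its final string has weight one. Its predecessor differs in one bit and cannot be $0^c$, so it has weight two. The last transition is therefore $0^{c-2}11\to 0^{c-1}1$, up to the position permutation the generator uses. This is a case~(b) move that detaches a \emph{different} vertex from $p$ and hangs it on a leaf. It does not touch $u\to p$.

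The edge $u\to p$ was established one step earlier, at $0^{c-2}10\to 0^{c-2}11$. That is exactly where the paper resets $\alpha_u$, detected by $n_{\text{str}}=2$. At the step you choose, $u$'s outgoing edge does not change, so assigning $\alpha_u$ there is a second simultaneous pivot and breaks the Gray code property. Only one edge of $T_K^{\mathrm{end}}$ is finally created by a last move: the final edge of the path. This is the paper's $c=1$ case. That edge comes from the last $11\to 01$ move one level up, with exactly one skeleton remaining, and is handled by the second endpoint-control branch of Algorithm~\ref{alg:genarb_endpoint}. For every other edge your reset opportunity is misplaced.

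The missing idea is to use $|P|=1$ to read off the last three strings $0^{c-2}10,\ 0^{c-2}11,\ 0^{c-1}1$ explicitly. Then count what remains after the true establishment:
\begin{itemize}
\item the skeletons extending $0^{c-2}11$, which are forests on $c$ vertices with two specified roots, $2c^{c-3}$ of them;
\item the skeletons extending $0^{c-1}1$, of which there are $c^{c-2}$.
\end{itemize}
The total $2c^{c-3}+c^{c-2}$ has parity $c \bmod 2$.

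Your count $(m+1)^{m-1}=c^{c-2}$ omits the first term, which is odd when $c=2$. Separately, your claim that $(m+1)^{m-1}$ is odd iff $m$ is even is false at $m=1$. The two slips cancel, so your final parity rule agrees with the paper's. That agreement is accidental, and the rule is applied at a moment when it cannot be realized by a single pivot.
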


\begin{proof}
Since $  T_K^{\mathrm{end}}  $ is a single directed path, the string $  t_\ell  $ at each level $  \ell  $ of the skeleton ends with a single-child configuration of the form $0^a10^b$. 
Without loss of generality, assume the final string at level $  \ell  $ is $0^{c-1}1$, where $  c = |C|  $ is the number of unassigned vertices at that level.
We now trace how the final skeleton edge $  u \to p  $ is established. 

The vertex $  p  $ is a parent passed from the previous level (or $p$ is the root). 
We first analyze the general case where $c \ge 2$.
Since $  T_K^{\mathrm{end}}  $ is a single directed path, level $  \ell-1  $ contains only the vertex $  p  $. 
Thus, the 1-Gray code generator uses BRGC to exhaustively generate the valid constrained binary strings of length $  c  $, excluding $0^c$.
The digit $1$ in the terminal string $  t_\ell = 0^{c-1}1  $ corresponds to the edge $  u \to p  $. 
The final three strings of the BRGC sequence are $0^{c-2}10$, $0^{c-2}11$, and $0^{c-1}1$. 
Therefore, the edge $  u \to p  $ is established for the final time at the transition to $  t_\ell = 0^{c-2}11  $, which can be identified by observing that exactly two strings remain in the BRGC sequence.

We then compute the total number of remaining skeleton arborescences. 
The number of remaining skeleton arborescences correspond to the spanning forests on $  c  $ vertices extending the states $0^{c-2}11$ ($  k=2  $ roots) and $0^{c-1}1$ ($  k=1  $ root). 
By the generalized Cayley's formula for forests, the number of spanning forests on $  c  $ vertices with $  k  $ specified roots is $  k \cdot c^{c-k-1}  $. 
Thus, the total number of remaining skeleton arborescences is
$$2c^{c-3} + c^{c-2} = c^{c-3}(2 + c).$$
Taking this modulo 2, the parity of the remaining arborescences simplifies to $  c \pmod{2}  $. 
Finally, when $  c=1  $, the local sequence reduces to the single-state string $  t_\ell = 1  $, indicating that we have reached $  T_K^{\mathrm{end}}  $. 
Thus, the parity of the remaining global sequence is exactly the parity of $  c  $ in both cases, which can be computed in constant time.
\end{proof}

\begin{algorithm}[!t]
    \caption{Generating pivot Gray codes with routing alias control}
    \label{alg:genarb_endpoint}
    \begin{algorithmic}[1]
        \Procedure{\sc GenArb}{$V, t_{\ell-1}, x, \text{last}$}
            \State{$P \gets v_i \in V \mathrm{~where~} a_i > 0$}
            \State{$C \gets v_i \in V \mathrm{~where~} a_i = 0$}
            \If{$\vert C \vert = 0$} \textbf{yield} \EndIf
            
            \State{$t_{\ell} = b_1 b_2 \cdots b_{|C|} \gets 0^{\vert C\vert}$}
            \State \text{Initialize parents mapping to $P$}

            \State{$n_{\text{str}} \gets \vert P \vert ^{\vert C \vert} - 1$}
            \State{{\sc GenArb}($C, t_{\ell},  x, \text{last} \cap n_{\text{str}} = 1$)}
            \State{$n_{\text{str}} \gets n_{\text{str}} - 1$}
            
            \While{{\sc GenLinearK}($t_{\ell}, \vert P\vert + 1) \ne \emptyset$}
                \State{$t_{\ell}, \delta \gets ${\sc GenLinearK}$(t_{\ell}, \vert P \vert+1)$}
                \State{$c \gets C[\delta]$} 
                
                \If{$b_{\delta} > 0$}
                    \State{$\pi(c) \gets p_{b_{\delta}} \in P$} 
                \Else
                    \State{$w \gets \text{\sc FindLeaf}(C_{\text{last}})$}
                    \State{$\pi(c) \gets w$}
                \EndIf

                \If{$\text{last}$ \textbf{and} $n_{\text{str}}= 2$ \textbf{and} $\pi(x) = p_0$} \hfill\tikzmark{brace-start}
                    \If{$\vert C \vert$ is even}
                        \State{$\alpha_{c} \gets p = R[p][c]_{\text{first}}$}
                    \Else
                        \State{$\alpha_{c} \gets R[p][c]_{\text{last}}$}
                    \EndIf
                \ElsIf{$\text{last}$ \textbf{and} $\vert C \vert = 2$ \textbf{and} $n_{\text{str}}= 1$ \textbf{and} $\pi(x) = w$}
                    \State{$\alpha_{c} \gets R[w][c]_{\text{last}}$} \hfill\tikzmark{brace-end}

                \Else
                    \State{$\alpha_{c} \gets \pi(c) = R[\pi(c)][c]_{\text{first}}$}
                
                \EndIf
                
                \State{{\sc GenArb}($C, t_{\ell}, x, \text{last} \cap n_{\text{str}} = 1$)}
                \State{$n_{\text{str}} \gets n_{\text{str}} - 1$}
            \EndWhile
        \EndProcedure
    \end{algorithmic}
\noindent\makebox[0pt][l]{%
\tikz[remember picture,overlay]{
\draw[
    decorate,
    decoration={brace, amplitude=5pt},
    thick,
    blue
]
([xshift=-11.0em,yshift=2.0ex]pic cs:brace-start)
--
node[midway,right=8pt,blue]{\small endpoint control}
([xshift=-11.0em,yshift=-2.0ex]pic cs:brace-end);
}%
}

\end{algorithm}

Pseudocode for the updated spanning arborescence generator is provided in Algorithm~\ref{alg:genarb_endpoint}. 
The procedure extends Algorithm~\ref{alg:complete_arborescence} (for complete digraphs) by incorporating dynamic routing assignments and endpoint control required for split digraphs.
To track the precise timing of block boundaries, the algorithm uses two state variables: a boolean \emph{last} flag and a countdown variable $  n_{\text{str}}  $. 
The \emph{last} flag is propagated down the recursion tree and becomes true when all previous recursive levels have reached their final configurations. 
The variable $  n_{\text{str}}  $ tracks the number of remaining skeleton configurations at the current level.
By monitoring these variables, the algorithm identifies the exact transition at which the terminal skeleton edge $  u \to p  $ is established. 
The routing alias initialization is then overridden when \emph{last} is true and $  n_{\text{str}}  $ equals 2 or 1. 
At these points, the algorithm detects the formation of $  u \to p  $. 
We then assign $  \alpha_u  $ to the first level of the routing generator’s reflectable recursion tree and initialize the coordinate $  \alpha_u  $ according to the parity of $  c  $.
To generate the complete sequence of spanning arborescences for a split digraph, we make the initial call {\sc GenSplitArb}$(\overrightarrow{K}, I)$. 
This procedure then coordinates the nested traversal by invoking {\sc GenHost}, the updated {\sc GenArb}, and {\sc GenRouting} to resolve the alias assignments. 
A complete Python implementation of the algorithm is provided in Appendix~E. 

As an example, the 21 spanning arborescences of the split digraph from Figure~\ref{fig:split_example}, rooted at vertex 1, are listed in Figure~\ref{fig:split_gray_code_listing}.
Endpoint control is applied at the underlined arborescences 10 and 16 to force the active clique vertices to use their direct skeleton edges immediately before the host transitions of the independent vertices.
In the case of Tree 10, the \emph{last} flag is set to {true} and $  n_{\text{str}} = 2  $. 
Here, the independent vertex scheduled to pivot its host is vertex 4, and the skeleton generator pivots the edge $2 \to 5$ to $2 \to 1$ (i.e., $  u=2  $ and $  p=1  $). 
At this point, as the skeleton edge $2\to 1$ of $  T_K^{\mathrm{end}}  $ is being established, vertex 2 can either connect directly to its skeleton parent 1 or route through the alias path via the independent vertex 4.
Since $  n_{\text{str}} = 2  $ indicates that exactly two skeleton configurations remain in the current block, the algorithm initializes the routing state of vertex $2$ to the direct path $2 \to 1$. 
The reflectable routing generator then traverses the available routing choices, pivoting from the direct path $2 \to 1$ to the alias path $2 \to 4$ (Tree 11). 
In the next step, the skeleton generator transitions to the final arborescence of the block by pivoting vertex 3 to vertex 2 (Tree 12). 
At this stage, the routing generator is reflected, causing the active path to pivot from the alias $2 \to 4$ back to the direct path $2 \to 1$ (Tree 13).
As such, no clique vertex utilizes vertex 4 as an alias, allowing the independent vertex 4 to safely change its host to vertex 2 in the subsequent step (Tree 14).
Then in the case of Tree 16, the \emph{last} flag is set to {true} and $  n_{\text{str}} = 1  $. 
Here, the independent vertex scheduled to pivot its host is vertex 5, and the skeleton generator pivots the edge $2 \to 1$ to $2 \to 3$ (i.e., $  u=2  $ and $  p=3  $). 
At this point, the skeleton edge $2\to 3$ of $  T_K^{\mathrm{end}}  $ is being established, vertex 2 can either connect directly to its skeleton parent 3 or route through the alias path via the independent vertex 5. 
Since $  n_{\text{str}} = 1  $ indicates that only one skeleton configuration remains in the current block, the algorithm initializes the routing state of vertex $2$ to the alias path $2 \to 5$. 
The reflectable routing generator then traverses the available routing choices, pivoting from the alias path $2 \to 5$ to the direct path $2 \to 3$ (Tree 17). 
As such, no clique vertex utilizes vertex 5 as an alias, allowing the independent vertex 5 to safely change its host to vertex 2 in the subsequent step (Tree 18).

\begin{theorem}\label{thm:gensplit_gray}
The algorithm {\sc GenSplitArb} generates a pivot Gray code for the spanning arborescences of a split graph $  \overrightarrow{G} = (\overrightarrow{K} \cup I, E)  $ rooted at a vertex $  r \in \overrightarrow{K}  $.
\end{theorem}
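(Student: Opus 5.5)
The plan is to prove the theorem in two parts: exhaustiveness (each arborescence appears exactly once) and the Gray code property (consecutive outputs differ by one directed pivot). For exhaustiveness I use Lemma~\ref{lem:split_bijection}. Every spanning arborescence rooted at $r \in \overrightarrow{K}$ corresponds to exactly one valid tuple $(r,\rho,T_K,\alpha)$, so it is enough to show that the nested generators visit every valid tuple exactly once. This holds layer by layer.
\begin{itemize}
\item {\sc GenHost} is a reflectable mixed-radix 1-Gray code, so it lists every host string $h$.
\item For each fixed $\rho$, {\sc GenArb} lists every spanning arborescence of $\overrightarrow{K}$ rooted at $r$, by Theorem~\ref{thm:genarb_complexity}.
\item For each fixed $(\rho,T_K)$, {\sc GenRouting} lists every routing string with $\alpha_u \in \{0,\dots,|A_\rho(u,v)|\}$.
\end{itemize}
Endpoint control only changes the starting value and orientation of a reflectable routing block, never the set of strings it covers, so exhaustiveness survives.

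For the Gray code property I split transitions into three types.

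\emph{(i) Inside a routing block.} Only one $\alpha_u$ changes. Vertex $u$ switches its out-edge between $u\to v$ and $u\to x\to v$ for some alias $x$, or between two aliases of the same skeleton parent $v$. The skeleton is unchanged, so acyclicity is immediate and this is one directed pivot.

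\emph{(ii) A skeleton step inside a host block.} {\sc GenArb} reassigns the skeleton parent of one clique vertex $c$ by a single pivot (cases (a)--(c), justified as in Theorem~\ref{thm:genarb_complexity}). The new routing value $\alpha_c$ is set at the same moment, and every other $\alpha_u$ stays as it is because its skeleton edge is unchanged. I must check that the reflected routing state of the other vertices is consistent with the new $T_K$, which holds since their alias sets $A_\rho(u,v)$ depend only on their own skeleton edge. Then only $c$'s out-edge changes in the decoded arborescence. Acyclicity follows from Lemma~\ref{lem:split_bijection}, because the new skeleton is a valid arborescence.

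\emph{(iii) A host step.} The host of exactly one independent vertex $x$ changes from $p$ to $p'$. Corollary~\ref{cor:genarb_linear} says the final skeleton $T_K^{\mathrm{end}}$ of each block is a single directed path. So Lemma~\ref{thm:split_kary} applies, and Lemma~\ref{lem:endpoint_cat} gives the constant-time trigger that Algorithm~\ref{alg:genarb_endpoint} implements. Together they guarantee that at the end of the block no clique vertex routes through $x$. Consequently $x$ has no in-neighbours in the decoded arborescence, so changing its out-edge cannot create a cycle. The unchanged skeleton and routing remain valid under the new $\rho$, since no vertex's alias set is affected in a way that invalidates its current choice. The step is therefore one pivot producing a valid arborescence.

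Finally, I must check the start of the next host block. {\sc GenArb} is rerun, and the new block must begin from the current state rather than jump. I will argue that reflectable traversal runs the next skeleton block in reverse order, starting from $T_K^{\mathrm{end}}$. The routing state carries over unchanged, with the alias sets recomputed through ``Update $R$''.

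The main obstacle is this block boundary. Two things must be justified: that the endpoint control in Algorithm~\ref{alg:genarb_endpoint} fires on exactly the vertex $u$ named in Lemma~\ref{thm:split_kary}, and that the reversed skeleton block under the new host starts at the current tuple. I would handle it by induction on the host sequence. The invariant is that each host block starts and ends at a linear skeleton with no alias through the next pivoting independent vertex. The parity computation of Lemma~\ref{lem:endpoint_cat} supplies the inductive step.
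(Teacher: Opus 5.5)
Your structure matches the paper's proof. Exhaustiveness comes from Lemma~\ref{lem:split_bijection} together with each nested generator covering its domain exactly once. The Gray code property comes from checking routing, skeleton and host transitions separately. At a host step, Lemma~\ref{thm:split_kary} ensures that no clique vertex routes through $x$, so $x$ is a leaf and re-hosting it is a single acyclicity-preserving pivot. You are more careful than the paper about the host-block boundary, which its proof takes for granted.

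The one step that would not go through is the mechanism you propose for that boundary. The next skeleton block is not in general the previous one run in reverse, and nothing in the algorithm reverses it. {\sc GenArb} is invoked afresh for each host configuration, initializes every $t_\ell$ from the current parent pointers, and {\sc GenLinearK} proceeds from there. For example, with the appendix implementation and $|\overrightarrow{K}|=4$, the level-1 strings run $100,101,111,110,010,011,001$ in the first block and $001,101,111,011,010,110,100$ in the next. That second list is not the reversal.

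The argument you actually need is simpler:
\begin{itemize}
\item $T_K^{\mathrm{end}}$ is linear (Corollary~\ref{cor:genarb_linear}), so every level string has a single nonzero digit.
\item Lemma~\ref{lem:constrained_k_ary} permits starting from any string with distinct nonzero digits.
\item The first skeleton yielded by the new call is $T_K^{\mathrm{end}}$ itself, produced by the initial recursive call before any {\sc GenLinearK} step. It is paired with the carried-over routing state, which stays valid after ``Update $R$'' because no vertex uses $x$.
\end{itemize}

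The same point is needed in your exhaustiveness bullet. Theorem~\ref{thm:genarb_complexity} is argued for the start at the chain $1\leftarrow 2\leftarrow\cdots\leftarrow n$, while later host blocks start from an arbitrary linear arborescence. Completeness from such a start again rests on Lemma~\ref{lem:constrained_k_ary}. Once you replace the reversal claim with this restart-from-current-state argument, your induction over host blocks goes through.
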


\begin{proof}
By Lemma~\ref{lem:split_bijection}, there is a bijection between the spanning arborescences of $  \overrightarrow{G}  $ rooted at $  r \in \overrightarrow{K}  $ and the valid configuration tuples $  S = (r, \rho, T_K, \alpha)  $. 
The algorithm decomposes the generation into three nested layers corresponding exactly to these components: the host assignments $  \rho  $ for vertices in $  I  $, the skeleton arborescence $  T_K  $ on the clique $  \overrightarrow{K}  $, and the routing aliases $  \alpha  $.
The host assignments $  \rho  $ and the skeleton $  T_K  $ are strictly independent; the structure of $  T_K  $ imposes no restrictions on $  \rho  $. 
Furthermore, while the available option for the routing aliases $  \alpha  $ depends on the current host configuration $  \rho  $, the specific routing options made by the innermost loop do not affect the validity of $  \rho  $ or $  T_K  $. 
Since these routing options are independent and each nested generator exhaustively enumerates every valid string in its domain exactly once, the overall procedure generates every valid tuple $  S  $ exactly once. 
As such, it produces every spanning arborescence of $  \overrightarrow{G}  $ exactly once.

\begin{figure}[t]
\centering
\includegraphics[width=1\linewidth, trim={0 0 0 0}, clip]{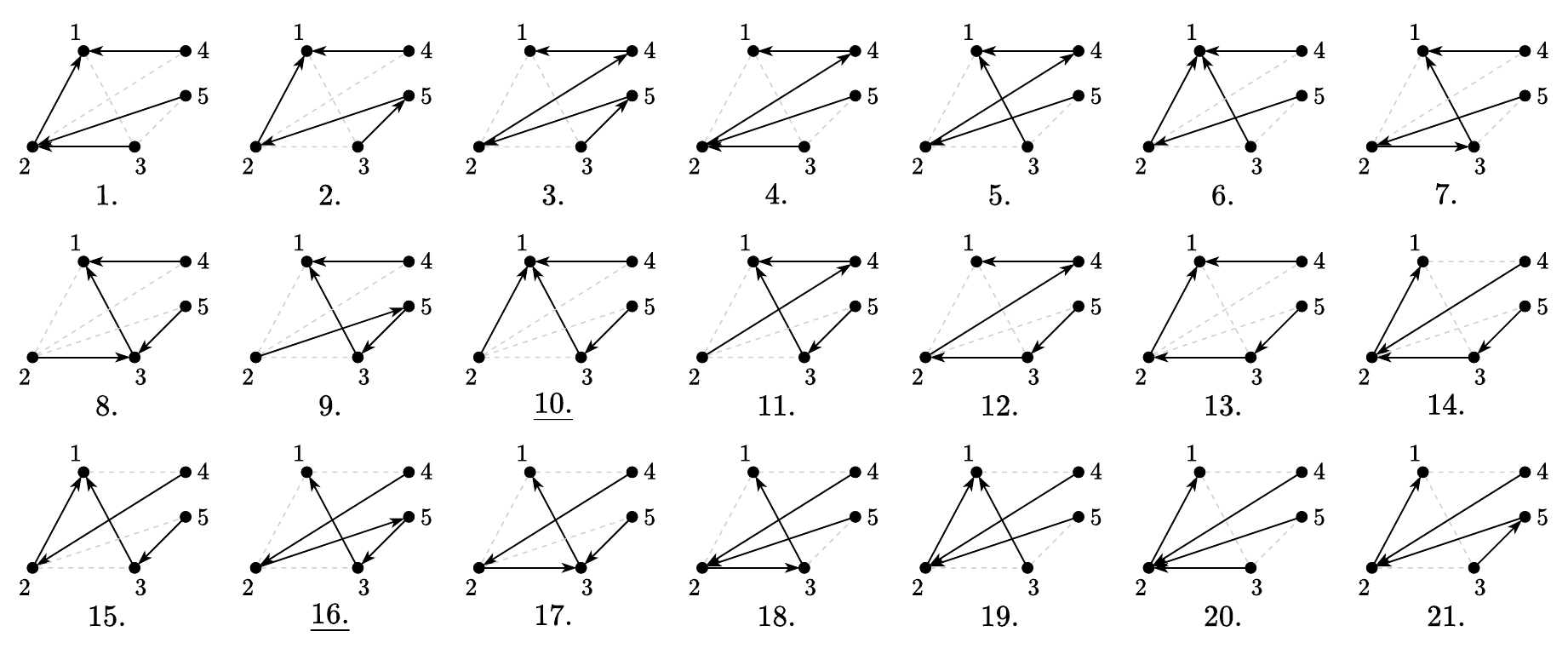}
    \caption{The pivot Gray code sequence of spanning arborescences for the split digraph in Figure~\ref{fig:split_example}, rooted at vertex 1. Endpoint control is applied at the underlined arborescences 10 and 16 to force the clique vertices to use their direct skeleton edges immediately before the host transitions of the independent vertices. }
\label{fig:split_gray_code_listing}
\end{figure}

Within each individual generator (host, skeleton, and routing), the traversal is a 1-Gray code, which translates to a pivot change. 
For both the routing generator and the skeleton generator, each step translates directly to a valid single-edge pivot change and introduces no cycles, thereby preserving the arborescence property.
The only potential threat to the arborescence property occurs at host block boundaries, when the outer host generator changes the parent of an independent vertex $  x \in I  $. 
If some clique vertex were using $  x  $ as a routing alias at that moment, multiple parent pointers would change simultaneously.
However, by Lemma~\ref{thm:split_kary}, the nested generators employ endpoint control to ensure that the routing configuration terminates in a state where no vertex $  u \in \overrightarrow{K}  $ uses $  x  $ as an alias ($  \alpha_u \neq x  $ for all $  u \in \overrightarrow{K}  $). 
Since the routing is safely using the direct skeleton edges, the host transition modifies only the single outgoing edge from $  x $. 
Thus, both internal transitions and boundary transitions correspond to a valid single pivot change.
Since the algorithm visits every spanning arborescence exactly once and every consecutive pair differs by a single pivot change, the resulting listing is a pivot Gray code.
\end{proof}

\begin{theorem} \label{thm:split_complexity}
Spanning arborescences of split graphs can be generated in a pivot Gray code order in constant amortized time per arborescence, utilizing $O(n^2)$ space when rooted at a vertex in the clique. 
\end{theorem}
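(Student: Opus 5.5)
Correctness is already settled by Theorem~\ref{thm:gensplit_gray}, so only two things remain: every arborescence costs constant amortized time, and the whole procedure fits in $O(n^2)$ space. The plan is to charge the work done by the three nested generators of {\sc GenSplitArb} (host, skeleton, routing) to the arborescences they produce. The key observation is that every skeleton $T_K$ produced by {\sc GenArb} is followed by a full run of {\sc GenRouting}, which outputs at least one arborescence. Likewise, every host $\rho$ is followed by a full run of {\sc GenArb}, which outputs at least one skeleton. So a constant amount of work per node of the nested recursion is paid for by at least one output.

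\textbf{Innermost layer.} The routing generator is a reflectable mixed-radix 1-Gray code in the sense of~\cite{li2009gray}, so it runs in constant amortized time per string using linear space. Each string gives one arborescence. Decoding is incremental: only one parent pointer changes, so decoding costs $O(1)$ per step.

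\textbf{Middle layer.} By Theorem~\ref{thm:k_cat}, the skeletons are generated in constant amortized time per skeleton, with the {\sc FindLeaf} cost amortized as in that proof. The extra endpoint-control logic of Algorithm~\ref{alg:genarb_endpoint} (the \emph{last} flag, the countdown $n_{\text{str}}$, the parity test) takes $O(1)$ per step by Lemma~\ref{lem:endpoint_cat}.

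The step I expect to be the main obstacle is {\sc GetRoutingMixedRadix}. It must compute $|A_\rho(u,v)|$ for the one vertex $u$ whose skeleton parent changed, and it must reposition that digit in the reflectable routing generator. I would handle this with an $|\overrightarrow{K}|\times|\overrightarrow{K}|$ table $R[v][u]$ listing the aliases $x$ with $u\to x\in E$ and $\rho_x=v$, stored together with its size and first/last entries. A skeleton pivot then changes only the radix of the single digit $\alpha_u$. We reset that digit to its first or last value, which takes $O(1)$, and it can be done without rebuilding the routing generator because a reflectable code lets one coordinate be re-initialized at an extreme. This table is what uses the $O(n^2)$ space (more precisely $O(|\overrightarrow{K}|^2+|E|)$ if alias lists are stored as linked lists over edges).

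\textbf{Outermost layer.} A host pivot $\rho_x: v\to v'$ moves $x$ from the lists $R[v][\cdot]$ to the lists $R[v'][\cdot]$, touching $\deg(x)$ entries. I would bound this cost by $O(n)$ and charge it to the host step. Each host step is followed by a complete skeleton block containing at least $|\overrightarrow{K}|^{|\overrightarrow{K}|-2}$ arborescences, which is $\ge |\overrightarrow{K}|$ once $|\overrightarrow{K}|\ge 3$. The degree of $x$ is below $|\overrightarrow{K}|$ because the clique is maximal. So this update cost is $O(1)$ amortized. Small cliques ($|\overrightarrow{K}|\le 2$) have bounded update cost anyway, so they can be handled directly. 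The look-ahead that identifies the next host pivot $x$ is just the next change position of the reflectable host Gray code, which is available in $O(1)$ amortized time.

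Adding the three layers gives constant amortized time per arborescence, and the total space is $O(n^2)$, dominated by the routing table $R$ and the $O(n^2)$ state of {\sc GenArb}.
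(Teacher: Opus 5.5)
Your proposal is correct and follows essentially the same argument as the paper: each of the three nested generators runs in constant amortized time, and the $O(|\overrightarrow{K}|)$ routing-table update at a host pivot is amortized over the $|\overrightarrow{K}|^{|\overrightarrow{K}|-2}$ skeletons of the following block. Endpoint control costs $O(1)$ by Lemma~\ref{lem:endpoint_cat}, and the space is $O(n^2)$, dominated by the alias table $R$. Your extra details, namely that a skeleton pivot changes the radix of only one routing digit and that $\deg(x)<|\overrightarrow{K}|\le|\overrightarrow{K}|^{|\overrightarrow{K}|-2}$ for $|\overrightarrow{K}|\ge 3$ by maximality of the clique, make explicit what the paper's proof leaves implicit.
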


\begin{proof}
By Theorem~\ref{thm:k_cat}, spanning arborescences of $  \overrightarrow{K}  $ can be generated in constant amortized time per arborescence using $  {O}(n^2)  $ space. 
The reflectable mixed-radix Gray codes used for the host assignments ($  \rho  $) and routing aliases ($  \alpha  $) can be generated in constant amortized time per string.
The only operation that may exceed constant time is thus the transition at the host level. 
When an independent vertex changes its host in $  \overrightarrow{K}  $, the algorithm updates the routing mapping tables (i.e., the available aliases for clique vertices). 
This update takes $  {O}(|\overrightarrow{K}|)  $ time by maintaining a global routing table $  R  $ throughout the execution, and updates only the aliases associated with the neighborhood of the moving vertex. 
However, once the update is complete, the inner skeleton generator produces at least $  |\overrightarrow{K}|^{|\overrightarrow{K}|-2}  $ spanning arborescences, which amortizes the hosting update time to constant time per generated arborescence.
The endpoint control mechanisms that include checking alias availability and computing the parity of the remaining skeleton arborescences can be performed in constant time by Lemma~\ref{lem:endpoint_cat}. 
Finally, for space complexity, the algorithm maintains adjacency mappings, parent pointers, and alias routing tables, which can all be maintained using at most $  {O}(n^2)  $ space. 
Therefore, the algorithm achieves constant amortized time per arborescence, using $  {O}(n^2)  $ space.
\end{proof}

\subsection{Spanning arborescences rooted in the independent set}

We now consider the case where the designated root $  r  $ belongs to the independent set $  I  $.
The key difference is that $  r  $ is adjacent only to a subset $  S = \{ s \in \overrightarrow{K} \mid (s \to r) \in E \} \subseteq \overrightarrow{K}  $ of the clique. 
In any valid spanning arborescence rooted at $  r  $, all paths to the root must pass through $  S  $. 
As such, the set of vertices in $  \overrightarrow{K}  $ that choose $  r  $ as their parent must form a non-empty subset $  R \subseteq S  $.
Once $  R  $ is fixed, the remaining vertices $  \overrightarrow{K} \setminus R  $ induce a spanning forest rooted at $  R  $. 
Since $  \overrightarrow{K}  $ is a clique, this subproblem reduces to the complete-digraph spanning arborescence generation framework.
We handle this structure by treating $  r  $ as the root of the skeleton arborescence $  T_K  $, while restricting its possible incoming edges. 
In the standard complete digraph setting, the first level string $  t_1  $ has length $  |\overrightarrow{K}|-1  $. 
When $  r \in I  $, we instead use a binary string $  t_1  $ of length $  |S|  $, where each bit indicates whether the corresponding vertex in $  S  $ points directly to $  r  $. 
To ensure connectivity, we similarly require that $  t_1 \neq 0^{|S|}  $.
Since $  r  $ belongs to the independent set, it requires no routing decision. 
After fixing $  t_1  $, the generation of the remaining skeleton edges within $  \overrightarrow{K}  $ and the parent assignments for the remaining independent vertices $  I \setminus \{r\}  $ proceeds identically to the clique-rooted split graph case.
Since the restricted $  t_1  $ sequence is generated using BRGC, all transitions correspond to single pivot changes.
As an example, the split graph in Figure~\ref{fig:split_example} can be decomposed into the three components (host, skeleton, and routing) as shown in Figure~\ref{fig:iroot} when rooted at the independent vertex 6.
A complete Python implementation of the updated algorithm {\sc GenSplitArb} for the case where the root lies in the independent set is provided in Appendix~E.

\begin{figure}[t]
\centering
\includegraphics[width=0.65\linewidth, trim={0 0 0 0}, clip]{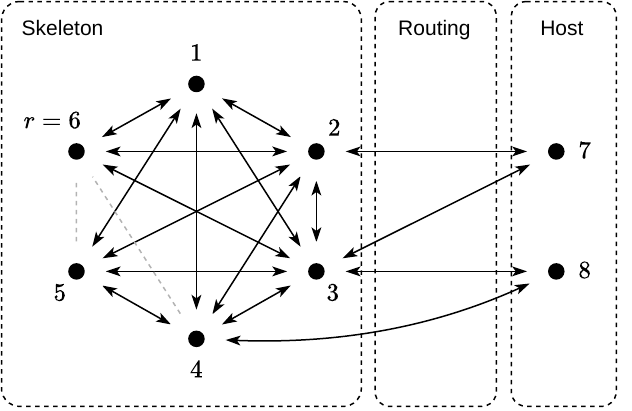}
\caption{The host, skeleton, and routing components of the split digraph from Figure~\ref{fig:split_example} when rooted at the independent vertex 6. Note that the independent vertex 6 is now treated as the root of the skeleton block. Since the root $6 \in I$ is non-adjacent to the clique vertices $4$ and $5$, the first-level connection string $  t_1  $ of the skeleton arborescence $  T_K  $ is restricted to the neighborhood $  S = \overrightarrow{K} \setminus \{4,5\}  $.}
\label{fig:iroot}
\end{figure}


\begin{theorem}\label{thm:gensplitI_gray}
The modified algorithm {\sc GenSplitArb}  generates a pivot Gray code for the spanning arborescences of a split graph $  \overrightarrow{G} = (\overrightarrow{K} \cup I, E)  $ rooted at a vertex $  r \in I  $.
\end{theorem}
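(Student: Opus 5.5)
We mirror the proof of Theorem~\ref{thm:gensplit_gray}. There are two parts: a bijection showing that every arborescence is listed exactly once, and a check that consecutive arborescences differ by one pivot, including across block boundaries.

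\textbf{Bijection.} First we adapt Lemma~\ref{lem:split_bijection} to the case $r \in I$. A spanning arborescence rooted at $r$ is encoded by a tuple $(r, \rho, T_K, \alpha)$ with the following parts:
\begin{itemize}
\item $\rho$ is the host map on $I \setminus \{r\}$.
\item $T_K$ is a skeleton arborescence on $\overrightarrow{K} \cup \{r\}$ rooted at $r$. Its first-level string $t_1$ is a nonzero binary string of length $|S|$, so the set $R \subseteq S$ of children of $r$ is nonempty. Below level one it is an $R$-forest on $\overrightarrow{K} \setminus R$, which is exactly the complete-digraph setting of Section~\ref{sec:pivot} with $R$ as the first level.
\item $\alpha$ is the routing. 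Vertices in $R$ have routing radix one, since $r$ cannot be reached through an alias.
\end{itemize}
Injectivity and surjectivity follow the earlier argument verbatim. The only new point concerns a vertex $u \in \overrightarrow{K}$ with parent $r$: the edge must be the direct edge $u \to r$, because any alias $x \in I$ is nonadjacent to $r \in I$. Hence routing is trivial on $R$. Exhaustiveness then follows from nested exhaustiveness, as in Theorem~\ref{thm:gensplit_gray}. BRGC on $t_1$ without $0^{|S|}$ covers every nonempty $R$. For each $t_1$, the deeper levels are covered by Theorem~\ref{thm:genarb_complexity}, the mixed-radix generators cover $\rho$ and $\alpha$, and these components are independent.

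\textbf{Gray property.} Each BRGC step on $t_1$ toggles whether one vertex $s \in S$ points to $r$. This is a case (a)/(c) or case (b) transition of {\sc GenArb} at level one. Under the leaf-attachment rule, $s$ either takes $r$ as parent or detaches and reattaches to a leaf of another path. Either way it is a single parent-pointer change, and it is acyclic by the argument of Theorem~\ref{thm:genarb_complexity}. Lemma~\ref{lem:constrained_k_ary} applies, since the initial $t_1$ has at most one nonzero digit. Deeper levels use the same linear-forest invariant, so Corollary~\ref{cor:genarb_linear} still holds: each skeleton block ends with a single directed path from $r$.

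\textbf{Host boundaries (main obstacle).} The difficulty is the endpoint control at a host change. When the host of some $x \in I \setminus \{r\}$ changes, we must ensure no clique vertex uses $x$ as an alias. We invoke Lemma~\ref{thm:split_kary}, whose hypothesis is that $T_K^{\mathrm{end}}$ is a single path; this holds by the corollary. The lemma's proof needs two adjustments.
\begin{enumerate}
\item If the host $p$ of $x$ has child $u$ in $T_K^{\mathrm{end}}$, we need $p \neq r$. This holds because $\rho_x \in \overrightarrow{K}$.
\item If $u$'s parent is $r$ itself, no alias is possible, so the claim is immediate.
\end{enumerate}
The parity computation of Lemma~\ref{lem:endpoint_cat} applies unchanged at levels $\ell \ge 2$, since every such level lies inside $\overrightarrow{K}$. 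The one change is at level one: when $p$ is the unique child of $r$, the BRGC there runs over length $|S|$ rather than $|\overrightarrow{K}|-1$. The remaining-count formula $2c^{c-3}+c^{c-2}$ then needs rechecking with the restricted first level; we expect the parity still reduces to a constant-time test.

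With these adjustments, every transition — host, skeleton, or routing — changes exactly one outgoing edge and never creates a cycle. The listing is therefore a pivot Gray code.
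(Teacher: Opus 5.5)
Your proposal is correct and follows essentially the same route as the paper: generate $t_1$ by BRGC over $\{0,1\}^{|S|}\setminus\{0^{|S|}\}$, get exhaustiveness from the nested generators, and note that endpoint control is untouched because $r$ can be neither a host nor an alias. The parity point you leave open is moot, and needs no rechecking: since $p=\rho_x\in\overrightarrow{K}$, the edge $u\to p$ is always set at a level $\ell\ge 2$, and once the \emph{last} flag holds $t_1$ is frozen, so the remaining count $2c^{c-3}+c^{c-2}$ involves only clique levels and Lemma~\ref{lem:endpoint_cat} applies verbatim (constant-time parity matters only for Theorem~\ref{thm:splitI_complexity} in any case).
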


\begin{proof}
Any spanning arborescence rooted at $  r \in I  $ must have a non-empty set of incoming edges from a subset $  R \subseteq S \subseteq \overrightarrow{K}  $, where $  S  $ is the set of neighbors of $  r  $ in the clique. 
The remaining vertices in $  \overrightarrow{K} \setminus R  $ induce a spanning forest rooted at $  R  $.
The modified algorithm {\sc GenSplitArb} thus captures this structure by replacing the standard first level string $  t_1 \in \{0,1\}^{|S|} \setminus \{0^{|S|}\}  $, where each bit indicates whether the corresponding vertex in $  S  $ points directly to $  r  $. 
Since the algorithm exhaustively enumerates all valid choices of $  t_1  $ and pairs them with the complete generation of the remaining skeleton forest on $  \overrightarrow{K} \setminus R  $ together with the host and routing assignments for $  I \setminus \{r\}  $, it produces every spanning arborescence exactly once.

Within the modified base level, $  t_1  $ is generated using BRGC excluding the all-zero string. 
Thus, consecutive configurations of $  t_1  $ differ by exactly one bit, corresponding to a single pivot change. 
This modification at the root level does not interfere with the endpoint control mechanisms used for the remaining independent vertices, since $  r  $ cannot serve as a host or routing alias.
The analysis of the rest of the host and routing transitions follows identically to that of Theorem~\ref{thm:gensplit_gray}. 
Therefore, consecutive arborescences in the generated sequence differ by exactly one pivot change, yielding a pivot Gray code.
\end{proof}

\begin{theorem} \label{thm:splitI_complexity}
Spanning arborescences of split graphs can be generated in a pivot Gray code order in constant amortized time per arborescence, utilizing $O(n^2)$ space when rooted at a vertex in the independent set. 
\end{theorem}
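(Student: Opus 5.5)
The argument follows the proof of Theorem~\ref{thm:split_complexity}: we charge each operation of the modified {\sc GenSplitArb} either to a constant-time step or to a block of arborescences that is at least as large as the operation's cost. The only structural change from the clique-rooted case is at the base of the skeleton generator. The first-level string $t_1$ is now a binary string of length $|S|$, generated by BRGC with $0^{|S|}$ removed. Everything below level $1$ is the forest-extension part of {\sc GenArb} on $\overrightarrow{K}$, with $R$ (the support of $t_1$) serving as the level-$1$ parents. So the plan has three steps.

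First, we show that one BRGC step on $t_1$ costs $O(1)$ amortized. By the reflectable framework of~\cite{li2009gray} and the proof of Lemma~\ref{lem:constrained_k_ary}, BRGC on a fixed length is constant amortized per string. Each string $t_1\neq 0^{|S|}$ is followed by a full recursive call that produces at least one skeleton extension, and hence at least one arborescence. The cost of producing $t_1$ is therefore charged to that output.

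Second, we reuse the analysis of Theorem~\ref{thm:k_cat} for levels $\ell\ge 2$. The only nonconstant operation there is {\sc FindLeaf} in case (b). Its $O(j)$ cost is paid by the at least $j$ arborescences generated at the $j$ deeper levels. This holds unchanged, because the recursion below level $1$ is literally {\sc GenArb} started from a linear forest. The linearity invariant of Lemma~\ref{lem:linear_invariant} still applies, since BRGC at level $1$ begins at a single-one string and changes one bit per step. The routing generator $\alpha$ and the host generator $\rho$ on $I\setminus\{r\}$ are reflectable mixed-radix Gray codes, so they are constant amortized per string.

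Third, we handle host transitions. Updating the routing table $R$ costs $O(|\overrightarrow{K}|)$, and each transition is followed by a full skeleton block. In the clique-rooted case the block had size $|\overrightarrow{K}|^{|\overrightarrow{K}|-2}$. Here we must bound the block size from below in a different way. This is the one point that needs care, because if $|S|$ is small the count of skeletons rooted through $S$ might look small. We instead count only the skeletons with $t_1$ a single bit $s\in S$, whose remaining structure is an arborescence of $\overrightarrow{K}$ rooted at $s$. This gives at least $|\overrightarrow{K}|^{|\overrightarrow{K}|-2}\ge |\overrightarrow{K}|$ skeletons once $|\overrightarrow{K}|\ge 3$, and the small cases are constant-size. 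The $O(|\overrightarrow{K}|)$ update is therefore amortized to $O(1)$.

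Endpoint control still runs in constant time by Lemma~\ref{lem:endpoint_cat}. Its parity argument concerns only the deepest levels, where the final skeleton is a single path ending in strings of the form $0^{c-1}1$, and the restricted first level does not affect those levels. Space stays $O(n^2)$: the adjacency lists, parent pointers, the routing table $R$ and the $O(n)$ level strings are the same structures as before, plus an $O(|S|)$ bit string.
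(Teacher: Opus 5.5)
Your proposal is correct and takes essentially the same route as the paper. The paper's proof just observes that the restricted base string $t_1$ is generated by BRGC in constant amortized time, and that the rest of the algorithm runs exactly as in the clique-rooted case of Theorem~\ref{thm:split_complexity}. Your additional checks fill in details the paper leaves implicit: that each host update is still followed by at least $|\overrightarrow{K}|^{|\overrightarrow{K}|-2}$ skeletons (by restricting to single-bit $t_1$), and that endpoint control is unaffected by the restricted first level. The one item you skip, the {\sc FindLeaf} call on a $1\to 0$ flip of $t_1$ at level $1$, is amortized by the same argument as in Theorem~\ref{thm:k_cat}.
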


\begin{proof}
The modified base level $  t_1  $, which encodes the subset of directed edges from the neighborhood $  S \subseteq \overrightarrow{K}  $ to the root $  r \in I  $, is generated using BRGC which can be done in constant amortized time per string.
The remainder of the algorithm operates identically to the clique-rooted case and thus the algorithm generates each spanning arborescence in constant amortized time, using $  {O}(n^2)  $ space.
\end{proof}

Together, Theorems~\ref{thm:gensplit_gray}, \ref{thm:split_complexity}, \ref{thm:gensplitI_gray}, and \ref{thm:splitI_complexity} establish the following result.
\begin{theorem}\label{thm:split}
Spanning arborescences of split graphs can be generated in a pivot Gray code order in constant amortized time per arborescence, using $  O(n^2)  $ space.
\end{theorem}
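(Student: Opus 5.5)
The statement is the union of the two root cases, so the plan is a case split on the location of the designated root $r$ in the partition $V = \overrightarrow{K} \cup I$. Since the clique and independent set partition the vertex set, every root falls into exactly one case. For each case we will cite the matching correctness theorem and the matching complexity theorem. No new construction is needed. The proof is an assembly of results already established in this section.

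First, suppose $r \in \overrightarrow{K}$. Theorem~\ref{thm:gensplit_gray} says that {\sc GenSplitArb} lists every spanning arborescence rooted at $r$ exactly once, with consecutive arborescences differing by one directed pivot. This relies on the bijection of Lemma~\ref{lem:split_bijection} and on the endpoint control of Lemma~\ref{thm:split_kary}. Theorem~\ref{thm:split_complexity} then gives constant amortized time per arborescence and $O(n^2)$ space. Here we should check one hypothesis: Lemma~\ref{thm:split_kary} needs the final skeleton of each host block to be a single directed path. This holds because each host block runs {\sc GenArb} to completion, and by Corollary~\ref{cor:genarb_linear} {\sc GenArb} always ends on a linear arborescence.

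Second, suppose $r \in I$. The modified {\sc GenSplitArb} restricts the first-level string to $t_1 \in \{0,1\}^{|S|}\setminus\{0^{|S|}\}$, where $S$ is the set of clique neighbours of $r$. By Theorem~\ref{thm:gensplitI_gray} this produces a pivot Gray code. By Theorem~\ref{thm:splitI_complexity} it runs in constant amortized time using $O(n^2)$ space.

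The main point to watch is uniformity. The complexity bounds in the two cases must refer to the same parameters, namely $n=|V|$ together with the standing assumptions that $\overrightarrow{K}$ is a maximal clique and the split graph is connected. We must also make sure that choosing which case to run costs only $O(1)$ extra, once per listing. Both conditions are immediate: locating $r$ is a single membership test, and the standing assumptions are shared by all the cited theorems. Combining the two cases then proves that for every root the spanning arborescences of a split digraph can be generated in pivot Gray code order in constant amortized time per arborescence using $O(n^2)$ space.
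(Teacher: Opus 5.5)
Your proposal is correct and follows the paper's argument exactly: the paper obtains Theorem~\ref{thm:split} by combining Theorems~\ref{thm:gensplit_gray} and~\ref{thm:split_complexity} (root in $\overrightarrow{K}$) with Theorems~\ref{thm:gensplitI_gray} and~\ref{thm:splitI_complexity} (root in $I$). Your extra check via Corollary~\ref{cor:genarb_linear}, that each host block's final skeleton is a single directed path as Lemma~\ref{thm:split_kary} requires, is a reasonable sanity check that the paper leaves implicit.
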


\section{Spanning arborescences of proper interval digraphs}\label{sec:properinterval}
It is worth noting that our host–skeleton–routing framework is not restricted to split digraphs. 
In fact, it is quite versatile and can potentially be extended to broader classes of digraphs whose structure can be decomposed into connections of cliques.
We demonstrate the power of this framework by extending it to generate pivot Gray codes for spanning arborescences of proper interval digraphs rooted at the leftmost interval.

A \emph{proper interval graph} is the intersection graph of a family of intervals on the real line such that no interval properly contains another. 
The symmetric \emph{proper interval digraph} is obtained by replacing each undirected edge with a pair of symmetric directed edges. 
By Roberts' theorem~\cite{roberts1969indifference}, proper interval graphs are equivalent to \emph{unit interval graphs} (where all intervals have equal length) and \emph{indifference graphs}. 
Due to their highly regular structure, proper interval graphs play a central role in numerous applications, including chronobiological time-series analysis, task scheduling with uniform processing times, and physical mapping of DNA sequences in bioinformatics~\cite{Benzer1959, Golumbic1980}.
Proper interval graphs admit a canonical \emph{layered partition} obtained via breadth-first search (BFS). 
Given a proper interval representation, let $  r  $ be the interval with the leftmost left endpoint (which also has the leftmost right endpoint). We designate $  r  $ as the root. The vertex set can then be uniquely partitioned into BFS layers $  V_0, V_1, \dots, V_m  $ with respect to $  r  $, where $  V_0 = \{r\}  $ and, for $  i \geq 1  $,
$$V_i = \{ v \in V \mid \mathrm{dist}(r, v) = i \},$$
with $  \mathrm{dist}(r, v)  $ denoting the length of the shortest path from $  r  $ to $  v  $ in $  G  $.
These layers satisfy the following properties:
\begin{enumerate}
\item Each $  V_i  $ forms a clique;
\item There are no edges between non-consecutive layers (i.e., edges exist only within the same layer or between adjacent layers);
\item For each $  i \geq 1  $, there exists a vertex $  d_i \in V_{i-1}  $ adjacent to every vertex in $  V_i  $.
\end{enumerate}
This layering respects the natural left-to-right order of the intervals and satisfies a strong local domination property: for each $  1 \leq i \leq m  $, the vertex $  d_i \in V_{i-1}  $ dominates $  V_i  $ (i.e., $  V_i \subseteq N(d_i)  $), and the set $  \{d_i\} \cup V_i  $ induces a clique.
As an example, Figure~\ref{fig:pig_example} shows a proper interval digraph on 11 vertices, together with its interval representation and its layer partition.

\begin{figure}[t]
\centering
\includegraphics[width=.8\linewidth, trim={0 0 0 0}, clip]{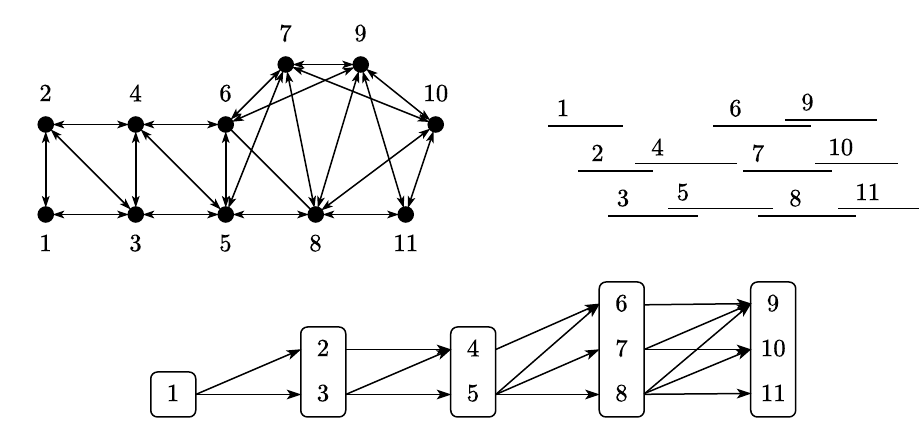}
\caption{A proper interval digraph with an interval representation and the associated layer partition.}
\label{fig:pig_example}
\end{figure}





We construct a pivot Gray code for spanning arborescences of proper interval digraphs rooted at $r \in V_0$ by recursively applying our host–skeleton–routing framework in an inductive, layer-by-layer manner.
Let $  \overrightarrow{G}_i  $ denote the induced subgraph of a proper interval digraph $  \overrightarrow{G}  $ on the vertices in layers $  V_0, V_1, \dots, V_i  $. 
Given a pivot Gray code for the spanning arborescences of $  \overrightarrow{G}_{i-1}  $, we construct the pivot Gray code for $  \overrightarrow{G}_i  $ by treating the previously solved subgraph $  \overrightarrow{G}_{i-1}  $ as the skeleton and the newly appended layer $  V_i  $ as the host. 
We show that every spanning arborescence of $  \overrightarrow{G}_i  $ rooted at $  r \in V_0  $ can be uniquely represented by the tuple $  S_i = (r, T_H, T_{i-1}, \alpha)  $.
\paragraph{1. Host ($  T_H  $):}
The \emph{host} mapping $  T_H  $ determines the parent assignments for all vertices in the newly added layer $  V_i  $. 
Unlike the independent set in a split digraph, the vertices of $  V_i  $ induce a clique. 
For any valid spanning arborescence of $  G_i  $ rooted at $  r  $, every vertex in $  V_i  $ must eventually reach the root via the predecessor layer $  V_{i-1}  $.
To achieve this, the vertices in $  V_i  $ must choose parents such that they form a spanning forest rooted at some non-empty subset $  R \subseteq V_{i-1}  $. 
Note that this structure differs slightly from the standard $  R  $-forest defined on a single vertex set, as the target roots $  R  $ lie outside $  V_i  $ and a vertex in $V_i$ is not necessrily to connect to all vertices in $V_{i-1}$. 
In contrast to split digraphs, a vertex in $  V_i  $ need not connect directly to $  V_{i-1}  $; instead, it may connect internally within the clique $  V_i  $, with only the local roots of this forest making the jump to $  V_{i-1}  $. We refer to this family of directed trees spanning $  V_i  $ as the host forest $  T_H  $.
\paragraph{2. Skeleton ($  T_{i-1}  $):}
The \emph{skeleton} $  T_{i-1}  $ is a spanning arborescence of the subproblem $  \overrightarrow{G}_{i-1}  $ (the proper interval digraph induced by layers $  V_0  $ through $  V_{i-1}  $). 
Since $  \overrightarrow{G}_{i-1}  $ is itself a proper interval digraph of smaller depth, its spanning arborescences are generated recursively by the algorithm from the previous step. The skeleton provides the global routing paths to the root $  r  $ for all vertices in the deeper layers.
\paragraph{3. Routing ($  \alpha  $):}
The \emph{routing} $  \alpha  $ determines whether a vertex $  u \in V_{i-1}  $ connects directly to its skeleton parent in $  \overrightarrow{G}_{i-1}  $, or utilizes a vertex $  v \in V_i  $ as an alias to route its path through the new layer. 
Since $  V_i  $ induces a clique with possible internal connections, this alias path may traverse multiple vertices within $  T_H  $ before returning to the skeleton.

\begin{lemma}\label{lem:pig_bijection_i}
Let $\overrightarrow{G}_i$ be the induced proper interval subgraph containing layers $V_0, V_1, \dots, V_i$ with root $r \in V_0$. There is a bijection between the spanning arborescences of $\overrightarrow{G}_i$ rooted at $r$ and the set of valid configuration tuples $S_i = (r, T_H, T_{i-1}, \alpha)$.
\end{lemma}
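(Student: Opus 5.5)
The plan mirrors the proof of Lemma~\ref{lem:split_bijection}: build an explicit map from arborescences to tuples, then an inverse map, and check that the two compose to the identity. The one new feature is that alias paths may now wind through several vertices of the clique $V_i$ before re-entering $V_{i-1}$. Throughout I use two structural facts about the layering. First, edges join only vertices in the same layer or in consecutive layers. Second, every path from $V_i$ to $r$ must pass through $V_{i-1}$.

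\textbf{(Arborescence $\to$ Tuple).} Let $T$ be a spanning arborescence of $\overrightarrow{G}_i$ rooted at $r$.

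\begin{itemize}
\item \emph{Host.} Restrict $T$ to the out-edges of the vertices of $V_i$. Each $v\in V_i$ points either into $V_i$ or into $V_{i-1}$. Following parent pointers from $v$ inside $V_i$ cannot cycle, because $T$ is acyclic. So the walk leaves $V_i$ at a unique first vertex $\rho(v)\in V_{i-1}$. Hence these edges form the host forest $T_H$: a spanning forest of $V_i$ whose local roots jump to a nonempty set $R\subseteq V_{i-1}$. The set $R$ is nonempty because $V_i$ must reach $r$.
\item \emph{Skeleton and routing.} For $u\in V_0\cup\cdots\cup V_{i-1}$, $u\neq r$, look at its out-edge. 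If it stays in $\overrightarrow{G}_{i-1}$, the skeleton edge is that edge and $\alpha_u=0$. If it goes to some $v\in V_i$ (which forces $u\in V_{i-1}$), the skeleton edge is $u\to\rho(v)$ and $\alpha_u$ records $v$, the entry point of the alias path.
\end{itemize}

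The resulting $T_{i-1}$ gives each non-root vertex of $\overrightarrow{G}_{i-1}$ exactly one out-edge. It is acyclic because any cycle in $T_{i-1}$ lifts, by expanding alias edges into their host paths, to a closed walk in $T$. So $T_{i-1}$ is a spanning arborescence of $\overrightarrow{G}_{i-1}$. A small point to verify: $\rho(v)$ is adjacent to the last host vertex on the path but not necessarily to $u$. The skeleton edge $u\to\rho(v)$ is still a legal edge of $\overrightarrow{G}_{i-1}$, because $u$ and $\rho(v)$ both lie in the clique $V_{i-1}$, or $u=\rho(v)$. The case $u=\rho(v)$ would make $T$ cyclic, so it cannot occur. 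The valid aliases are therefore exactly
$$A(u,w)=\{v\in V_i \mid (u\to v)\in E,\ \rho(v)=w\}.$$

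\textbf{(Tuple $\to$ Arborescence).} Given a valid tuple, place the host edges of $T_H$ on $V_i$. For each $u$, place either its skeleton edge (if $\alpha_u=0$) or the edge $u\to v$ for the chosen alias $v\in A(u,w)$. Counting gives one out-edge per non-root vertex, so $|V(\overrightarrow{G}_i)|-1$ edges in total.

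To show every vertex reaches $r$:
\begin{itemize}
\item A vertex of $V_i$ follows $T_H$ to some vertex of $R\subseteq V_{i-1}$.
\item A vertex of $\overrightarrow{G}_{i-1}$ follows its skeleton path, replacing each alias edge $u\to v$ by the detour $u\to v\to\cdots\to w$ through $T_H$. The detour lands exactly at the skeleton parent $w$, by the definition of $A(u,w)$.
\end{itemize}
Acyclicity follows as before: a directed cycle would either lie inside $V_i$, contradicting that $T_H$ is a forest, or project to a cycle of $T_{i-1}$. Injectivity is immediate, since the tuple determines every out-edge. The two constructions are mutually inverse by inspection.

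\textbf{Expected obstacle.} The main difficulty is pinning down what \emph{valid} means for $\alpha$. This requires that $A(u,w)$ be defined with respect to the whole host tree of $v$, not merely $v$'s immediate parent. In the forward direction, one must check that $\rho(v)\neq u$ and that $\rho(v)$ really is a skeleton-admissible parent of $u$. The first of these is where acyclicity of $T$ is used in an essential way.
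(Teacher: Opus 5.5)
Your proposal is correct and follows the paper's route. Like the paper, you adapt the proof of Lemma~\ref{lem:split_bijection}: the host map is replaced by the host forest $T_H$, and $\alpha_u$ names the entry vertex of the unique $T_H$-path that leads back to the skeleton parent. Your version is more explicit, spelling out details the paper's sketch leaves implicit: that $u\to\rho(v)$ is a legal edge inside the clique $V_{i-1}$, that $\rho(v)\neq u$ follows from acyclicity of $T$, and the lifting argument showing $T_{i-1}$ is acyclic.
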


\begin{proof}
The proof follows that of Lemma~\ref{lem:split_bijection}, with the necessary adaptations to the layered structure of proper interval graphs. 
We replace the host mapping $  \rho  $ for split digraphs with a rooted spanning forest $  T_H  $ on $  V_i  $ whose roots lie in $  V_{i-1}  $.
Furthermore, instead of single-edge aliases as in split digraphs, each routing choice $  \alpha_u  $ (for $  u \in V_{i-1}  $) now indexes a vertex in $  V_i  $ that serves as the entry point of a directed path in $  T_H  $ terminating at the skeleton parent in $  V_{i-1}  $. 
Such a path is unique given $  T_H  $ is a forest.
The acyclicity, connectivity, and uniqueness arguments from Lemma~\ref{lem:split_bijection} carry over directly under this host–skeleton–routing definition, establishing the bijection.
\end{proof}

\begin{lemma}
Let $\overrightarrow{G}_i$ be the proper interval subgraph restricted to layers $V_0, V_1, \dots, V_i$. There is a bijection between the spanning arborescences of $\overrightarrow{G}_i$ rooted at $r \in V_0$ and the sequence of tuples $(S_1, S_2, \dots, S_i)$, where each $S_k = (r, T_{H}, T_{k-1},  \alpha_k)$ denotes the skeleton, host forest and routing aliases at layer $V_k$.
\end{lemma}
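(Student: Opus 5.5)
We prove the statement by induction on $i$, unrolling Lemma~\ref{lem:pig_bijection_i} one layer at a time. The base case is $i=1$. Here $\overrightarrow{G}_0$ is the single vertex $r$, so the skeleton $T_0$ is trivial and no vertex of $V_0\setminus\{r\}$ needs routing; thus $\alpha_1$ is empty. The host forest $T_H$ on $V_1$ must then be a spanning forest of the clique $V_1$ whose roots point to $r$. This is exactly a spanning arborescence of the clique $\{r\}\cup V_1$ rooted at $r$, and Lemma~\ref{lem:pig_bijection_i} with $i=1$ gives the bijection with the one-term sequence $(S_1)$.

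For the inductive step, assume spanning arborescences of $\overrightarrow{G}_{i-1}$ rooted at $r$ are in bijection with sequences $(S_1,\dots,S_{i-1})$. By Lemma~\ref{lem:pig_bijection_i}, spanning arborescences of $\overrightarrow{G}_i$ are in bijection with tuples $S_i=(r,T_H,T_{i-1},\alpha_i)$, where $T_{i-1}$ ranges over all spanning arborescences of $\overrightarrow{G}_{i-1}$. We substitute for $T_{i-1}$ its unique code $(S_1,\dots,S_{i-1})$ from the induction hypothesis. The composite map $T\mapsto(S_1,\dots,S_{i-1},S_i)$ is then a composition of bijections, and hence a bijection.

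The one point requiring care is that the set of valid $(T_H,\alpha_i)$ depends on $T_{i-1}$. The routing alphabet for $u\in V_{i-1}$ depends on $u$'s skeleton parent and on $T_H$, so the target set is a dependent (fibred) product, not a Cartesian product. We therefore define the valid sequences recursively: $(S_1,\dots,S_i)$ is valid iff $(S_1,\dots,S_{i-1})$ is valid and $S_i$ is a valid tuple relative to the skeleton $T_{i-1}$ decoded from that prefix. Lemma~\ref{lem:pig_bijection_i} gives a bijection on each fibre over a fixed $T_{i-1}$, and the induction hypothesis gives a bijection on the base. Together these yield the bijection on the total space.

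The main obstacle is making sure that the skeleton component $T_{k-1}$ recorded inside $S_k$ agrees with the arborescence decoded from $(S_1,\dots,S_{k-1})$. This is needed for the redundancy in the notation to be harmless. Concretely, we check that decoding $S_{k-1}$ produces exactly the arborescence used as the skeleton in $S_k$, and that no edges of $S_k$ inside $\overrightarrow{G}_{k-1}$ are altered except through routing. The check relies on two facts. First, routing only redirects the out-edge of vertices $u\in V_{k-1}$ into $V_k$ and back, and never changes edges among $V_0,\dots,V_{k-2}$. Second, edges exist only between consecutive layers. Together these show that the restriction of the decoded arborescence, after contracting alias paths, recovers $T_{k-1}$ uniquely, so the code is well defined and injective.
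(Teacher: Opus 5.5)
Your proof is correct and follows the paper's argument: induction on $i$, applying Lemma~\ref{lem:pig_bijection_i} at layer $i$, and substituting the inductive code $(S_1,\dots,S_{i-1})$ for the skeleton $T_{i-1}$ (the paper starts from the trivial base $\overrightarrow{G}_0=\{r\}$ rather than $i=1$). Your remarks on the dependent (fibred) product, and on requiring the redundant skeleton component inside each $S_k$ to agree with the arborescence decoded from the prefix, spell out what the paper leaves implicit.
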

\begin{proof}
We proceed by induction on $i$. The base graph $\overrightarrow{G}_0 = (V_0 = \{r\})$ possesses only the empty arborescence. 
Assume by induction that the spanning arborescences of $\overrightarrow{G}_{i-1}$ are in bijection with the nested configurations $(S_1, S_2, \dots, S_{i-1})$. By Lemma~\ref{lem:pig_bijection_i}, the spanning arborescences of $G_i$ are in strict bijection with the tuples $(r, T_{H,i}, T_{i-1}, \alpha_i)$. Substituting the inductive assumption for the skeleton $T_{i-1}$ yields the bijection for $(S_1, S_2, \dots, S_i)$. 
\end{proof}

\begin{figure}[t]
\centering
\includegraphics[width=0.8\linewidth, trim={0 0 0 0}, clip]{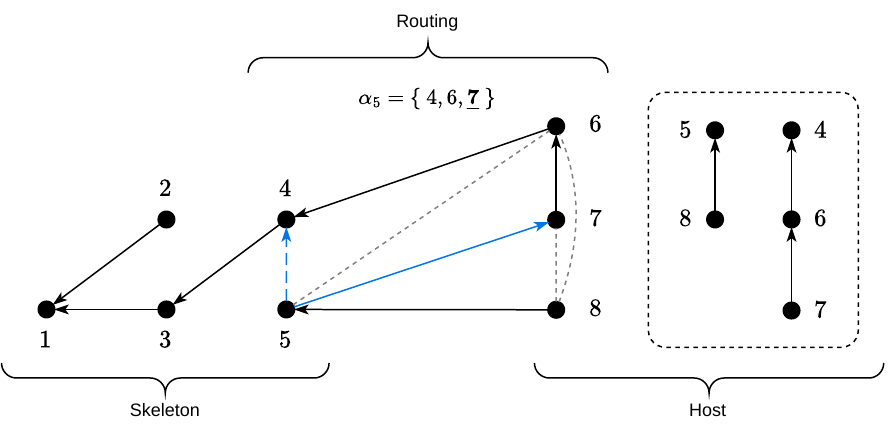}
    \caption{A spanning arborescence for the first four layers in Figure~\ref{fig:pig_example} represented by the tuple $  S_3 = (1, (101, 1), T_2, 02)  $.}
\label{fig:pig_hsr}
\end{figure}

\paragraph{Example of the tuple $S_i$ encoding (Figure \ref{fig:pig_hsr}):} 

Consider a proper interval digraph formed by the first 4 layers in Figure~\ref{fig:pig_example}. The spanning arborescence can be represented by the tuple $  S_3 = (1, (101, 1), T_2, 02)  $ where the first three layers form the skeleton $T_2$.
\begin{itemize}
    \item \textbf{Root:} $  r = 1  $.
    \item \textbf{Host ($  T_H = (101, 1)  $):} The host forms a forest in which vertices 4 and 5 in $  V_2  $ act as roots. Vertex 6 connects to the first available root in $  V_2  $ (vertex 4), and vertex 8 connects to the only available root (vertex 5). Vertex 7 then connects to the first available parent in $  V_3  $ (vertex 6).
    \item \textbf{Skeleton ($  T_{i-1} = T_2  $):} The skeleton is a spanning arborescence of $\overrightarrow{G}_2$.
    \item \textbf{Routing $ \alpha = 02 $:} Only vertex 5 in $V_2$ uses an alias path ($\alpha_5 = 7$ with its available routing options $\{4, 6, 7\}$). 
\end{itemize}

\vspace{1em}
We first prove the following lemmas for the pivot Gray code for the host generator. 

\begin{restatable}{lemma}{mixedradixconstrainted} \label{lem:mr_1gc}
    There exists a 1-Gray code for the mixed radix strings of length $m$ excluding the all-zero string $0^m$, starting from a legal string $0^ix0^{m-i-1}$, and ending at a legal string of the form $0^jy0^{m-j-1}$ where $i, j \in \{0, 1, \dots, m-1\}$, $i\neq j$, and for some $x, y > 0$.
\end{restatable}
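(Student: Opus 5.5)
We would prove Lemma~\ref{lem:mr_1gc} by an explicit recursive construction on the length $m$, in the spirit of the reflectable framework of Li and Sawada~\cite{li2009gray} already used for Lemma~\ref{lem:constrained_k_ary}. Let the radices be $k_1,\dots,k_m$, so that position $j$ takes values in $\{0,\dots,k_j-1\}$. Every position with $k_j=1$ is constantly zero and can be deleted, so we assume $k_j\ge 2$ for all $j$. We also assume $m\ge 2$: for $m=1$ the requirement $i\neq j$ cannot be met, and the statement is only meaningful when at least two positions are non-trivial. The starting string $0^ix0^{m-i-1}$ has its single non-zero digit at position $i$. The plan is to build the listing in two phases.

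\textbf{Phase one.} Cycle position $i$ through all its non-zero values starting from $x$. This is possible because the strings $0^iz0^{m-i-1}$ with $z>0$ all differ in one symbol from one another, so they form a clique in the 1-change graph and can be visited in any order.

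\textbf{Phase two.} Enumerate all strings that are non-zero somewhere outside position $i$. We write each such string as the pair $(z,w)$, where $z$ is the digit at position $i$ and $w\neq 0^{m-1}$ is the string formed by the other positions. We take a reflected mixed-radix Gray code $W$ on $w$ that starts at $0^{m-1}$, and split it as $0^{m-1},w_1,w_2,\dots,w_N$. Alongside it we run a boustrophedon (snake) order on the digit $z$, with $z$ ranging over all of $\{0,\dots,k_i-1\}$. We begin at the current value of $z$ (the last value visited in phase one) and move from $(z,0^{m-1})$ into $(z,w_1)$. From then on, each time $w$ advances we sweep $z$ fully through its range, alternating direction. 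The standard product-of-Gray-codes argument then shows that every pair $(z,w)$ with $w\ne 0^{m-1}$ is visited exactly once using single-symbol changes.

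The delicate point is the endpoint. We need the final string to be $0^jy0^{m-j-1}$ with $j\neq i$, which means the last $z$-sweep must end at $z=0$ and $w_N$ must have exactly one non-zero digit. The second condition holds automatically: a reflected mixed-radix code starting from $0^{m-1}$ ends at a string of the form $0^{a}y0^{b}$, because the last string changes only in the first-level coordinate. The first condition is governed by parity, as in Lemma~\ref{thm:split_kary}. The direction of the final $z$-sweep is fixed by the starting value of $z$ and by the parity of $N$. The freedom we use to correct it is the order of phase one: we can arrange for phase one to end either at $z=k_i-1$ or at $z=1$, so that the first sweep direction is whichever makes the last sweep end at $0$.

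A subtlety arises in two cases: when $z$ must start at a value that does not allow a monotone sweep, or when $k_i=2$, so that phase one is just the single string $x=1$ and offers no freedom. In these cases the plan is to use the freedom on the $w$ side instead. We can choose which coordinate is placed at the top level of the reflected recursion, or reverse the orientation of one coordinate's value order. Either choice changes the parity or the final position $j$ while keeping $j\neq i$. A short case analysis on whether $k_i=2$ and on the parity of $N$ then closes the argument.

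The main obstacle is exactly this parity and endpoint bookkeeping. Visiting all strings is routine; simultaneously forcing $z=0$ at the end and $j\neq i$ requires care. If no fixed rule works, the fallback is the approach the paper used for Lemma~\ref{lem:constrained_k_ary}: dynamically modify the last block of the recursion so that it terminates at a single-non-zero string.
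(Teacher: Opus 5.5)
There is a genuine gap, and it is in the case you defer to ``a short case analysis'': $k_i=2$ with every other radix odd. In your phase two, each $w_t$ occupies one contiguous block in which $z$ takes all of its values. When $k_i=2$, such a block is forced to be $(z_{t-1},w_t),(1-z_{t-1},w_t)$. So $z$ toggles from block to block, starting from $z_0=x=1$, and the last string has $z_N=1$ whenever $N=\prod_{j\neq i}k_j-1$ is even. $N$ is just the number of non-zero strings on the other positions, so none of your proposed $w$-side choices (which coordinate goes on top, reversing a coordinate's value order) can change it. If all $k_j$ with $j\neq i$ are odd, your listing therefore ends at $(1,w_N)$, which has two non-zero digits. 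For radices $(2,3)$ and start $10$, your scheme gives $10,11,01,02,12$ or $10,12,02,01,11$, whereas $10,11,12,02,01$ is a valid code.

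The missing idea is to reverse the nesting. Make position $i$ the outer coordinate, and visit its values in any order that starts at $x$ and ends at $0$. For successive values, traverse the remaining positions alternately by a 1-Gray code $L$ starting at $0^{m-1}$ and by its reverse. When $k_i$ is even, the final block is reversed and ends at $0^m$. Deleting $0^m$ leaves a last string $(0,w')$, where $w'$ is adjacent to $0^{m-1}$ and so has a single non-zero digit at some $j\neq i$. This is the device in the paper's proof: put even radices at the top of the reflectable recursion, start the top level at a non-zero value and end it at $0$, let the full code (including $0^m$) terminate at $0^m$ or pass through it between two mutually adjacent strings, and then delete $0^m$. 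Your fallback of ``dynamically modifying the last block'' does not supply this.

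There are two further errors, both repairable.

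First, a reflected mixed-radix code starting at $0^{m-1}$ does not automatically end at a string with one non-zero digit. For radices $(3,3)$ it is $00,01,02,12,11,10,20,21,22$. When all radices are odd, neither of your tweaks helps, because the last top-level block is then a forward copy, which never ends at $0^{m-2}$. You have to use the fact that a 1-Gray code may change a digit to any value, e.g.\ $00,01,02,12,10,11,21,22,20$.

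Second, that same freedom, which you already use in phase one, lets each $z$-sweep be an arbitrary Hamiltonian path in a clique. So for $k_i\geq 3$ there is no monotone snake and no parity to track: end every intermediate sweep at a non-zero value and the last one at $0$. (Your ``end phase one at $z=1$'' option is incoherent for monotone sweeps anyway.) With these repairs, your construction covers $k_i\geq 3$, and the reversed nesting covers every even $k_i$, in particular $k_i=2$. Together they would give a complete proof.
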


\begin{proof}
Such a Gray code can be constructed by modifying the standard reflectable Gray code framework of Li and Sawada~\cite{li2009gray}. 
We first apply a permutation to the radices, moving those with even values to the top levels of the recursive generation tree and relegating those with odd values to the bottom. 
A detailed proof can be found in Appendix~A.
\end{proof}

\begin{lemma}
If all vertices in $  V_i  $ form a linear arborescence with one root in the first host configuration, then they also form a linear arborescence with one root in the last host configuration.
\end{lemma}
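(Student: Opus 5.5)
We interpret the host generator for layer $V_i$ as the same level-by-level recursion used by {\sc GenArb}, with one change. The root layer is no longer a single vertex. It is the set $V_{i-1}$, restricted to those vertices adjacent to each vertex of $V_i$. Concretely, the host forest $T_H$ is encoded by a tuple $(t_1, t_2, \dots)$. The first string $t_1$ is a mixed-radix string recording, for each vertex of $V_i$, which neighbor in $V_{i-1}$ it points to (digit $0$ meaning ``not at this level''). The deeper strings $t_2, t_3, \dots$ are $k$-ary strings exactly as in Section~\ref{sec:pivot}, because $V_i$ is a clique. The hypothesis says the first configuration is a linear arborescence with one root. In the encoding, this means $t_1$ has the form $0^i x 0^{m-i-1}$ and every deeper string has exactly one non-zero digit.

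The proof then splits by level.

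\textbf{First level.} The string $t_1$ is enumerated by the 1-Gray code of Lemma~\ref{lem:mr_1gc}. Its starting string is legal of the form $0^i x 0^{m-i-1}$, which matches the hypothesis. So by that lemma the final $t_1$ has the form $0^j y 0^{m-j-1}$. Thus in the last host configuration exactly one vertex of $V_i$ jumps to $V_{i-1}$, and the forest has a single root in $V_i$.

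\textbf{Deeper levels.} Below $t_1$, the remaining vertices of $V_i$ are arranged by the complete-digraph recursion on the clique $V_i$. Each recursive call at a deeper level starts from a string whose non-zero digits are distinct; indeed it has a single non-zero digit, by the initial linear hypothesis or by the termination of the previous block. Hence Lemma~\ref{lem:constrained_k_ary} applies, and every deeper level terminates at a string $0^a y 0^b$. By Lemma~\ref{lem:linear_invariant} and the argument of Corollary~\ref{cor:genarb_linear}, the forest below each level is linear, and it collapses to a single directed path at termination. Combined with the single root from $t_1$, the final host configuration is one directed path into one vertex of $V_{i-1}$, i.e.\ a linear arborescence with one root.

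\textbf{Main obstacle.} The delicate step is the outermost level. We must check that the constraint structure of $t_1$ is genuinely a mixed-radix 1-Gray code excluding $0^m$. The radices are $|N(v)\cap V_{i-1}|+1$, which may differ from vertex to vertex; this is why Lemma~\ref{lem:mr_1gc} rather than Lemma~\ref{lem:constrained_k_ary} is needed.

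We must also check that the recursion for deeper levels restarts from a single-nonzero string after each change of $t_1$. The first thing to verify is that case (b)-type moves (attaching to a leaf) preserve linearity, as in Theorem~\ref{thm:genarb_complexity}. Then the start of each deeper block is a single path, so the distinct-digit hypothesis of Lemma~\ref{lem:constrained_k_ary} holds throughout.

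Finally, an induction on the depth of the host recursion carries the ``starts linear, ends linear'' property from the innermost level outward, which yields the statement.
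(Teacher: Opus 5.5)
Your proposal is correct and is essentially the paper's argument: the paper's proof simply cites Lemma~\ref{lem:mr_1gc} for the first (mixed-radix) level from $V_i$ into $V_{i-1}$ and Lemma~\ref{lem:constrained_k_ary} for the deeper within-clique levels, which is exactly your decomposition. One correction: a deeper recursive call need not start from a single-nonzero string, since a case (c) move at a higher level can split the path and leave two distinct non-zero digits at the next level; this does not break your argument, because Lemma~\ref{lem:constrained_k_ary} only needs distinct non-zero digits, and the linear-forest invariant you cite (Lemma~\ref{lem:linear_invariant} with leaf attachment) already guarantees that.
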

\begin{proof}
    The result follows immediately from Lemma \ref{lem:constrained_k_ary} and Lemma \ref{lem:mr_1gc}.
\end{proof}


From this we obtain the following corollary.

\begin{corollary}
Let $  T^{\mathrm{start}}_{i}  $ and $  T^{\mathrm{end}}_{i}  $ be the first and last spanning arborescences in the pivot Gray code for $  G_i  $. Then the vertices in $  V_i  $ form linear arborescences in both $  T^{\mathrm{start}}_{i}  $ and $  T^{\mathrm{end}}_{i}  $.
\end{corollary}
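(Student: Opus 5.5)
The argument is an induction on $i$, combining the preceding lemma (linear host configuration at the start forces one at the end) with Corollary~\ref{cor:genarb_linear}. The key observation is that in the nested generation for $\overrightarrow{G}_i$, the host forest $T_H$ on $V_i$ is the outermost component. Within a fixed host block the skeleton $T_{i-1}$ and the routing $\alpha$ vary, but the parent pointers of the vertices of $V_i$ do not. Hence the configuration of $V_i$ in $T^{\mathrm{start}}_i$ is the first host configuration, and its configuration in $T^{\mathrm{end}}_i$ is the last host configuration. The corollary therefore reduces to two facts: the generator is initialized with a linear host configuration having a single root, and the preceding lemma transfers this linearity to the final host configuration.

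\textbf{Base case.} For $i=1$ we have $V_0=\{r\}$, so the host forest on $V_1$ is exactly a spanning arborescence of the clique $\{r\}\cup V_1$ rooted at $r$. Here property 3 of the layering applies with $d_1=r$. This arborescence is generated by {\sc GenArb}, and Corollary~\ref{cor:genarb_linear} says it begins and ends with a single directed path. Thus $V_1$ is linear in both $T^{\mathrm{start}}_1$ and $T^{\mathrm{end}}_1$.

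\textbf{Inductive step.} For $i\ge 2$, I would first fix the initial host configuration. Take the first-level string (from $V_i$ into $V_{i-1}$) to be of the form $0^a x 0^b$, so that exactly one vertex of $V_i$ jumps to $V_{i-1}$. Property 3 guarantees this is realizable: one can jump to $d_i$, which dominates $V_i$. The remaining vertices of $V_i$ are then chained as a path inside the clique $V_i$, exactly as in the initial call of {\sc GenArb}. This start is a linear arborescence with one root, so the preceding lemma yields that the final host configuration is also linear with one root. Its proof rests on Lemma~\ref{lem:constrained_k_ary} for the internal levels and Lemma~\ref{lem:mr_1gc} for the mixed-radix first level, whose admissible radices are restricted by the neighbourhoods in $V_{i-1}$.

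\textbf{Main obstacle.} The delicate point is justifying that the start and end of the whole listing for $\overrightarrow{G}_i$ coincide with the start and end of the host generator. Two things could break this. First, the routing $\alpha$ could modify edges leaving $V_i$. It does not: it only reassigns outgoing edges of vertices in $V_{i-1}$ (direct edge versus alias entry into $V_i$). Second, the endpoint control of Lemma~\ref{thm:split_kary} could alter the host. It does not either, since it only adjusts the skeleton and routing so that the next host pivot is safe. I would verify both claims directly from the definitions of $T_H$ and $\alpha$. Once they hold, the outgoing edges of $V_i$ are determined solely by $T_H$, and the conclusion follows.
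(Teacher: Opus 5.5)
Your proposal is correct and takes the same route as the paper, which obtains the corollary directly from the preceding lemma (itself a consequence of Lemmas~\ref{lem:constrained_k_ary} and~\ref{lem:mr_1gc}). Two points that the paper leaves implicit are spelled out in your version: the host forest is the outermost loop, so $T^{\mathrm{start}}_i$ and $T^{\mathrm{end}}_i$ carry the first and last host configurations; and routing and endpoint control never change the out-edges of $V_i$. The only cosmetic point is that your induction on $i$ is nominal, since the inductive step never uses the hypothesis for $V_{i-1}$ and the case $i=1$ is already covered by the general argument.
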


The pivot Gray code for the routing component can be handled similarly using the reflectable language framework. 
Note, however, that dynamic endpoint control is also required when the host generator is about to pivot a vertex that is not directly connected to $  \overrightarrow{G}_{i-1}  $, but serves as the root of an alias vertex.

\begin{lemma} \label{proper_interval_endpoint}
Let $  x \in V_i  $ be the vertex scheduled to pivot next by the host generator, and let $  T^{\mathrm{end}}_{i-1}  $ be the final arborescence of $  \overrightarrow{G}_{i-1}  $. Then the nested skeleton $  T_{i-1}  $ and routing aliases $  \alpha  $ can be dynamically controlled to terminate in a configuration where no vertex in $  \overrightarrow{G}_{i-1}  $ utilizes any vertex in the subtree rooted at $  x  $ as an alias.
\end{lemma}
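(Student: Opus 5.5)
The plan is to reduce to the argument of Lemma~\ref{thm:split_kary}, using the corollary above that $T^{\mathrm{end}}_{i-1}$ restricts to a linear arborescence on each layer. Let $X \subseteq V_i$ be the vertex set of the subtree of $T_H$ rooted at $x$. The alias paths entering $X$ all leave the host forest through the local root of the tree of $T_H$ containing $x$. Call this host root $p \in V_{i-1}$, so that every path through $X$ exits at the same skeleton vertex $p$. A vertex $u \in V_{i-1}$ uses a vertex of $X$ as an alias only if its skeleton parent in $T_{i-1}$ is $p$ and $u$ has an out-edge into $X$. So the first step is to bound the number of such $u$ in the final skeleton configuration.

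In $T^{\mathrm{end}}_{i-1}$ the vertices of $V_{i-1}$ form a linear arborescence with one root, by the corollary applied at level $i-1$. Hence $p$ has at most one child $u$ inside $V_{i-1}$. Children of $p$ in lower layers $V_{i-2}$ cannot route into $V_i$ at all, since there are no edges between non-consecutive layers. Thus, as in the split case, at most one vertex $u$ is capable of using an alias in $X$ at the end of the skeleton block. If no such $u$ exists, or $u$ has no edge into $X$, we are done. Otherwise the routing digit $\alpha_u$ ranges over the direct choice $p$ together with the entry points in $V_i$ whose host path ends at $p$. The goal is to make $\alpha_u$ terminate at an option outside $X$, preferably the direct edge $u\to p$.

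Next I would locate the last time the skeleton edge $u \to p$ is established before $T^{\mathrm{end}}_{i-1}$. Since $T^{\mathrm{end}}_{i-1}$ is linear on $V_{i-1}$, the innermost skeleton level producing $u\to p$ ends with a single-child string. The analysis of Lemma~\ref{lem:endpoint_cat} (BRGC tail $0^{c-2}10, 0^{c-2}11, 0^{c-1}1$, or the analogous tail of the mixed-radix code of Lemma~\ref{lem:mr_1gc}) pins down that moment. It also gives the parity of the number of remaining skeleton arborescences of $\overrightarrow{G}_{i-1}$, which can be computed by look-ahead or precomputation from the nested counts.

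At that moment I would place $\alpha_u$ at the top level of the reflectable routing recursion and choose its starting extreme according to that parity, so that after the remaining reflections it rests on the direct edge. Putting $p$ first among $u$'s options ensures the resting extreme is $p$, and certainly not a vertex of $X$. Since routing digits of other vertices never touch $X$, the final configuration uses no alias in $X$, so pivoting $x$ changes a single outgoing edge.

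The main obstacle is the parity computation. Unlike the complete-digraph case, the remaining count is not given by a closed Cayley formula: the remaining skeletons of $\overrightarrow{G}_{i-1}$ involve nested host and routing blocks of lower layers. I would first try to show that the remaining count factors as the count in the last level's tail times inner block sizes whose parities are determined recursively. Failing that, I would fall back on the paper's allowance that the count can be obtained by precomputation.
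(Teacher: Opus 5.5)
Your proposal is correct and takes the same route as the paper. Both reduce to the reset-opportunity argument of Lemma~\ref{thm:split_kary}: linearity of $T^{\mathrm{end}}_{i-1}$ on $V_{i-1}$ leaves only the unique $V_{i-1}$-child $u$ of the exit vertex $p$ able to alias into $X$, and the starting extreme of $\alpha_u$ is then chosen by parity at the last establishment of $u\to p$. Your worry about the parity is not a gap here, since the lemma only asserts that such control exists (precomputation suffices), and the paper handles efficient computation separately via the two-pass simulation in Theorem~\ref{thm:pig_complexity}.
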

\begin{proof}
The proof generalizes the endpoint control established in Lemma~\ref{thm:split_kary}. 
In a split digraph, an independent vertex $  x  $ affects only itself. 
In a proper interval digraph, the vertices of $  V_i  $ form a host forest $  T_{H}  $. 
When $  x  $ pivots its connection to $  V_{i-1}  $, it carries its entire subtree $  X \subseteq V_i  $ with it. 
If any vertex $  u \in V_{i-1}  $ utilizes a vertex $  y \in X  $ as an alias, a cycle might occur violating the arborescence property. 
Thus, we must ensure that every vertex with an alias path involving any vertex in $  X  $ ends on a direct path. 
Since the pivot Gray code for $  \overrightarrow{G}_{i-1}  $ ends with a linear arborescence, the same reset opportunities used in Lemma~\ref{thm:split_kary} can be used to force all affected vertices in $X$ to terminate on direct paths.
\end{proof}


\begin{lemma}
Given a pivot Gray code for the spanning arborescences of $  \overrightarrow{G}_{i-1}  $ and for the host layer $  V_i  $, the host–skeleton–routing framework generates a pivot Gray code for the spanning arborescences of $  G_i  $.
\end{lemma}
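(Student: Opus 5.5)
The plan mirrors the proof of Theorem~\ref{thm:gensplit_gray}, with the independent set replaced by the host layer $V_i$ and the clique skeleton replaced by the recursively generated $\overrightarrow{G}_{i-1}$.

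\textbf{Exhaustiveness.} By Lemma~\ref{lem:pig_bijection_i}, spanning arborescences of $\overrightarrow{G}_i$ rooted at $r$ are in bijection with tuples $S_i=(r,T_H,T_{i-1},\alpha)$. The generator nests three layers: the host forest $T_H$ in the outer layer, the skeleton $T_{i-1}$ in the middle, and the routing $\alpha$ in the inner layer. The host and skeleton choices are independent of each other. The routing domain depends on $(T_H,T_{i-1})$, but the routing choice never affects the validity of $T_H$ or $T_{i-1}$. Since each layer enumerates its domain exactly once, every tuple, and hence every arborescence, appears exactly once.

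\textbf{Adjacent arborescences inside a block.} We check three kinds of steps.
\begin{itemize}
\item A routing step changes the outgoing edge of a single vertex $u\in V_{i-1}$. It switches between the direct skeleton parent and an entry point of a $T_H$-path ending at that parent. This is one pivot, and the result remains an arborescence because the alias path terminates at the skeleton parent.
\item A skeleton step is, by hypothesis, a single pivot in $\overrightarrow{G}_{i-1}$. When the moving vertex $u$ is currently routed through an alias, the routing generator's initialization rule (as in Algorithm~\ref{alg:genarb_endpoint}) resets $\alpha_u$ to the direct choice. The change is therefore still one outgoing edge of $u$. No cycle can arise, since the new skeleton is acyclic and aliases only re-enter at the skeleton parent.
\item A host step inside the host generator is a single pivot of one vertex of $V_i$, by the assumed host Gray code (built from Lemma~\ref{lem:constrained_k_ary} and Lemma~\ref{lem:mr_1gc}).
\end{itemize}

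\textbf{Block boundaries.} The main obstacle is the boundary, where the host generator pivots a vertex $x\in V_i$ whose subtree $X$ in $T_H$ may currently serve as an alias for some $u\in V_{i-1}$. If it does, the host move could create a cycle $u\to\cdots\to x\to u$, or it could silently change $u$'s effective parent, which would cost more than one pivot. Here I invoke Lemma~\ref{proper_interval_endpoint}. It requires $T^{\mathrm{end}}_{i-1}$ to be linear, and this holds by the preceding corollary applied to $\overrightarrow{G}_{i-1}$ (inductively, with Corollary~\ref{cor:genarb_linear} as the base when a layer is handled by \textsc{GenArb}).

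With $T^{\mathrm{end}}_{i-1}$ linear, each vertex of $V_{i-1}$ has at most one child. So, as in Lemma~\ref{thm:split_kary}, only a bounded set of vertices can use $X$. For each such vertex I would use the last establishment of its skeleton edge before $T^{\mathrm{end}}_{i-1}$, whose position and remaining parity are computable as in Lemma~\ref{lem:endpoint_cat}. At that point I place the vertex's routing coordinate at the top of the reflectable routing tree, with the parity-appropriate initial value, so that it comes to rest on the direct edge. The point to verify carefully is that several affected vertices can be controlled simultaneously without conflicting. Because the final skeleton is a single path, their last-establishment times occur at distinct levels, so each coordinate can be fixed independently.

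Once no vertex of $V_{i-1}$ routes through $X$, moving $x$'s outgoing edge is a single pivot and creates no cycle. The next block then starts legitimately with the skeleton generator continuing in reflected order. Combining the in-block and boundary cases, consecutive arborescences always differ by one pivot, which gives the lemma.
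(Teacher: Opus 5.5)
Your proposal is correct and follows the same route as the paper, whose proof is a compressed version of yours. Both argue exhaustiveness from the bijection of Lemma~\ref{lem:pig_bijection_i} and the independence of the nested host, skeleton and routing generators, take one pivot per step inside each generator, and invoke Lemma~\ref{proper_interval_endpoint} to clear every alias through the subtree $X$ before each host pivot.

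Your elaboration of that last lemma overreaches in two places, though neither affects the conclusion:
\begin{itemize}
\item Only one coordinate ever needs control. Any vertex routing through $X$ must have the $V_{i-1}$-anchor $p$ of $x$'s host tree as its skeleton parent. The corollary gives only that $V_{i-1}$ ends as a single path, not all of $T^{\mathrm{end}}_{i-1}$, but that already leaves $p$ with at most one child in $V_{i-1}$. So your ``distinct levels'' argument for controlling several vertices is unnecessary. It is also unjustified, since the parity argument needs the controlled coordinate at the top of the reflectable routing recursion, and only one coordinate can sit there.
\item The remaining parity is merely well defined here. It is not computable via the Cayley count of Lemma~\ref{lem:endpoint_cat}, which applies only to complete-digraph skeletons.
\end{itemize}
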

\begin{proof}
The proof follows that of Theorem~\ref{thm:gensplit_gray}. The host, skeleton, and routing components are independent, and each nested generator exhaustively enumerates every valid string in its domain exactly once. Thus, the overall procedure generates every valid tuple $  S_i  $ exactly once. Within each generator, the traversal is a 1-Gray code, corresponding to a single pivot change. Endpoint control, as established by Lemma~\ref{proper_interval_endpoint}, ensures that the routing configuration terminates in the direct paths where no vertex uses any vertex in the relevant subtree as an alias. 
Therefore, the listing forms a pivot Gray code.
\end{proof}

Recursively constructing $  \overrightarrow{G}_i  $ from layer $  V_0  $ to $  V_i  $ results in the following theorem.
\begin{theorem}
The host–skeleton–routing framework generates a pivot Gray code for the spanning arborescences of proper interval digraphs rooted at the leftmost interval.
\end{theorem}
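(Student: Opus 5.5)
The plan is to argue by induction on the number of BFS layers $m$, using the layered decomposition $V_0, V_1, \dots, V_m$ with $V_0 = \{r\}$. The inductive statement should be stronger than the theorem itself. For each $i$, the framework produces a pivot Gray code $\mathcal{L}_i$ for the spanning arborescences of $\overrightarrow{G}_i$ rooted at $r$. Moreover, the first and last arborescences of $\mathcal{L}_i$ restrict to linear arborescences on the last layer $V_i$, as in the corollary following the host lemma. This endpoint condition is needed because Lemma~\ref{proper_interval_endpoint} requires $T^{\mathrm{end}}_{i-1}$ to be linear on the relevant part, exactly as Lemma~\ref{thm:split_kary} requires $T_K^{\mathrm{end}}$ to be a single path.

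\textbf{Base cases.} For $i=0$ there is only the trivial arborescence. For $i=1$, the graph $\overrightarrow{G}_1$ is the clique $\{r\} \cup V_1$, by property 3 with $d_1 = r$. So Theorem~\ref{thm:genarb_complexity} applies, and Corollary~\ref{cor:genarb_linear} gives linear first and last arborescences.

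\textbf{Inductive step.} Assume $\mathcal{L}_{i-1}$ exists with linear endpoints. By Lemma~\ref{lem:pig_bijection_i}, the spanning arborescences of $\overrightarrow{G}_i$ correspond bijectively to tuples $(r, T_H, T_{i-1}, \alpha)$. I would nest three generators:
\begin{itemize}
\item the host forests $T_H$ on $V_i$, with roots in $V_{i-1}$, as the outermost loop, generated by the level-by-level GenArb scheme using Lemma~\ref{lem:constrained_k_ary} and Lemma~\ref{lem:mr_1gc};
\item the skeleton list $\mathcal{L}_{i-1}$ in the middle, traversed alternately forward and reversed, which is legitimate since a reversed pivot Gray code is still a pivot Gray code;
\item a reflectable mixed-radix Gray code for $\alpha$ innermost.
\end{itemize}
The lemma preceding this theorem shows that this nesting is a pivot Gray code for $\overrightarrow{G}_i$, provided endpoint control (Lemma~\ref{proper_interval_endpoint}) holds at every host transition. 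The remaining task is to show that the first and last arborescences of $\mathcal{L}_i$ are linear on $V_i$. This follows from the lemma that a host sequence starting at a linear one-root configuration also ends at one.

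\textbf{Main obstacle.} The delicate point is invoking Lemma~\ref{proper_interval_endpoint} when the skeleton block is traversed in reverse. In that case the terminal skeleton is $T^{\mathrm{start}}_{i-1}$, so we need both endpoints of $\mathcal{L}_{i-1}$ to be linear near the vertices hosting the pivoting subtree. This is why the induction hypothesis includes both endpoints.

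There is a second subtlety: the reset opportunity for each affected $u \in V_{i-1}$ must occur inside the final skeleton block, and the parity computation must still go through when the skeleton is itself a nested $\mathcal{L}_{i-1}$ rather than a clique code. I would handle this as follows. Place the affected routing coordinates $\alpha_u$ at the top of the reflectable recursion. Then set each one at its last establishment time, using the parity of the number of remaining skeleton configurations. That number is computable from the recursive structure, or by precomputation as permitted in Lemma~\ref{thm:split_kary}.

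Since the subtree $X$ of $x$ lies in a linear host forest, only a bounded chain of vertices $u$ can route through $X$. I expect this to let the control be applied independently to each such $u$, and verifying this independence carefully is where I would spend the most effort.
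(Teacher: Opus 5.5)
Your proposal is correct and follows essentially the paper's route: induction over the BFS layers, applying the host--skeleton--routing nesting lemma at each step with endpoint control (Lemma~\ref{proper_interval_endpoint}) at host transitions, and carrying the linear-endpoint corollary as the strengthened hypothesis that the paper's one-line ``recursively constructing'' argument leaves implicit. One correction to your final paragraph: the number of affected vertices is bounded by the skeleton rather than by the host forest (intermediate host forests need not be linear), because any $u$ routing through $X$ must be a child in $V_{i-1}$ of the vertex $q \in V_{i-1}$ to which $x$'s host tree attaches, and linearity of the block-terminal skeleton on $V_{i-1}$ leaves $q$ at most one such child, so only a single routing coordinate ever needs control and no independence question arises.
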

Lastly, we consider the time complexity of our algorithm. 
Unlike complete graphs, there is no simple formula to compute the remaining number of arborescences on the fly for proper interval digraphs. 
Nevertheless, we achieve constant amortized time per arborescence via a two-pass simulation technique.

\begin{theorem} \label{thm:pig_complexity}
Spanning arborescences of proper interval digraphs can be generated in a pivot Gray code order in constant amortized time per arborescence, utilizing ${O}(n^2)$ space, provided the root is the leftmost interval $r \in V_0$. 
\end{theorem}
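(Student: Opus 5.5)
We will prove Theorem~\ref{thm:pig_complexity} by induction on the number of layers. The inductive claim is that, for each $i$, the pivot Gray code for $\overrightarrow{G}_i$ produced by the host--skeleton--routing framework can be generated in constant amortized time using $O(n^2)$ space, once the endpoint-control information has been precomputed. The base case $\overrightarrow{G}_1$ is essentially the complete-digraph (or clique-plus-dominating-vertex) case. There Theorem~\ref{thm:k_cat} and Lemma~\ref{lem:endpoint_cat} apply directly, because $\{d_1\}\cup V_1$ is a clique containing $r$.

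\textbf{Inductive step.} For the inductive step we account for the work per generated arborescence in the three nested layers.
\begin{itemize}
\item \textbf{Host forest $T_H$ on $V_i$.} This is generated by a level-by-level {\sc GenArb}-style procedure whose level strings are produced by the reflectable $k$-ary and mixed-radix $1$-Gray codes of Lemma~\ref{lem:constrained_k_ary} and Lemma~\ref{lem:mr_1gc}. Li--Sawada reflectable generation runs in constant amortized time per string, and the leaf-finding cost of case (b) is amortized exactly as in the proof of Theorem~\ref{thm:k_cat}.
\item \textbf{Host updates.} Each host change requires updating the routing table $R$, that is, the set of aliases in $V_i$ available to each $u\in V_{i-1}$. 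This costs $O(|V_{i-1}|+|V_i|)$ time, since only paths through the moved subtree change. After each host change the inner skeleton generator produces all arborescences of $\overrightarrow{G}_{i-1}$. There are at least $|V_{i-1}|$ of these whenever $|V_{i-1}|\ge 2$; otherwise $V_{i-1}$ is the single vertex $d_i$ and the table is trivial. This amortizes the update to $O(1)$ per arborescence.
\item \textbf{Routing $\alpha$.} The routing string is a reflectable mixed-radix Gray code, hence constant amortized time.
\end{itemize}

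\textbf{Endpoint control (main obstacle).} Lemma~\ref{proper_interval_endpoint} requires knowing, at the moment the last-time reset opportunity appears, the parity of the number of skeleton arborescences remaining in the current block. For complete digraphs Lemma~\ref{lem:endpoint_cat} obtained this parity from Cayley's forest formula, but no such closed form exists here. We will therefore use a \emph{two-pass simulation}.
\begin{enumerate}
\item In a first pass we run the generator for $\overrightarrow{G}_{i-1}$ with all routing decisions suppressed, since the skeleton order does not depend on $\alpha$. During this pass we count, for each reset opportunity on the terminal path $T^{\mathrm{end}}_{i-1}$, the number of remaining arborescences modulo~$2$. There are at most $O(n)$ such opportunities, one per edge of the final linear arborescence, so the parities are stored as $O(n)$ bits per layer. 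Over all layers this fits within $O(n^2)$ space.
\item In the second pass we perform the real generation, reading the stored parity in $O(1)$ time at each reset point and setting $\alpha_u$ as in Lemma~\ref{thm:split_kary}.
\end{enumerate}

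\textbf{Verifying the cost of the first pass.} The first pass enumerates each skeleton arborescence once. Skeleton arborescences of $\overrightarrow{G}_{i-1}$ number at most the arborescences of $\overrightarrow{G}$, since each extends to at least one full arborescence by choosing any host and direct routing. Hence the first pass costs, by induction, constant amortized time per final arborescence. Summing over the $m$ layers might seem to introduce a factor of $m$. However, the number of arborescences of $\overrightarrow{G}_{i-1}$ is at most that of $\overrightarrow{G}_i$ divided by $|V_i|\ge 1$, and typically by a factor of at least $2$. We expect a geometric-decay argument to bound the total precomputation by $O(1)$ per arborescence overall, and we anticipate this is the step needing the most care.

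The parity bookkeeping at the reset points costs $O(1)$ per event. The space used is adjacency data, parent and child pointers, routing tables of size $O(|V_{i-1}||V_i|)$ per layer, and the stored parities, for $O(n^2)$ in total. This completes the plan for the bound.
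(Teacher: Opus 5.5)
Your outline has the same ingredients as the paper's proof:
\begin{itemize}
\item reflectable constant-amortized-time generators for host, skeleton and routing;
\item routing-table updates at host boundaries, charged to the next skeleton block;
\item a two-pass simulation supplying the endpoint-control parity;
\item per-layer routing tables summing to $O(n^2)$.
\end{itemize}
The gap is the step you flag yourself, and it is more than a matter of care.

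You run the first pass once per layer as a global precomputation. This obliges you to show $\sum_i N_{i-1}=O(N_m)$, where $N_j$ is the number of arborescences of $\overrightarrow{G}_j$. The decay you invoke does not hold in general. The inequality $N_{i-1}\le N_i/|V_i|$ is vacuous when $|V_i|=1$. If $V_i=\{v\}$ and $d_i$ is the only neighbour of $v$ in $V_{i-1}$, there is exactly one host configuration and no alias, since the only candidate $u=d_i$ would give $d_i\to v\to d_i$. Hence $N_i=N_{i-1}$, and a large clique followed by a long induced path makes your precomputation $\Theta(m)$ times the size of the output.

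This defect can be repaired. Such layers need no endpoint control. Every other layer has at least two host forests, so $N_i\ge 2N_{i-1}$ there, and the remaining passes enumerate at most $N_m$ arborescences in total. But you have to make this argument.

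More seriously, a single table of parities per layer presupposes that every host block traverses the same skeleton sequence and ends in the same linear arborescence $T^{\mathrm{end}}_{i-1}$. Each block restarts the skeleton generator from the configuration in which the previous block ended. You give no reason why the blocks should coincide, so the stored parities need not be the ones required later.

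The paper avoids both problems by rerunning the simulation for each fixed host configuration. It records the rank at which each vertex last changes its skeleton parent in that block. It then charges the simulation cost, which is proportional to the block's number of skeleton arborescences, to the arborescences the real pass outputs in the same block. It therefore needs neither a bound on $\sum_i N_{i-1}$ nor invariance of the skeleton sequence across blocks. It does, however, leave implicit how these per-block constant factors compose across layers.

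Separately, your host-update accounting does not balance. You charge $O(|V_{i-1}|+|V_i|)$ to only $|V_{i-1}|$ skeleton arborescences, which fails when $V_{i-1}$ is small and $V_i$ is large. The bound itself is also optimistic: moving a subtree $X\subseteq V_i$ can change the alias sets of up to $|X|\cdot|V_{i-1}|$ pairs, which is why the paper budgets $O(|V_i|\,|V_{i-1}|)$. Closing this requires amortizing over the host generator's own recursion, in the spirit of the {\sc FindLeaf} argument in Theorem~\ref{thm:k_cat}.
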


\begin{proof}
The majority of the operations follow the implementation logic developed for split digraphs. The primary difference occurs at the host block boundaries, when a host vertex in $V_i$ applies a pivot change. 

In this scenario, the algorithm must first update the routing alias mapping table. 
In split digraphs, this took ${O}(|\overrightarrow{K}|)$ time. 
In proper interval digraphs, we must verify whether the neighbors of the pivoting vertex possess valid directed paths into $V_{i-1}$. 
This path verification and alias update requires at most ${O}(|V_i| \cdot |V_{i-1}|)$ time. 
However, this setup cost is incurred only at the boundary of a host block. Immediately following this update, the algorithm traverses the inner domains of the skeleton and routing configurations. 
Since this inner generation produces a number of arborescences that strictly dominates the polynomial setup cost (or is bounded by a trivial constant in base cases), the host update mathematically amortizes to constant time per generated arborescence.

Then, the dynamic endpoint control needs to determine the parity of the remaining sequence. 
Rather than relying on heavy algebraic precomputation, we achieve this via a two-pass simulation technique. 
For a fixed host configuration, the algorithm first executes a ``simulation'' pass of the skeleton generator. 
During this initial pass, the algorithm does not evaluate or yield the inner routing aliases; it merely traverses the skeleton configurations and records the exact rank (step index) at which each clique vertex updates its skeleton parent. 
Since the skeleton generator inherently operates in constant amortized time, this simulation run requires time proportional to the number of skeleton arborescences and should amortize to constant time. 
The spatial overhead requires only a simple ${O}(n)$ array to store the terminal ranks. During the second pass, the algorithm compares the current step index against the pre-recorded rank to detect the final transition in constant time to calculate the exact parity of the remaining sequence to orient the endpoint control. 
Each layer uses the same optimization as {\sc GenSplitArb} to maintain a local routing table between $V_{i-1}$ and $V_i$, utilizing $O(n_i^2)$ space locally and $\sum_\ell O(n_i^2) \le O(n^2)$ globally.
Therefore, the entire framework generates spanning arborescences for proper interval digraphs in constant amortized time, utilizing ${O}(n^2)$ space.
\end{proof}

\section{Final remarks}\label{sec:conclusion}
The host-skeleton-routing framework introduced in this paper opens several interesting avenues for future research. 
A natural extension is generalizing this framework to other graph classes characterized by overlapping clique intersections. 
It is also worth noting that our proper interval digraph algorithm fundamentally relies on the BFS layer partition, which requires the arborescences to be rooted at the leftmost interval $r \in V_0$ to guarantee the unidirectional domination property. 
If the desired root lies deep within the interval sequence (e.g., $r \in V_k$ for $k > 0$), the directed paths might need to propagate both leftward and rightward simultaneously, breaking the strict layer-by-layer structure. 
Whether an efficient algorithm exists for generating pivot Gray codes for spanning arborescences of proper interval digraphs rooted at an arbitrary vertex remains an open problem.

\section*{AI disclosure} 
We used ChatGPT and Google Gemini to assist with the implementation, debugging, and polishing of the Python code presented in the appendices. The tool materially contributed to the development and experimental validation of the sampling algorithms. The authors verified the correctness and originality of all content including references.


\label{sec:remark}


\newpage
\bibliographystyle{abbrv}
\bibliography{myrefs}       

\newpage
\noindent
{\large
{\bf Appendix A: Proofs of Lemmas related to reflectable languages}}

\hfill



\constrainedkary*
\begin{proof}
When $  k=1  $ (binary case), the statement follows directly from BRGC as shown in Lemma~\ref{lem:constrained_k_ary}. 
For $  k > 1  $, we configure each level of the reflectable recursive tree so that it begins with the desired starting digit. 
For every level, we then designate the leftmost and rightmost values of each local reflectable block as $  k-1  $ and $  k-2  $, respectively.
This effectively hides the digit 0 during all intermediate reflections and only exposes it at the very last block of the generation tree. 
This configuration guarantees that the entire sequence terminates exactly on the all-zero string $0^m$. 
By simply dropping this final string, the sequence ends on its immediate predecessor, which differs from $0^m$ in exactly one position and naturally has the desired form $0^j y 0^{m-j-1}$.
Figure~\ref{fig:placeholder} illustrates the structure of the first and second levels of the reflectable recursion tree when $  k=2  $.
\end{proof}

\mixedradixconstrainted*
\begin{proof}
Such a Gray code can be constructed by modifying the standard reflectable Gray code framework of Li and Sawada~\cite{li2009gray}. We first apply a permutation to the radices, moving those with even values to the top levels of the recursive generation tree and relegating those with odd values to the bottom. Let the reordered radices be $  r_1, r_2, \dots, r_m  $.
We configure the reflectable sequence based on the parity of the available radices.

\medskip
\noindent\textbf{Case 1: All radices are odd.} \
In this case, $  r_k \ge 3  $ for all $  k  $. We set the root level $  r_1  $ to start at 1 and end at 0. For all subsequent levels, we designate the leftmost and rightmost values of each local reflectable block as 1 and 2, respectively. This effectively hides the digit 0 during all intermediate reflections and only exposes it at the very last block of the generation tree. This configuration guarantees that the entire sequence terminates exactly on the all-zero string $0^m$. By simply dropping this final string, the sequence ends on its immediate predecessor, which differs from $0^m$ in exactly one position and naturally has the desired form $0^j y 0^{m-j-1}$.

\medskip
\noindent\textbf{Case 2: At least one radix is even.} \
Here the root level $  r_1  $ is even. We configure $  r_1  $ to start at 1 and end at 0. 
Since $  r_1  $ is even, the recursive tree undergoes an even number of reflections, which preserves endpoint parity. 
For each subsequent even-radix level, the number of strings generated at the parent level is even, so the recursive subtree again undergoes an even number of reflections, preserving endpoint parity.
This allows us to safely start and end all subsequent even-radix levels on 0. When the recursion reaches the odd radices at the bottom, we apply the 0-hiding technique from Case 1. 
The sequence terminates at $0^m$ which is then dropped.

\medskip
\noindent\textbf{Digit $  x  $ at an odd-radix level:}
To force the sequence to start with a string in which the unique non-zero digit is located at an odd radix position, we map the target radix to the last level of the recursion tree. We configure this leaf level to start at $1$ and transition to $0$, while setting the first level to start at $0$ and end at $1$. This forces the all-zero string $0^m$ to appear as the second string in the overall sequence, sandwiched between $0^{m-1}1$ and $0^{m-1}t$ for some $  t \ge 2  $. The last string then becomes $10^{m-2}$. Since the predecessor and successor of $0^m$ differ only in this single leaf coordinate, they are adjacent in the 1-Gray code graph. Removing $0^m$ therefore directly connects $0^{m-1}1$ to $0^{m-1}t$, achieving the desired endpoint.
\end{proof}

\begin{figure}[t]
    \centering
    \includegraphics[width=0.7\linewidth]{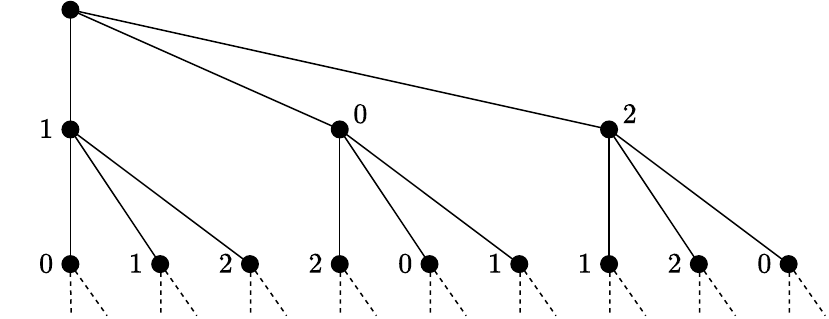}
    \caption{The structure of the first and second levels of the reflectable recursion tree when $  k=2  $.}
    \label{fig:placeholder}
\end{figure}

\newpage
\noindent
\large
{\bf Appendix B: Pivot Gray code of $\overrightarrow{K}_5$ generated by {\sc GenArb}}

\tiny
\begin{figure}[H]
    \centering
    \includegraphics[width=0.99\linewidth, trim={5 20 0 5}, clip]{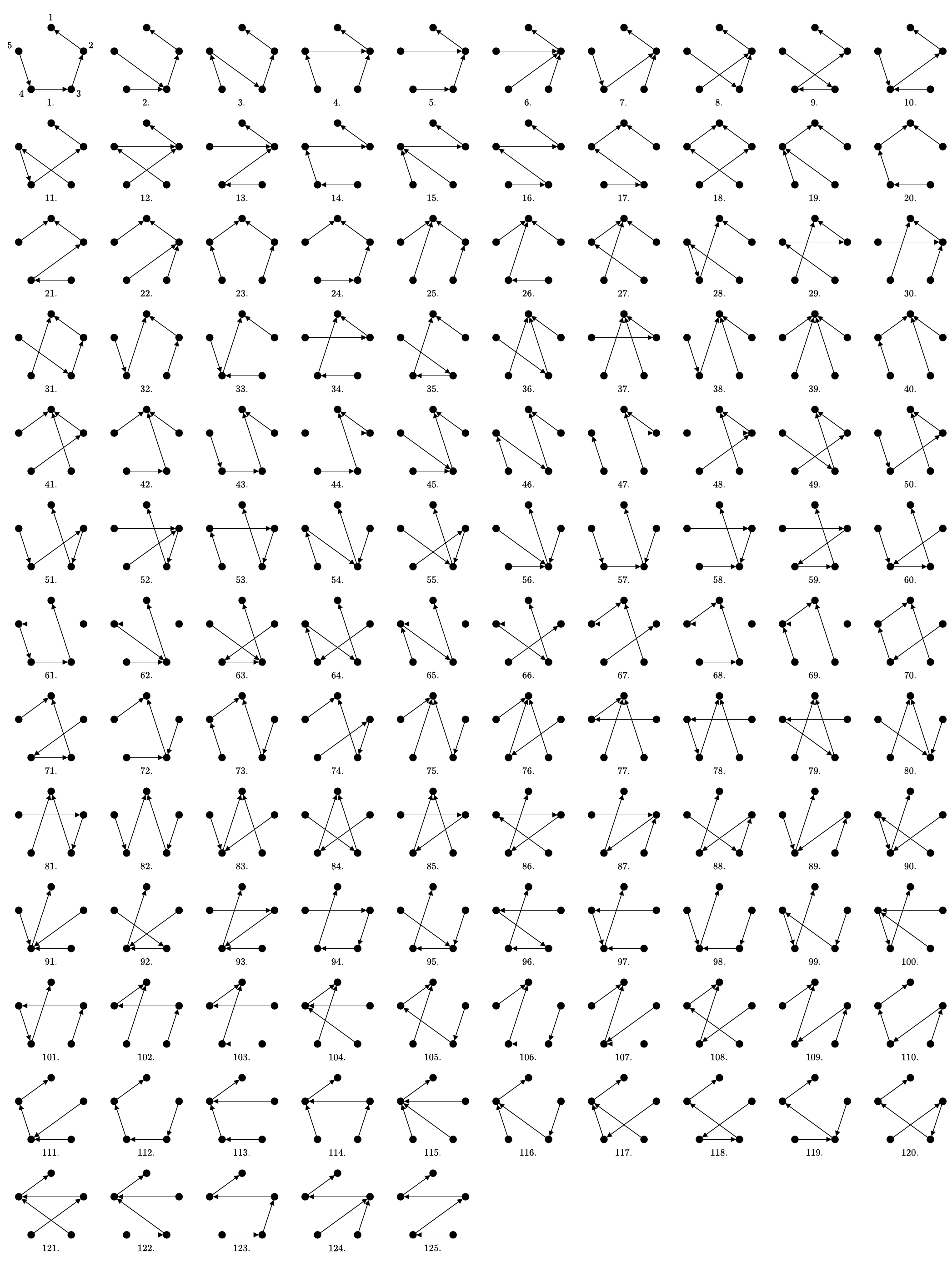}
\end{figure}

\newpage
\noindent
{\large
{\bf Appendix C: Python code to generate pivot Gray code for spanning arborescences of complete digraphs}}

\lstset{style=mystyle}
\begin{lstlisting}[language=Python]
from k_ary_linear import k_ary_linear


class Node:
    def __init__(self, val):
        self.val = val
        self.parent = None
        self.leaf = self  # bottom leaf
        self.nchild = 0


def set_parent(node, parent):
    old = node.parent
    if old is not None:
        old.nchild -= 1
        if old.nchild == 0:
            old.leaf = old
    node.parent = parent
    if parent is not None:
        parent.nchild += 1


node_list = []


def spanning_level(V, t_, n_c):
    if n_c == 0:
        yield node_list
        return

    P = [V[i] for i in range(len(V)) if t_[i] > 0]      
    C = [V[i] for i in range(len(V)) if t_[i] == 0]      
    t_l = [0 for _ in range(n_c)]
    roots = {}                                           

    m = {P[i].val: i + 1 for i in range(len(P))}
    for i in range(len(C)):
        if C[i].parent.val in m:
            t_l[i] = m[C[i].parent.val]
            roots[i] = True

    yield from spanning_level(C, t_l, len(C) - len(roots))

    K = k_ary_linear(len(P), t_l)
    next(K)
    for t_l, idx in K:
        if t_l[idx[0]] > 0:                              
            set_parent(C[idx[0]], P[t_l[idx[0]] - 1])
            if idx[0] not in roots:
                roots[idx[0]] = True
        else:                                            
            del roots[idx[0]]
            last = next(reversed(roots)) if roots else -1
            set_parent(C[idx[0]], C[last].leaf)
        yield from spanning_level(C, t_l, len(C) - len(roots))

    for c_idx in roots:                                  
        C[c_idx].parent.leaf = C[c_idx].leaf


if __name__ == '__main__':
    def str_tree(nodes):
        return ''.join('{}->{}; '.format(nd.val + 1, nd.parent.val + 1 if nd.parent else -1)
                       for nd in nodes)

    n = int(input('Enter n: '))
    prev = None
    for i in range(n):
        nd = Node(i)
        if prev is not None:
            set_parent(nd, prev)
        node_list.append(nd)
        prev = nd
    count = 0
    for tree in spanning_level(node_list, [1] + [0] * (n - 1), n - 1):
        print(str_tree(tree))
        count += 1
    print('Total:', count)

\end{lstlisting}

\newpage

\noindent
\large
{\bf Appendix D: Python code to generate 1-Gray code $k$-ary strings and mixed-radix strings}\label{app:k-ary}

\tiny
\lstset{style=mystyle}
\begin{lstlisting}[language=Python]
def k_ary_linear(k, alpha):

    n = len(alpha)
    d = [1 if a != 1 else -1 for a in alpha]
    idx = []
    s = [sum(alpha)]


    order = [i for i in range(n)]

    if k == 1:        
        for i in range(n):
            if alpha[i] == 1:
                order[0], order[i] = i, 0
                break
    
    def _next(i, last=-1):
        i_ = order[i]

        for r in range(k + 1):

            nl = 0 if k > 1 and r == k and last != -1 else -1

            if i == n-1:
                if s[0] != 0:
                    if len(idx) > 1 and idx[0] == idx[1]:
                        idx.pop()
                    yield alpha, idx
                    idx[:] = []
            else:
                i_next = order[i + 1]
                yield from _next(i + 1, nl)
                d[i_next] = 1 if alpha[i_next] == k else -d[i_next]

            if r < k:
                a_b = alpha[i_]
                a_a = (alpha[i_] + d[i_] + k + 1) % (k + 1)
                if last != -1:
                        if r == k-1:
                            a_a = last
                        elif a_a == last:
                            a_a = (a_a + d[i_] + k + 1) % (k + 1)
                s[0] += a_a - a_b
                alpha[i_] = a_a
                idx.append(i_)

    yield from _next(0, -1 if k == 1 else (k if alpha[order[0]] != k else k-1))
    



\end{lstlisting}

\newpage
\noindent
{\large
{\bf Appendix E: Python code to generate pivot Gray code for spanning arborescences of split digraphs}}

\lstset{style=mystyle}
\begin{lstlisting}[language=Python]
from k_ary_linear import k_ary_linear

class Node:
    def __init__(self, val):
        self.val = val
        self.parent = None
        self.leaf = self
        self.nchild = 0   

class LinkedList:
    def __init__(self):
        self.map = {}
        self.first = -1
        self.last = -1
        self.len = 0

    def append(self, val):
        assert val not in self.map
        self.map[val] = [self.last, -1, None]
        if self.last != -1:
            self.map[self.last][1] = val
        self.last = val
        self.len += 1
        if self.first == -1:
            self.first = val

    def remove(self, val):
        assert val in self.map
        val_last = self.map[val][0]
        val_next = self.map[val][1]
        if val_last != -1:
            self.map[val_last][1] = val_next
        if val_next != -1:
            self.map[val_next][0] = val_last
        else:
            self.last = val_last
        if self.first == val:
            self.first = val_next
        del self.map[val]
        self.len -= 1
    
    def next(self, val):
        if val not in self.map:
            return -1
        if self.map[val][1] != -1:
            return self.map[val][1]
        else:
            return self.first
    
    def prev(self, val):
        if val not in self.map:
            return -1
        if self.map[val][0] != -1:
            return self.map[val][0]
        else:
            return self.last
    
    def insert_first(self, val):
        assert val not in self.map
        self.map[val] = [-1, self.first, None]
        if self.first != -1:
            self.map[self.first][0] = val
        self.first = val
        self.len += 1
        if self.last == -1:
            self.last = val

N_K = [-1]
N_I = [-1]

k_set_list = []     # k vertices list, [i] -> Node object
k_cic_list = []     # cross I children list, [i][j] -> mixed radix if i is parent of j
i_set_list = {}     # i vertices list, [i] -> Node object
i_edge = {}


host_list = {}      # feasible host of i vertices, [i] -> LinkedList of feasible host k vertices
host_states = {}    # current host of i vertices, [i] -> current host k vertex
leaf_i_vertex = [-1]  # i vertex that must be a leaf when H1 -> H2

routing_list = LinkedList()     # routing list of k.val, [k.val] -> [last, next, parent]
routing_states = {}             # current routing val of each k

root_node_ref = [None]   # Node of the I-vertex acting as root
root_ports = [None]      # list of adjacent K-vertices of the I-root

def set_parent(node, parent):
    old = node.parent
    if old is not None:
        old.nchild -= 1
        if old.nchild == 0:
            old.leaf = old
    node.parent = parent
    if parent is not None:
        parent.nchild += 1

def host():
    i_keys = list(i_edge.keys())    

    def host_mr(i_idx):
        if i_idx == N_I[0]:
            return
        
        i_val = i_keys[i_idx]
        yield from host_mr(i_idx + 1)
        is_first = True if host_states[i_val] == host_list[i_val].first else False  # reflectable
        for _ in range(host_list[i_val].len - 1):  
            leaf_i_vertex[0] = i_val
            yield  # return next change before doing this

            p_current = host_states[i_val]
            p_next = host_list[i_val].next(p_current) if is_first else host_list[i_val].prev(p_current) 
            host_states[i_val] = p_next
            i_set_list[i_val].parent = k_set_list[p_next]

            # update feasible paths
            for p in i_edge[i_val]:
                if p != p_current:
                    k_cic_list[p_current][p].remove(i_val)
                if p != p_next:
                    k_cic_list[p_next][p].append(i_val)

            # update feasible routings
            for k in k_set_list:
                if k.parent is None or k.parent.val >= N_K[0]:
                    continue
                if k_cic_list[k.parent.val][k.val].len >= 2:
                    if k.val not in routing_list.map:
                        routing_list.append(k.val)
                        routing_states[k.val] = k.parent.val
                    routing_list.map[k.val][2] = k.parent.val
                else:
                    if k.val in routing_list.map:
                        routing_list.remove(k.val)
                    routing_states[k.val] = -1
            
            yield from host_mr(i_idx + 1)

    yield from host_mr(0)
    leaf_i_vertex[0] = -1
    yield -1

def spanning_level(V, t_, n_c, last=False, first_m=None):
    if n_c == 0:
        yield 
        return

    P = [V[i] for i in range(len(V)) if t_[i] > 0]
    C = [V[i] for i in range(len(V)) if t_[i] == 0]
    t_l = [0 for _ in range(n_c)]
    C_ = LinkedList()

    width = len(C) if first_m is None else first_m

    n_str = (len(P) + 1) ** width - 1

    m = {}
    for i in range(len(P)):
        m[P[i].val] = i+1

    for i in range(len(C)):
        if C[i].parent.val in m:
            t_l[i] = m[C[i].parent.val]
            C_.append(i)

    yield from spanning_level(C, t_l, len(C) - C_.len, last and n_str == 1)

    n_str -= 1
    tail_holder = [0] * (len(C) - width)

    K = k_ary_linear(len(P), t_l[:width])
    next(K)

    for ap, idx in K:
        t_l = ap + tail_holder   
        this_c = C[idx[0]].val
        
        # skeleton update
        if t_l[idx[0]] > 0:  
            set_parent(C[idx[0]], P[t_l[idx[0]]-1])
            if not idx[0] in C_.map:
                C_.append(idx[0])
            this_p = P[t_l[idx[0]]-1].val
        else:
            C_.remove(idx[0])
            set_parent(C[idx[0]], C[C_.last].leaf)
            this_p = C[C_.last].leaf.val
        
        # routing table update
        if this_p >= N_K[0]:
            # parent is the I-root: the (child -> root) edge is never routable
            if this_c in routing_list.map:
                routing_list.remove(this_c)
            routing_states[this_c] = -1
        elif k_cic_list[this_p][this_c].len == 1:
            if this_c in routing_list.map:
                routing_list.remove(this_c)
                routing_states[this_c] = -1
        else:

            # when 0..011 available
            if (
                last 
                and n_str == 2 
                and k_cic_list[this_p][this_c].len >= 2 
                and leaf_i_vertex[0] in k_cic_list[this_p][this_c].map
            ):
                if this_c in routing_list.map:
                    routing_list.remove(this_c)
                routing_list.insert_first(this_c)
                routing_list.map[this_c][2] = this_p
                routing_states[this_c] = this_p if n_c % 2 == 0 else k_cic_list[this_p][this_c].last

            # when 11 -> 01
            elif(
                last
                and n_str == 1
                and len(t_l) == 2
                and leaf_i_vertex[0] in k_cic_list[this_p][this_c].map
            ):
                if this_c in routing_list.map:
                    routing_list.remove(this_c)
                routing_list.insert_first(this_c)
                routing_list.map[this_c][2] = this_p
                routing_states[this_c] = leaf_i_vertex[0]
            
            # general case
            else:
                if this_c not in routing_list.map:
                    routing_list.append(this_c)
                routing_list.map[this_c][2] = this_p
                routing_states[this_c] = this_p

        yield from spanning_level(C, t_l, len(C) - C_.len, last and n_str == 1)
        n_str -= 1

    # update leaf
    for c_idx in C_.map:
        C[c_idx].parent.leaf = C[c_idx].leaf

def routing():

    def routing_mr(i):
        if i == -1:
            yield
            return
        
        mr_list = k_cic_list[routing_list.map[i][2]][i]
        yield from routing_mr(routing_list.map[i][1])

        is_first = True if mr_list.first == routing_states[i] else False  # reflectable

        for _ in range(mr_list.len-1):
            routing_states[i] = mr_list.next(routing_states[i]) if is_first else mr_list.prev(routing_states[i])
            yield from routing_mr(routing_list.map[i][1])

    yield from routing_mr(routing_list.first)

def gen_split():
    for H in host():
        for S in spanning_level(
            k_set_list,
            [1] + [0 for _ in range(len(k_set_list) - 1)],
            len(k_set_list) - 1,
            leaf_i_vertex[0] != -1
            ):
            for R in routing():
                yield

def gen_split_iroot():
    root_node = root_node_ref[0]
    ports = root_ports[0]
    port_set = set(ports)
    V = ([root_node]
         + [k_set_list[p] for p in ports]
         + [k_set_list[j] for j in range(N_K[0]) if j not in port_set])
    m = len(ports)
    for H in host():
        for S in spanning_level(
            V,
            [1] + [0 for _ in range(N_K[0])],
            N_K[0],
            leaf_i_vertex[0] != -1,
            first_m=m,
            ):
            for R in routing():
                yield

def initialize_split_graph(n, edges):

    prev = None
    for v in range(n):
        new_node = Node(v)
        if prev is not None:
            set_parent(new_node, prev)
        k_set_list.append(new_node)
        k_cic_list.append([LinkedList() for _ in range(n)])
        for cic in k_cic_list[-1]:
            cic.append(v) 
        prev = new_node

    for i_val, i_parents in edges.items():
        assert i_val not in i_set_list
        assert len(i_parents) > 0
        assert len(set(i_parents)) == len(i_parents)
        assert all(0 <= p < n for p in i_parents)

        i_edge[i_val] = list(i_parents)

        new_node = Node(i_val)
        i_set_list[i_val] = new_node
        new_node.parent = k_set_list[i_parents[0]]

        i_ll = LinkedList()
        for p in i_parents:
            i_ll.append(p)
        host_list[i_val] = i_ll
        host_states[i_val] = i_parents[0]

        for p in i_parents[1:]:
            k_cic_list[i_parents[0]][p].append(i_val)

    N_K[0] = n
    N_I[0] = len(i_edge)

    for k in k_set_list[1:]:
        if k_cic_list[k.parent.val][k.val].len > 1:
            routing_list.append(k.val)
            routing_list.map[k.val][2] = k.parent.val
            routing_states[k.val] = k.parent.val

def initialize_split_graph_iroot(n, edges, root):

    assert root in edges
    assert len(edges[root]) > 0
    assert len(set(edges[root])) == len(edges[root])
    assert all(0 <= p < n for p in edges[root])

    ports = list(edges[root])
    port_set = set(ports)

    for v in range(n):
        new_node = Node(v)
        k_set_list.append(new_node)
        k_cic_list.append([LinkedList() for _ in range(n)])
        for cic in k_cic_list[-1]:
            cic.append(v)

    # the I-root
    root_node = Node(root)
    root_node_ref[0] = root_node
    root_ports[0] = ports

    C_order = [k_set_list[p] for p in ports] + [k_set_list[j] for j in range(n) if j not in port_set]
    set_parent(C_order[0], root_node)
    for i in range(1, len(C_order)):
        set_parent(C_order[i], C_order[i - 1])

    for i_val, i_parents in edges.items():
        if i_val == root:
            continue
        assert i_val not in i_set_list
        assert len(i_parents) > 0
        assert len(set(i_parents)) == len(i_parents)
        assert all(0 <= p < n for p in i_parents)

        i_edge[i_val] = list(i_parents)

        new_node = Node(i_val)
        i_set_list[i_val] = new_node
        new_node.parent = k_set_list[i_parents[0]]

        i_ll = LinkedList()
        for p in i_parents:
            i_ll.append(p)
        host_list[i_val] = i_ll
        host_states[i_val] = i_parents[0]

        for p in i_parents[1:]:
            k_cic_list[i_parents[0]][p].append(i_val)

    N_K[0] = n
    N_I[0] = len(i_edge)     # excludes the root

    # initial routable clique edges
    for k in k_set_list:
        if k.parent is None or k.parent.val >= n:
            continue
        if k_cic_list[k.parent.val][k.val].len > 1:
            routing_list.append(k.val)
            routing_list.map[k.val][2] = k.parent.val
            routing_states[k.val] = k.parent.val

def decode_tree():

    tree = {
        k.val: (-1 if k.parent is None else k.parent.val)
        for k in k_set_list
    }
    tree.update({i.val: i.parent.val for i in i_set_list.values()})
    if root_node_ref[0] is not None:
        tree[root_node_ref[0].val] = -1

    for c, parent in routing_states.items():
        if parent != -1:
            tree[c] = parent

    return tree

def str_hsr_tree():
    tree = decode_tree()
    return ''.join(
        '{}->{}; '.format(v, tree[v] if tree[v] != -1 else -1)
        for v in sorted(tree)
    )


if __name__ == '__main__':
    initialize_split_graph_iroot(4, {4: [0, 1, 2], 5: [1, 3]}, root=4)
    
    print('==============================================')
    cnt = 0
    for _ in gen_split_iroot():
        print(str_hsr_tree())
        cnt += 1
    print('Total: ' + str(cnt))

    # initialize_split_graph(4, {4: [0, 1, 2], 5: [1, 3]})
    # print('==============================================')
    # cnt = 0
    # for _ in gen_split():
    #     print(str_hsr_tree())
    #     cnt += 1
    # print('Total: ' + str(cnt))

\end{lstlisting}

\end{document}